\documentclass{article}

\usepackage{multirow}
\usepackage{amssymb}
\usepackage{amsmath}
\usepackage{amsthm}
\usepackage{thm-restate}
\usepackage[dvipsnames]{xcolor}
\usepackage{comment}

\usepackage[pdftex, plainpages = false, pdfpagelabels, 
                 bookmarks=false,
                 bookmarksopen = true,
                 bookmarksnumbered = true,
                 breaklinks = true,
                 linktocpage,
                 pagebackref,
                 colorlinks = true,  
                 linkcolor = blue,
                 urlcolor  = blue,
                 citecolor = red,
                 anchorcolor = green,
                 hyperindex = true,
                 hypertexnames = false,
                 hyperfigures
                 ]{hyperref}
\usepackage{complexity} 
\usepackage{enumitem}
\setlist{nosep}
\usepackage{cleveref}
\usepackage{subcaption}
\usepackage{dirtytalk}

\usepackage{tikz}

\usetikzlibrary{arrows,shapes,positioning,shadows,trees}

\usepackage{vmargin}
\setmarginsrb{1in}{1in}{1in}{1in}{0pt}{0pt}{0pt}{6mm}

\usepackage{algorithmic}
\usepackage{algorithm}
\usepackage[utf8]{inputenc}
\usepackage[LGR,T1]{fontenc} 
\usepackage[T1]{fontenc}
\usepackage{amsmath}

\newcommand{\yes}{\textsc{Yes}}

\newtheorem{theorem}{Theorem}
\newtheorem{lemma}{Lemma}

\newtheorem{claim}{Claim}[section]
\newtheorem{observation}[lemma]{Observation}
\newtheorem{proposition}{Proposition}[section]
\newtheorem{reduction rule}{Reduction Rule}[section]
\newtheorem{marking-scheme}{Marking Scheme}[section]
\newtheorem{definition}{Definition}[section]

\newenvironment{claimproof}[1]{\par\noindent\textit{Proof of Claim.}\space#1}{\hfill $\diamond $\medskip}

\newcommand{\defparproblem}[4]{
    \vspace{3mm}
    \noindent\fbox{
        \begin{minipage}{.95\textwidth}
            \begin{tabular*}{\textwidth}{@{\extracolsep{\fill}}lr} \textsc{#1}  & {\bf{Parameter:}} #3 \\ \end{tabular*}
            {\bf{Input:}} #2  \\
            {\bf{Question:}} #4
        \end{minipage}
    }
    \vspace{2mm}
}

\usepackage{xspace}
\usepackage{multicol}
\newcommand{\RNum}[1]{\uppercase\expandafter{\romannumeral #1\relax}}
\usepackage{soul}
\usepackage{lineno}

\newcommand{\cmso}{$\textsc{CMSO}_2$\xspace}

\newcommand{\tail}{\mathrm{tl}}
\newcommand{\true}{\mathrm{true}}
\newcommand{\head}{\mathrm{hd}}
\newcommand{\type}{\mathbf{type}}
\newcommand{\Unb}{\textsf{Solve-Unbr-ALG}\xspace}

\newcommand{\Und}{\textsf{Understand-ALG}\xspace}
\newcommand{\UndU}{\textsf{Understand-Unbr-ALG}\xspace}

\DeclareMathOperator{\torso}{torso}
\DeclareMathOperator{\tw}{tw}
\newcommand{\deoct}{\textsc{DEOCT}\xspace}
\newcommand{\deoctfull}{\textsc{Directed Edge Odd Cycle Transversal}\xspace}
\newcommand{\dmc}{\textsc{DMC}\xspace}
\newcommand{\dmcfull}{\textsc{Directed Multicut}\xspace}
\newcommand{\demcu}{\textsc{DEMCU}\xspace}
\newcommand{\demcufull}{\textsc{Directed Edge Multiway Cut-Uncut}\xspace}
\newcommand{\dsc}{\textsc{DSC}\xspace}
\newcommand{\dscfull}{\textsc{Directed Steiner Cut}\xspace}
\newcommand{\minecmso}[1]{{\sc min-Edge-$\textsc{CMSO}_2$[#1]}}
\newcommand{\maxecmso}[1]{{\sc max-Edge-$\textsc{CMSO}_2$[#1]}}
\newcommand{\eqecmso}[1]{{\sc eq-Edge-$\textsc{CMSO}_2$[#1]}}
\newcommand{\xcmso}[1]{{\sc x-$\textsc{CMSO}_2$[#1]}}
\newcommand{\cmsop}[1]{{\sc $\textsc{CMSO}_2$[#1]}}

\usepackage{tikz}
\usetikzlibrary{arrows}
\usetikzlibrary{arrows.meta}
\usetikzlibrary{decorations.pathmorphing, decorations.pathreplacing, decorations.shapes}
\usetikzlibrary{decorations.markings}
\usetikzlibrary{calc}
\usetikzlibrary{shapes.geometric}

\tikzset{
  circ/.style = {circle,draw,fill,inner sep=1.1pt},
  circg/.style = {circle,draw=lightGray,fill=lightGray,inner sep=1.1pt},
  invisible/.style = {circle,draw=none,inner sep=0pt,font=\tiny},
  nonedge/.style={decorate,decoration={snake,amplitude=.3mm,segment length=1mm},draw}
}

\usepackage{authblk}

\title{Tractability of Hard Digraph Cut Problems on Eulerian Instances: Treewidth Reduction and Reducing CMSO to Unbreakable Eulerian Instances}
\date{}

\author[1]{Esther Galby%
}

\author[2]{Roohani Sharma%
\thanks{Supported by the Young Scientist Fellowship of the Institute for Basic Science (IBS-R029-Y8).}}

\author[1]{Bart van der Steenhoven%
}

\affil[1]{Chalmers University of Technology and University of Gothenburg, Sweden\\
\texttt{galby@chalmers.se, bartj@chalmers.se}}

\affil[2]{DIMAG, Institute for Basic Science, South Korea\\
\texttt{roohani@ibs.re.kr}}

\begin{document}

\maketitle

\begin{abstract}
Eulerian digraphs retain the directional structure of general digraphs while exhibiting a symmetry of cuts reminiscent of undirected graphs. We show that this symmetry extends beyond individual cuts to two fundamental techniques in parameterized algorithms for undirected cut problems: {\em treewidth reduction} and {\em recursive understanding and randomized contractions}.

Both the mentioned techniques have proved very useful for dealing with undirected graph cut problems as each of them reduces the problem at hand on general undirected instances to ones that have bounded treewidth or that are highly connected in a way, respectively.

In this work we prove an
edge-cut analogue of the treewidth-reduction theorem of Marx, O'Sullivan, and Razgon [ACM TALG 2013],
and an Eulerian analogue of the reduction of CMSO$_2$ model checking to unbreakable instances due to Lokshtanov, Ramanujan, Saurabh, and Zehavi [ICALP 2018]. Both these results now allow us to reduce several problems on general Eulerian digraphs to Eulerian digraphs whose underlying undirected graph has bounded treewidth, or to Eulerian instances that are ``highly connected'' in the directed (and undirected) sense.
Together, our results bring to Eulerian digraphs two fundamental techniques from the undirected parameterized cut toolkit for which no counterparts are known on arbitrary digraphs.

Beyond their structural significance, these results
have substantial algorithmic consequences: they yield
fixed-parameter algorithms for several directed cut
problems that are W[1]-hard on general digraphs.
For example, we show that {\sc Directed Multicut} and {\sc Directed Edge Odd Cycle Transversal} are
fixed-parameter tractable on Eulerian digraphs when
parameterized by the solution size and the number of
terminal pairs, and just the solution size respectively. This stands in sharp contrast to general
digraphs, where the first problem is W[1]-hard even for four
terminal pairs, and the second problem is W[1]-hard in general.
We also obtain fixed-parameter algorithms
for 
{\sc Directed Edge Multiway Cut-Uncut} and
{\sc Directed Steiner Cut}.

\end{abstract}

\section{Introduction}\label{sec:intro}

Cut problems are central to algorithmic graph theory.  Given a graph and a connectivity requirement, the goal is to delete a minimum number of vertices or edges to destroy prescribed paths or cycles. Classical examples include {\sc Minimum $s$-$t$ Cut}, {\sc Multiway Cut}, {\sc Multicut}, {\sc Feedback Vertex Set} and {\sc Odd Cycle Transversal}.
Over the past two decades, the study of NP-hard cut problems, in the realm of parameterized complexity, has produced a collection of powerful structural and algorithmic techniques---including important separators~\cite{DBLP:conf/wg/Marx11}, treewidth reduction~\cite{DBLP:journals/talg/MarxOR13}, shadow removal~\cite{DBLP:conf/stoc/MarxR11}, recursive understanding and randomized contractions~\cite{DBLP:conf/stoc/GroheKMW11,DBLP:journals/siamcomp/ChitnisCHPP16}, and flow augmentation~\cite{DBLP:journals/jacm/KimKPW25,DBLP:journals/talg/KimKPW24}---that now form much of the toolkit of parameterized graph cut algorithms. 
While these techniques have led to a remarkably ``complete'' understanding of arguably the fundamental NP-hard graph cut problems on undirected graphs, 
the directed setting remains significantly more challenging. Only some of these techniques extend to directed graphs, and several fundamental problems that are fixed-parameter tractable (FPT) on undirected graphs become W[1]-hard on directed graphs. 
Despite their hardness on general digraphs, the parameterized complexity of these problems remains poorly understood even on natural restricted classes of digraphs, like planar digraphs.

A particularly natural and fundamental class of directed graphs is that of \emph{Eulerian digraphs}. 
A digraph is Eulerian if it admits a closed directed trail traversing every arc exactly once. Rooted in the classical study of traversability of networks, this notion arises naturally in routing and network inspection problems,
where one seeks to traverse every connection without repetition.
Remarkably, this global traversal property admits a simple local characterization: a digraph whose underlying undirected graph is connected is Eulerian if and only if every vertex has equal indegree and outdegree.
This balance condition captures the principle of flow conservation underlying the theory of network flows and circulations. 
Despite this seemingly simple restriction,
Eulerian digraphs can exhibit arbitrarily complex underlying graph structure while retaining the asymmetry characteristic of directed graphs.
They thus provide a rich setting in which to investigate how a local balance condition can influence the global structure and computational complexity of directed graphs.

Eulerian digraphs have been the subject of repeated studies from a structural perspective \cite{dean1986smallest, van1987circuits, johnson2002eulerian, huang2013large, DBLP:conf/icalp/CavallaroKK26}.
However, from an algorithmic perspective the balance property seems to make it so that Eulerian digraphs 
form a natural intermediate class between undirected and directed graphs. 
Eulerian digraphs have repeatedly
emerged as the right setting in which algorithmic results non-trivially interpolate between the undirected and directed settings, see for example \cite{ frank1988connectivity, gutin2014parameterized, DBLP:conf/stoc/CavallaroKK24} or \cite{DBLP:books/sp/18/Wahlstrom18} for a survey. This raises a more fundamental question: 

{\em Can the structure that is provided by Eulerian digraphs restore the algorithmic machinery that is missing for directed cut problems?}

We answer this question affirmatively by showing that two fundamental techniques from the undirected parameterized cut toolkit can be recovered in the Eulerian setting.
The first, \emph{treewidth reduction} of Marx, O'Sullivan and Razgon~\cite{DBLP:journals/talg/MarxOR13}, says that the vertices of all minimal $s$-$t$ separators of size at most $k$ in an undirected graph can be captured in a graph of bounded treewidth, where the bound depends only on $k$. 
Second, the meta-theorem of Lokshtanov, Ramanujan, Saurabh and Zehavi~\cite{DBLP:conf/icalp/LokshtanovR0Z18} that captures \emph{recursive understanding},
reduces the study of arbitrary instances of \cmso definable problems, to ones that are ``highly connected'',
or \emph{unbreakable}.
In this paper, we establish both these (meta) theorems for Eulerian digraphs. 
We show an edge-cut analogue of
treewidth reduction that can be exploited on Eulerian digraphs,
and an Eulerian analogue of the reduction of \cmso model
checking to unbreakable instances.
We explain both techniques and state our structural results
below.
Our contribution of this work goes beyond establishing tractability for individual problems: these structural results
provide general algorithmic tools that yield FPT algorithms on Eulerian digraphs for several directed cut problems that are W[1]-hard on
general digraphs.

\subparagraph{Treewidth reduction.}
One of the earliest structural successes of parameterized complexity for cut problems was the study of the space of all small minimal $s$-$t$-separators. The understanding of this structure is important,
as in many NP-hard generalizations of {\sc Minimum $s$-$t$ Cut}, 
for example {\sc Multiway Cut},
the solution is a non-trivial union of some overlapping minimal $s$-$t$ cuts.
In a seminal work, Marx, O'Sullivan and Razgon~\cite{DBLP:journals/talg/MarxOR13} showed that all vertices participating in a minimal $s$-$t$
 separator of size at most $k$
 in an {\em undirected} graph can be captured in a graph of treewidth $2^{k^{O(1)}}$. Specifically, both the graph induced by these vertices and the graph obtained by bypassing all other vertices have treewidth bounded by a function of $k$ alone. This result enables cut problems whose optimal solutions are a (non-trivial) combination of minimal $s$-$t$ separators, for possibly different choices of $s$ and $t$, to be reduced to bounded-treewidth instances, which are then solvable using meta-theorems such as Courcelle's theorem~\cite{DBLP:journals/ita/Courcelle92} for a large class of problems.

No directed counterpart of this treewidth reduction for digraphs is known. 
Directed minimal $s$-$t$ separators lack the structure of undirected separators, especially since undirected treewidth only cares about small separations in the underlying undirected graph. And trying to bound the directed treewidth for the digraph cases, usually is not as useful as bounding undirected treewidth as far as algorithmic applications are concerned.
In fact, Marx, O'Sullivan and Razgon~\cite{DBLP:journals/talg/MarxOR13} explicitly ask if
{\em there is some way of extending these results (at least partially) to directed graphs?}
This question has remained open ever since. Our first result concerns precisely this; we prove an edge-cut version of treewidth reduction and show its utility for cut problems on Eulerian digraphs. We refer to \Cref{sec:twrproofs} for missing definitions.

\begin{restatable}{theorem}{thmtwreduction}\label{thm:twreduction}
    Let $G$ be a graph, $T \subseteq V(G)$ and $k$ an integer. Let $I$ be the set of all edges of $G$ that are part of a minimal $s-t$ cut of size at most $k$ for some $s,t \in T$. There is a $f(k, |T|) \cdot (|E(G)| + |V(G)|)$ time algorithm 
    that computes a set $W \subseteq V(G)$ such that $W$ contains $T$ and all endpoints of edges in $I$, and $\tw(\torso(G,W)) \leq h(k,|T|)$ for some functions $f(k, |T|)$ and $h(k, |T|)$ that both are  $|T|^2 2^{O(k^2)}$.
\end{restatable}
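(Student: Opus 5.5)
The plan is to reduce to the vertex version of the treewidth-reduction theorem of Marx, O'Sullivan and Razgon~\cite{DBLP:journals/talg/MarxOR13}, applied to a suitable auxiliary graph.

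\textbf{Step 1 (the line graph).} Here $G$ is undirected; the Eulerian application comes later via cut symmetry. Build the graph $L$ obtained by subdividing every edge $e=uv$ with a new vertex $x_e$. Add, for each terminal $t\in T$, a clique of $k+1$ copies of $t$, so that terminals are never cut. Then make every original vertex undeletable by replacing it with $k+1$ copies that are pairwise nonadjacent and adjacent to the same neighbours. In this graph, a minimal $s$-$t$ vertex separator of size at most $k$ uses only subdivision vertices, and corresponds exactly to a minimal $s$-$t$ edge cut of size at most $k$ in $G$.

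\textbf{Step 2 (invoke MOR).} Applying the theorem of~\cite{DBLP:journals/talg/MarxOR13} to each pair $s,t\in T$ yields a set $C_{s,t}$ containing all vertices on minimal $s$-$t$ separators of size at most $k$. It also gives $\tw(\torso(L,C_{s,t}\cup\{s,t\}))\le g(k)$. Blowing up vertices into $k+1$ copies increases degrees but not the separator structure that MOR uses.

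\textbf{Step 3 (combine the pairs).} Let $W'=\bigcup_{s,t}C_{s,t}\cup T$, and let $W$ be the set of original vertices that are endpoints of the $x_e$ lying in $W'$, together with $T$. The torso of a union of sets has treewidth bounded by the sum of the individual torso widths plus $|T|$, following the combination argument in MOR. This gives $|T|^2$ times the width. Finally, pass from the torso in $L$ to the torso in $G$: replacing each $x_e$ by its two endpoints at most doubles the bag sizes.

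\textbf{Obtaining the bounds and the main obstacle.} The stated bounds $|T|^2 2^{O(k^2)}$ are much better than MOR's $2^{k^{O(1)}}$. Black-box MOR is therefore too weak, and I would instead redo its proof directly for edge cuts, which is the main obstacle.

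Edge cuts have the advantage that a maximum flow decomposes into $\lambda\le k$ edge-disjoint paths. The minimal cuts of size at most $k$ can then be enumerated via the nested structure of the residual graph, iterating over $j=\lambda,\dots,k$. The plan is as follows:
\begin{itemize}
\item For each $s,t$ pair, compute the Picard--Queyranne DAG of strongly connected components of the residual graph.
\item Show that the components crossed by minimal cuts of size exactly $\lambda$ have bounded "width" along the flow paths, so that the torso is a bounded-pathwidth structure.
\item Recurse on $\lambda+1$ by contracting one flow path's edge; the branching over which path to use is bounded by $k$ at each of $k$ levels.
\end{itemize}
This is where the $2^{O(k^2)}$ width comes from. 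Linear time per pair follows from Ford--Fulkerson with $k$ augmentations. Verifying that the recursive width bound composes as claimed is the hardest part.
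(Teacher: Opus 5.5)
\textbf{Verdict: there is a genuine gap.} Your Steps 1--3 are not the proof you end up relying on. You set black-box MOR aside and commit to redoing its argument for edge cuts, which is also what the paper does. The reason you give for setting it aside is doubtful: the $2^{k^{O(1)}}$ in the abstract is a loose description, and MOR's own excess recursion (which the paper transplants) already yields $2^{O(e\ell)}$ per pair. So the subdivision-and-blow-up route may well suffice, provided that when translating torsos you also put $v$ into $W$ whenever a copy of $v$ lands in $W'$.

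The gap is in the route you do commit to. Its inductive step, ``recurse on $\lambda+1$ by contracting one flow path's edge, branching over which path'', does not do what is needed.
\begin{itemize}
\item Which \emph{edge} of the path to contract is unbounded, since minimal cuts of excess $e$ can meet the flow paths at unboundedly many positions.
\item Contracting an edge does not in general raise the minimum cut, so it does not lower the excess of the cuts still to be covered.
\item The cuts cannot be enumerated: already for $e=0$, $\lambda$ disjoint paths of length $n/\lambda$ give $(n/\lambda)^{\lambda}$ minimum cuts.
\item Most importantly, vertex sets returned by recursive calls on the whole graph can only be combined via \Cref{lem:twunion}, which \emph{sums} widths. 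Covering unboundedly many places this way bounds nothing.
\end{itemize}
The composition you defer as ``the hardest part'' is precisely the content of the proof.

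\textbf{The missing idea is to localize the recursion between consecutive minimum cuts.}
\begin{itemize}
\item \emph{Base structure.} The residual-graph SCC order gives a chain $X_1\subset\dots\subset X_q$ with $|\delta(X_i)|=\ell$ whose cuts contain every minimum cut (\Cref{lem:ebase}). Set $S_i=N(X_i)\cup\partial(X_i)$, so $|S_i|\le 2\ell$. Then $W_0=\bigcup_i S_i$ has a torso of width at most $4\ell$, using bags $S_i\cup S_{i+1}$.
\item \emph{Regions.} The regions $L_i=X_i\setminus(X_{i-1}\cup S_{i-1}\cup S_i)$ are separated from one another by $W_0$. Each has its boundary inside $S_{i-1}\cup S_i$, which has at most $4\ell$ vertices.
\item \emph{Local instances.} For each of the at most $3^{4\ell}$ disjoint pairs $(A,B)$ in that boundary, build $G_{i,A,B}$ from $G[L_i\cup A\cup B]$ by attaching new terminals $a,b$ to $A,B$ with $k+1$ subdivided edges.
\item \emph{Key claim.} If an edge of a minimal $s$-$t$ cut $C$ with $|C|\le k$ has an endpoint in $L_i$, then it lies in a minimal $a$-$b$ cut of excess at most $e-1$ in $G_{i,A,B}$. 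Here $(A,B)$ splits the boundary by reachability from $s$ in $G-C$.
\item \emph{Why the excess drops.} Replace the part of $C$ touching $L_i$ by a minimum $a$-$b$ cut. The result is an $s$-$t$ cut containing an edge incident to $L_i$. It cannot be minimum, because minimum cuts lie in $\bigcup_j\delta(X_j)$, whose edges have both endpoints in $W_0$.
\item \emph{Combining.} \Cref{lem:twunion} now sums only over the $3^{4\ell}$ pairs within one region, while \Cref{lem:twr2.9} takes a \emph{maximum} over regions. This gives $g(\ell,e)\le 4\ell+3^{4\ell}g(\cdot,e-1)+1=2^{O(e\ell)}$.
\item \emph{Finishing.} Disjointness of the regions keeps the running time linear. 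A final union over the $|T|^2$ terminal pairs gives the stated bounds.
\end{itemize}
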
 

\subparagraph{Reduction to unbreakable instances.}
Call a graph $(q,k)$-\emph{unbreakable} if one cannot separate two sets of $q$ vertices from each other by deleting at most $k$ vertices. Every small separation of an unbreakable graph is lopsided, shaving off only a bounded piece, so a solution of size at most $k$ cannot cut the graph in two, and its effect is confined to a bounded region that can be located by brute force or by colour coding (randomly coloring small parts that suggest some part of the solution). This suggests a win/win strategy, 
and it is the principle of \emph{recursive understanding}, which emerged in the work of Grohe, Kawarabayashi, Marx and Wollan~\cite{DBLP:conf/stoc/GroheKMW11} and of Kawarabayashi and Thorup~\cite{DBLP:conf/focs/KawarabayashiT11}. 
Chitnis, Cygan, Hajiaghayi, Pilipczuk and Pilipczuk~\cite{DBLP:journals/siamcomp/ChitnisCHPP16} complemented this with a colour-coding infrastructure, \emph{randomized contractions}, that performs the understanding step for a range of cut problems, and thereby obtained fixed-parameter algorithms for {\sc Unique Label Cover}, {\sc Steiner Cut} and {\sc Node Multiway Cut-Uncut}. Lokshtanov, Ramanujan, Saurabh and Zehavi~\cite{DBLP:conf/icalp/LokshtanovR0Z18} observed that a large part of this approach could be automated for all problems in {\sc CMSO}$_2$.

\begin{theorem}[\cite{DBLP:conf/icalp/LokshtanovR0Z18}]\label{thm:lrsz}
Let $\phi$ be a {\sc CMSO}$_2$ sentence. For every $k \in \mathbb{N}$ there exists $q \in \mathbb{N}$ such that, if $\phi$ can be tested on $(q,k)$-unbreakable graphs in time $O(n^d)$ for some $d > 4$, then $\phi$ can be tested on general graphs in time $O(n^d)$.
\end{theorem}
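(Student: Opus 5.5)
The plan follows the recursive-understanding template. We design a recursive procedure that handles a more general input: a graph $G$ with at most $2k$ designated boundary vertices, and we compute a \emph{representative} of its behaviour with respect to $\phi$. Concretely, by the finite-index property of \cmso (a Myhill--Nerode/Feferman--Vaught type statement for boundaried graphs), there is a finite set $\mathcal{R}$ of boundaried graphs, of size and order depending only on $\phi$ and $k$. It has the property that every boundaried graph with boundary at most $2k$ is $\phi$-equivalent to some member of $\mathcal{R}$, where equivalence means that gluing with any other boundaried graph preserves the truth of $\phi$. Let $q$ be larger than the maximum number of vertices of a member of $\mathcal{R}$, plus a slack depending on $k$. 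Then replacing any part with a large boundaried piece by its representative strictly shrinks the graph and preserves $\phi$.

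The algorithm proceeds in a win/win fashion. First, test whether $G$ is $(q,k)$-unbreakable. Using the known approximate separation-finding routines (e.g.\ via important separators or the randomized-contractions lemma of Chitnis et al.), in time $2^{O(q+k)}n^{O(1)}$ with small fixed exponent we either certify that $G$ is $(q',k)$-unbreakable for some $q'$ depending on $q,k$, or find a separation $(A,B)$ of order at most $k$ with both $|A\setminus B|,|B\setminus A|\ge q$. In the unbreakable case we must answer not just ``does $G\models\phi$'' but the full type of the boundaried graph. We obtain it by enumerating the finitely many representatives $R\in\mathcal{R}$, gluing each onto the boundary, and calling the assumed $O(n^d)$ unbreakable tester. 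The glued graph is still unbreakable with slightly worse parameters because $R$ is of bounded size, so we adjust $q$ accordingly. In the breakable case, we take the side containing at most half the boundary, add the separator to its boundary (so the boundary stays at most $2k$), and recurse on it. We then replace it by the returned representative and recurse on the now smaller graph.

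For the running time, let $T(n)$ denote the time on $n$-vertex inputs. The recurrence is $T(n)\le T(n_1)+T(n-n_1+c)+O(n^{4})$ with $q\le n_1\le n-q$, where $c$ is the representative size and $c<q$. This solves to $O(n^d)$ for $d>4$, since the convex function $n^d$ dominates the split and the overhead, once the separation-finding is done in $O(n^4)$; this is why the hypothesis requires $d>4$. To get $d>4$ rather than $\ge 4$ slack, we can charge the separation search to the size decrease.

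The main obstacle is the computation of representatives: showing that the type of a boundaried unbreakable graph can be determined using only the black-box tester for $\phi$ itself, rather than for arbitrary formulas. The plan is to fix, for each pair of classes, a distinguishing context from the finite set $\mathcal{R}$ (a finite ``test suite''). Determining the class of $G$ then reduces to finitely many queries ``$G\oplus R\models\phi$''. A subtlety is that $\phi$-equivalence on gluing with large graphs must be witnessed by small graphs; we would handle this by refining to the equivalence of $\phi$-types up to quantifier rank, where each class has a bounded-size member, and accepting a refined finite test suite.
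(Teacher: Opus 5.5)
Your proposal is correct and is essentially the argument of Lokshtanov et al. The paper only cites that result and does not reprove it, but it mirrors the argument step by step for the Eulerian case in Section~\ref{sec:reduce-to-unbreakable}: an Understand-type problem that returns canonical representatives, testing against all compatible representatives in the unbreakable case, splitting off the side carrying at most half the boundary, and a convexity-based induction for the $O(n^d)$ bound.

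Two small corrections:
\begin{itemize}
\item The ``subtlety'' you raise is not real. The canonical representatives themselves already form a complete test suite: any context $\eta$ is equivalent to its representative $\eta'$, so $\phi$ agrees on $\beta\oplus\eta$ and $\beta\oplus\eta'$, and no refinement to quantifier-rank types is needed.
\item The threshold $d>4$ comes from the $O(n^3\log n)$ running time of the separation routine of Chitnis et al. With the $O(n^4)$ overhead you wrote, the induction step would need $d\ge 5$.
\end{itemize}
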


Theorem~\ref{thm:lrsz} essentially says that if designing FPT algorithms for \cmso definable problems, it is enough to design such algorithms restricted to instances that are unbreakable.

Unfortunately, an analogue of Theorem~\ref{thm:lrsz} does not hold for directed graphs. 
Unbreakability is a statement about \emph{undirected} separations, and it is useful only because a small solution of an undirected cut problem \emph{is} a small separation, so that unbreakability constrains it. 
In digraphs a set of $\ell$ arcs may separate $s$ {\em from} $t$ (and not $t$ from $s$ at the same time) 
while the corresponding cut of the underlying undirected graph has $\ell + n$ edges, the $n$ arcs in the reverse direction being irrelevant for a solution cut. 
Since the arc sets one wishes to delete are not small cuts of the underlying undirected graph, the win/win step loses its base case before the recursion is ever reached.

For our second main result we prove a directed Eulerian analogue of the result by Lokshtanov, Ramanujan, Saurabh and Zehavi~\cite{DBLP:conf/icalp/LokshtanovR0Z18}. We refer to \Cref{sec:reduce-to-unbreakable} for missing definitions.

\begin{restatable}{theorem}{thmcmsofpt} 
\label{thm:cmsofpt}
    Let $\textsc{x} \in \{ \textsc{min-Edge, max-Edge, eq-Edge}  \}$, let $\tilde\psi$ be a 
    \cmso
    sentence and let $\tilde c: \mathbb{N} \to \mathbb{N}$. There exists a $\tilde s : \mathbb{N} \to \mathbb{N}$ such that if 
    \xcmso{$\tilde\psi$} 
    parameterized by $k$ is FPT on $(\tilde s(k),\tilde c(k))$-edge-unbreakable Eulerian structures, then \xcmso{$\tilde\psi$} parameterized by $k$ is (nonuniformly) FPT on general Eulerian structures.
\end{restatable}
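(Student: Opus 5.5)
The plan is to follow the recursive-understanding framework of Lokshtanov, Ramanujan, Saurabh and Zehavi (Theorem~\ref{thm:lrsz}), adapted to edge cuts in Eulerian structures. The key Eulerian fact is cut symmetry: for every $X \subseteq V(G)$ we have $|\delta^+(X)| = |\delta^-(X)|$. Hence a set of at most $k$ arcs separating some $A$ from $B$ in the directed sense yields an undirected edge separation of order at most $2k$.

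\textbf{Separations and boundaried structures.} Work with boundaried Eulerian structures: an Eulerian structure together with at most $\tilde c(k)$ boundary vertices, where the vertex set has been split by a small edge cut. Cutting along an edge separation $(X, V\setminus X)$ of order $\le 2k$ destroys the balance at the endpoints of the cut arcs. I will record these as boundary terminals carrying the imbalance, so that the pieces remain ``Eulerian up to the boundary''. Using Eulerianness, the in-arcs and out-arcs across the cut are equinumerous, so each piece can be re-closed. One option is to add a gadget, for instance pairing cut arcs through a new vertex. Another is to keep the imbalances as boundary labels.

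\textbf{Finite index for \cmso.} Next I establish a finite-index statement for \xcmso{$\tilde\psi$} on such boundaried Eulerian structures. The approach is the standard one: compose the \cmso type with the optimization requirement, i.e.\ tracking, for each partial solution restricted to the boundaried side, the optimum value capped at $k+1$. By the Feferman--Vaught/Courcelle type argument, there are finitely many equivalence classes for each boundary size. For each class, fix a representative of bounded size that is itself Eulerian after closing the boundary. This representative set is nonconstructive, which is where the nonuniformity enters.

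\textbf{The recursion.} Apply the win/win. If the input is $(\tilde s(k),\tilde c(k))$-edge-unbreakable, call the assumed FPT algorithm. Otherwise, find an edge separation of order $\le \tilde c(k)$ with both sides large; this uses the known edge-unbreakability testing/decomposition algorithms of Chitnis et al. Recurse on the smaller side, in the manner of the ``understanding'' step, to compute its class. Replace that side by the class representative, which strictly decreases the size because $\tilde s$ is chosen larger than all representatives. Then iterate. Using the lean/unbreakable tree decomposition of Cygan et al.\ applied to the underlying undirected multigraph would give the cleaner, LRSZ-style bound on the running time.

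\textbf{Expected main obstacle.} The main obstacle is the following. Recursion on a boundaried piece requires the hypothesis on unbreakable instances to apply to \emph{boundaried, possibly annotated} Eulerian structures with the cut arcs' imbalance encoded. I must show that the closing gadget preserves the Eulerian property. I must also show that the annotated problem is again expressible as \xcmso{$\psi'$} for a derived sentence $\psi'$, with boundary labels as extra unary predicates. That is why the theorem quantifies over structures and allows an arbitrary $\tilde c$: I will derive $\tilde s$ from $\tilde c$ and from the number of types of $\psi'$.
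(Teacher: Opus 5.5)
\textbf{The gap: understanding a boundaried unbreakable piece.} Your framework matches the paper's: recursive understanding on boundaried Eulerian pieces, a finite-index lemma, and substitution by small representatives. The step where the recursion bottoms out fails as you describe it. To understand a \emph{boundaried} unbreakable piece, you plan to run the hypothesised algorithm on the annotated piece itself, via a derived sentence $\psi'$ with boundary labels as unary predicates. The hypothesis gives you no algorithm for $\psi'$. It covers only the fixed sentence $\tilde\psi$ (for fixed $k$, equivalently $\psi_k=\exists S\,(|S|\le k\wedge\tilde\psi(S))$), and only on \emph{non-boundaried} unbreakable Eulerian structures. Tractability there says nothing about an arbitrary derived sentence. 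Closing the piece with one gadget vertex and calling the algorithm gives one bit, not the equivalence class.

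The missing idea is the gluing trick of Lokshtanov et al. Let $\alpha$ be an $(s-r,c)$-edge-unbreakable Eulerian arc-boundaried structure, where $r$ bounds the size of all representatives. Glue $\alpha$ with every representative $\gamma$ compatible with it. Each $\gamma$ is balanced and has the complementary head/tail type on every label, so $\alpha\oplus\gamma$ is an ordinary Eulerian structure. It is $(s,c)$-edge-unbreakable because $|V(D_\gamma)|\le r$, so the hypothesised algorithm for $\psi_k$ applies to it. Compute the same answer vector for each candidate representative $\beta$ with $\beta\equiv_c\alpha$, and return the one that agrees. This identifies the class, because two candidates that agree against all representative partners are $\equiv^E_\psi$-equivalent. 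At the top level, understand the input (empty boundary) and solve its representative, which is trivially unbreakable because it is small. No derived sentence is needed. Cut symmetry is also never used in proving the meta-theorem; it matters only in the applications. Here Eulerianness matters only so that every instance handed to the hypothesised algorithm is Eulerian.

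\textbf{Bookkeeping points.} A representative that is merely ``Eulerian after closing the boundary'' is not enough. For the substituted instance to stay Eulerian, the representative must be internally balanced and carry the same head/tail type (imbalance) on each boundary label as the piece it replaces. The paper gets this by defining the equivalence only on Eulerian arc-boundaried structures, tested against Eulerian partners, and choosing representatives inside each class. Your vertex-boundary variant can work if the imbalances are part of the labels.

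Your invariant ``at most $\tilde c(k)$ boundary elements'' is not preserved by splitting along a cut of size up to $\tilde c(k)$. Keep the bound at $2c$ and recurse on the side that holds at most half of the current boundary.

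The breakability test of Chitnis et al.\ only detects splits in which both sides contain a large connected component. After the paper's workaround you get only sides of size greater than $s/(2c)$. That is why $s=(2c+1)r$, not merely ``larger than every representative''.

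Finally, fixing $k$ and working with the decision sentence $\psi_k$ (with $\le$, $\ge$ or $=$) removes the need for your capped-optimum bookkeeping. This is also exactly where nonuniformity in $k$ enters.
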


Interestingly, even though (induced) subgraphs of an Eulerian digraph, may not be Eulerian, our meta-theorem for Eulerian digraphs reduce the Eulerian instances to {\em Eulerian unbreakable instances}. This allows one to exploit {\em both Eulerianess and unbreakability} simultaneously of the reduced instance.
Also observe that this has a considerably broader scope than any individual application: once an algorithm is available for arc-unbreakable Eulerian digraphs, it automatically lifts to general Eulerian digraphs.

\subparagraph{Why Eulerianness restores both techniques.}  The common obstruction behind treewidth reduction and reduction to unbreakability is that both rely on small undirected separations. 
In a general digraph, a directed cut need not correspond to a small cut in the underlying graph, so neither framework has any leverage. Eulerianness restores precisely this connection. Every directed cut of size $\ell$ is accompanied by an undirected cut of size exactly $2\ell$, allowing the undirected structure theory to become relevant once again. This observation underlies both of our structural results. To the best of our knowledge, treewidth reduction and recursive understanding (or, equivalently, reduction to unbreakable instances) are the only two major techniques in the parameterized cut toolbox that had resisted extension to directed graphs. Our main contribution is to show that both admit natural Eulerian analogues.

The above connection also explains why our results focus on arc-deletion problems. The balance condition relates the numbers of arcs crossing a partition, but provides no analogous guarantee for vertex separators.
Moreover, the standard transformations between arc- and vertex-deletion variants do not, in general, preserve Eulerianness, and vertex-deletion variants can remain as hard on Eulerian digraphs as on general digraphs as explained in~\cite{DBLP:books/sp/18/Wahlstrom18}.

\subsection{Applications} 
We illustrate the power of these structural theorems on several classical cut problems.
Our first application is \textsc{Directed Multicut}. 

\defparproblem{\textsc{Directed Multicut (DMC)}}{A directed graph $D$, a set $\mathcal{T} \subseteq V(G) \times V(G)$ of terminal pairs, and a positive integer $k$}{$k + |\mathcal{T}|$}{Does there exist $S \subseteq E(D)$ of size at most $k$, such that there is no $s\rightarrow t$ path in $D-S$ for any $(s,t) \in \mathcal{T}$?
}

On general digraphs the problem is W[1]-hard already for four terminal pairs~\cite{DBLP:journals/toct/PilipczukW18a}, whereas on DAGs it becomes fixed-parameter tractable when the number of terminal pairs is included in the parameter (and remains hard with just the solution size as the parameter)~\cite{DBLP:journals/siamdm/KratschPPW15}. 
We show that the same parameterization, as DAGs, leads to an FPT algorithm on Eulerian digraphs.

\begin{restatable}{theorem}{dmcthm}\label{thm:dmc}
    {\textsc{Directed Multicut}} on Eulerian digraphs can be solved in $f(k + |\mathcal{T}|) \cdot (|V(D)| +|E(D)|)^{O(1)}$ time, where $f$ is a computable function of $k$ and $\mathcal{T}$.
\end{restatable}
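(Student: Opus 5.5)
We plan to derive \Cref{thm:dmc} from the treewidth reduction of \Cref{thm:twreduction} followed by Courcelle's theorem. Let $(D,\mathcal{T},k)$ be an instance with $D$ Eulerian, and let $T$ be the set of all vertices occurring in some pair of $\mathcal{T}$, so $|T|\le 2|\mathcal{T}|$. The key structural claim is that some inclusion-minimal solution $S$ consists only of arcs whose underlying edges lie in a minimal undirected $s$-$t$ cut of size at most $2k$ in the underlying graph $G$, for some $s,t\in T$. Then \Cref{thm:twreduction} with parameter $2k$ and terminal set $T$, applied to $G$, yields $W\supseteq T$ containing all endpoints of such edges, with $\tw(\torso(G,W))\le h(2k,|T|)$.

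To prove the claim, take a minimal solution $S$ with $|S|\le k$ and an arc $e=(u,v)\in S$. By minimality there is a pair $(s,t)\in\mathcal{T}$ whose only $s\to t$ paths in $D-(S\setminus\{e\})$ use $e$. Let $X$ be the set of vertices reachable from $s$ in $D-S$. Every arc leaving $X$ belongs to $S$, and $e$ leaves $X$. Since $D$ is Eulerian, the number of arcs entering $X$ equals the number leaving it, so $\delta_G(X)$ has at most $2k$ edges and separates $s$ from $t$. This cut need not be minimal, so we refine it. Let $Y$ be the set of vertices that can reach $t$ in $D-S$, with $Y\cap X=\emptyset$, $u\in X$ and $v\in Y$. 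Let $C$ be the connected component of $G[X]$ containing $s$; it contains $u$, because $u$ is reachable from $s$ inside $X$. Let $C'$ be the component of $G-V(C)$ containing $v$; it contains $t$. Then $\delta(V(C'))\subseteq\delta(X)$, and $(u,v)$ lies in this cut. Both sides $V(C')$ and its complement are connected in $G$, so $\delta(V(C'))$ is a minimal $s$-$t$ cut of size at most $2k$ containing $uv$. I expect this step to be the main obstacle: checking that the component argument really gives a minimal cut through $e$, and that Eulerianness is only needed for the counting of arcs crossing $X$.

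With $W$ in hand, we build a digraph $D'$ on vertex set $W$ that has bounded treewidth and preserves the answer. Take $D[W]$, and for each component $Q$ of $D-W$ encode its reachability among its attachment vertices $N(Q)\subseteq W$. The cleanest way is to add an arc $(a,b)$ of multiplicity $k+1$ whenever $a$ reaches $b$ through $Q$. Such arcs are undeletable and they respect the torso, since $N(Q)$ is a clique in $\torso(G,W)$. Some solution avoids all arcs inside $D-W$, so $D'$ has a solution of size at most $k$ if and only if $D$ does. Eulerianness is no longer required for $D'$, because Courcelle's theorem does not use it. The underlying graph of $D'$ is $\torso(G,W)$ with parallel edges, and it has treewidth at most $h(2k,|T|)$. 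Multiplicities can be replaced by an undeletable label.

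Finally, \textsc{Directed Multicut} with fixed $k$ and fixed terminal pairs (marked by unary labels) is expressible in MSO$_2$. The formula asks for at most $k$ deletable arcs such that, for each pair, no set closed under out-arcs outside $S$ contains $s$ but not $t$. Courcelle's theorem then solves the instance in $g(k+|\mathcal{T}|)\cdot n$ time on $D'$. Computing $D'$ needs reachability queries within the components, which takes polynomial time, and together with \Cref{thm:twreduction} this gives the bound $f(k+|\mathcal{T}|)(|V(D)|+|E(D)|)^{O(1)}$.
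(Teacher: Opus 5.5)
Your proposal is correct and follows the paper's route: treewidth reduction on the underlying graph with budget $2k$, then the directed $k$-torso with $k+1$ parallel copies of each shortcut arc, then Courcelle's theorem. Your reachability-set and component argument is a more explicit justification of the paper's one-line claim that every arc of a minimal solution lies in a minimal undirected terminal cut of size at most $2k$. It is worth stating that $G[X]$ is connected, so $V(C)=X$, since that is what gives $\delta(V(C'))\subseteq\delta(X)$. One slip: the MSO condition should require that \emph{some} set containing $s$ but not $t$ is closed under out-arcs not in $S$, because ``no such set'' expresses that $t$ is reachable from $s$.
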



Our treewidth-reduction theorem also yields a fixed-parameter algorithm for \textsc{Directed Edge Odd Cycle Transversal}, which is W[1]-hard in general~\cite{DBLP:conf/soda/LokshtanovR0Z20}.

\defparproblem{\textsc{Directed Edge Odd Cycle Transversal (DEOCT)}}{A directed graph $D$ and a positive integer $k$}{$k$}{Does there exist $S \subseteq E(D)$ of size at most $k$, such that there is no cycle of odd length in $D-S$?
}

\begin{restatable}{theorem}{deoctthm}\label{thm:deoct}
    {\textsc{Directed Edge Odd Cycle Transversal}} can be solved in $2^{2^{O(k^2)}} \cdot n^{O(1)}$ time on Eulerian digraphs.
\end{restatable}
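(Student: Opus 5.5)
The plan is to reduce \deoct on Eulerian digraphs to a bounded-treewidth instance via \Cref{thm:twreduction}. The reduction is preceded by iterative compression, so that we always hold an approximate solution. Concretely, I would order the arcs $e_1,\dots,e_m$ and process the subgraphs $D_i$ spanned by the first $i$ arcs. Since subgraphs of Eulerian digraphs need not be Eulerian, this has to be handled with care. Two ways around it:
\begin{itemize}
\item compress along an Euler tour, adding arcs in tour order, so that each intermediate graph is an Eulerian graph plus a single directed path;
\item or, more simply, run the compression step on $D$ itself: given a solution $S'$ of size $k+1$, find one of size $k$.
\end{itemize}
The second version needs only an initial solution. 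That can be obtained by a standard approximation, or by compressing over a cleverer ordering.

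For the compression step, fix the solution $S'$ of size $k+1$ and let $T$ be the set of endpoints of arcs in $S'$, so $|T|\le 2k+2$. Since $D-S'$ has no odd cycle, it admits a labelling of its vertices by $\{0,1\}$, per weakly connected component, such that every arc goes between opposite labels. Note that a directed graph without odd directed cycles need not be bipartite, so I would instead use a characterisation via an auxiliary ``doubled'' graph. Take two copies $v^0,v^1$ of each vertex and an arc $u^a\to v^{1-a}$ for each arc $uv$. The digraph $D-S$ has no odd cycle iff $v^0$ cannot reach $v^1$ for any $v$ in the strongly connected sense. The doubled graph of an Eulerian digraph is again Eulerian, since degrees are preserved. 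Guess, in $2^{O(k)}$ ways, the intersection $S\cap S'$ and the parity/reachability pattern on $T$. Then any minimal new solution decomposes into minimal cuts of size at most $k$ between pairs of copies of terminals in the doubled graph. Using the Eulerian symmetry (a directed cut of size $\ell$ gives an undirected cut of size $2\ell$), each such directed cut is contained in an undirected minimal $s$-$t$ cut of size at most $2k$ in the underlying graph.

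Next, apply \Cref{thm:twreduction} to the underlying undirected graph of the doubled digraph, with terminal set $T^{0}\cup T^{1}$ and cut bound $2k$. This yields $W$ containing all arcs that can participate in a minimal solution, with $\tw(\torso(\cdot,W))\le |T|^2 2^{O(k^2)}=2^{O(k^2)}$. Then express ``there is an arc set $S$ of size at most $k$ inside $W$ such that no odd closed walk survives'' in \cmso over the torso. Torso edges must be annotated to record the directed reachability between the attachment vertices through the bypassed part, and that reachability is unaffected by $S$ because $S$ lies inside $W$. Courcelle's theorem, or rather a direct dynamic programme over a tree decomposition of width $w=2^{O(k^2)}$ tracking parities and reachability among bag vertices, runs in $2^{2^{O(w)}}$-type time. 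A careful DP with states of size $2^{O(w)}$ gives the claimed $2^{2^{O(k^2)}}n^{O(1)}$.

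The main obstacle is the structural claim that every arc of some minimum solution lies on a minimal cut of size at most $2k$ between terminal copies. The odd-cycle condition is not a finite collection of $s$-$t$ separation constraints on a fixed terminal set. Arguing this needs the compression solution $S'$ to pin the terminals, and then a minimality argument showing that each arc $e\in S$ is necessary for separating some $t^0$ from $t^1$, or $t^a$ from $t'^b$, with $t,t'\in T$. I would attempt this first by showing that every odd cycle in $D-S$ must use an arc of $S'$, and hence passes through $T$.
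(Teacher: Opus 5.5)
Your route matches the paper's in outline: a doubled Eulerian graph, a small odd-cycle hitting set as terminals, small minimal cuts, treewidth reduction, then a DP. But the proposal has a genuine gap: you never enforce that the deleted arc set in the doubled graph $D'$ is symmetric.

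\textbf{The gap.} An arc $(u,v)$ of $D$ corresponds to the pair $(u^0,v^1),(u^1,v^0)$. A transversal of size $k$ corresponds to a $\phi$-symmetric multicut of size $2k$ for the pairs $(t^0,t^1),(t^1,t^0)$. An unconstrained multicut of size $2k$ only certifies a transversal of size $2k$, so without symmetry you get merely a $2$-approximation. \Cref{thm:twreduction} bounds the treewidth of $\torso(G',W)$, where $G'$ is the underlying graph of $D'$. Your final step, however, is phrased in terms of $D$: ``an arc set of size at most $k$ inside $W$ such that no odd closed walk survives'', and a DP ``tracking parities''. Neither reading works.
\begin{itemize}
\item If you work on the torso of $D'$, the pairing constraint is non-local: the two copies of an arc may lie in far-apart bags. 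Encoding $\phi$ as a relation for Courcelle adds the edges $v^0v^1$ to the Gaifman graph, whose treewidth you have not bounded.
\item If you project to $D$, you need the quotient of the torso under $v^0\sim v^1$ to have bounded treewidth. This does not follow from \Cref{thm:twreduction}, and identifying pairs of vertices can raise treewidth arbitrarily. For example, an Euler tour of an $n\times n$ toroidal grid is a cycle of length $2n^2$, and identifying the two visits of each grid vertex gives back the grid.
\end{itemize}
The paper supplies exactly this missing ingredient in \Cref{lem:symtd}. Because $\phi$ is a fixed-point-free automorphism of $D^*$, a grid minor of size $p^2$ in the quotient lifts to a grid minor of size $p$ in $D^*$. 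The lift uses two disjoint copies of each branch-set tree and the ``unifier'' argument. By the excluded grid theorem the quotient therefore has treewidth polynomial in $h(4k,4k)$. Lifting its decomposition gives a tree decomposition of $D^*$ with $\phi$-closed bags, and only then can the DP delete both copies of an arc at once.

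\textbf{The step you flag as the main obstacle.} This is actually easy once you have any vertex set $T$ of size $O(k)$ hitting all odd cycles. The tails of $S'$ give one. So does the FPT $2$-approximation for the vertex version, which makes compression and guessing unnecessary; note also that reachability patterns on $T$ number $2^{O(k^2)}$, not $2^{O(k)}$. Given such a $T$:
\begin{itemize}
\item $D-S$ is odd-cycle-free if and only if $D'$ minus the lift of $S$ has no $t^0\to t^1$ or $t^1\to t^0$ path for any $t\in T$.
\item If $S$ is a minimal symmetric solution and $A\subseteq S$ is a minimal multicut, then $A\cup\phi(A)=S$.
\item Hence every arc of $S$ lies on a minimal terminal-pair cut of size at most $2k$ in $D'$. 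Since $D'$ is Eulerian, it lies on an undirected one of size at most $4k$. The bounds are not $k$ and $2k$ as you wrote, because $k$ arcs of $D$ become $2k$ arcs of $D'$.
\end{itemize}
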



Finally, our \cmso reduction to arc-unbreakable instances yields nonuniform fixed-parameter algorithms for \textsc{Directed Edge Multiway Cut-Uncut} and \dscfull. To contrast this, we prove that both problems are W[1]-hard on general digraphs. Both problems are known to be FPT on undirected graphs \cite{DBLP:journals/siamcomp/ChitnisCHPP16}.

\defparproblem{\textsc{Directed Edge Multiway Cut-Uncut (DEMCU)}}{A directed graph $D$,  a set $T \subseteq V(D)$ of terminals, an equivalence relation $R$ on $T$ and a positive integer $k$}{k}{Does there exist $S \subseteq E(D)$ of size at most $k$ such that for all $t, t' \in T$, the terminals $t$ and $t'$ are in the same strongly connected component of $D-S$ if and only if $(t,t') \in R$?}

\begin{restatable}{theorem}{thmdemwcu}\label{thm:demwcu}
    \textsc{Directed Edge Multiway Cut-Uncut} on Eulerian digraphs is (nonuniformly) FPT parameterized by $k$.
\end{restatable}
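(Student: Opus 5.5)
The approach is to apply \Cref{thm:cmsofpt}. This requires two things: an expression of \textsc{DEMCU} as an instance of \minecmso{$\tilde\psi$} (or \eqecmso{$\tilde\psi$}) over Eulerian structures, and an FPT algorithm on $(\tilde s(k),\tilde c(k))$-edge-unbreakable Eulerian structures for a suitable $\tilde c$.

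\textbf{Encoding.} Encode the instance as a structure consisting of the digraph $D$, a unary predicate for $T$, and the relation $R$. For $R$ we can use a binary relation, or a vertex colouring by equivalence classes, represented through an auxiliary gadget that preserves Eulerianness; the cleanest option is to attach $R$ as extra relations of the structure, as the notion of Eulerian structure presumably allows. The sentence $\tilde\psi(S)$ states that for all $t,t'\in T$, $t$ and $t'$ are mutually reachable in $D-S$ if and only if $R(t,t')$. Reachability avoiding $S$ is \cmso-expressible via the standard closed-set formulation: every vertex set containing $t$ and closed under out-arcs not in $S$ contains $t'$. We then ask for $|S|\le k$.

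\textbf{Unbreakable case.} Take $\tilde c(k)=2k$, or a bit more. The key Eulerian fact is this. If $S$ is a solution and $C$ is a strong component of $D-S$, then every arc leaving $C$ lies in $S$, so $\delta^+(C)\le k$. By balance $\delta^-(C)=\delta^+(C)$, so the undirected cut around $C$ has size at most $2k$.

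Let $C_1,\dots,C_r$ be the strong components of $D-S$ ordered topologically. The arcs leaving $C_1\cup\dots\cup C_i$ in $D$ all lie in $S$, so every such prefix union also has an undirected boundary of size at most $2k$. By $(s,2k)$-unbreakability, each prefix or its complement has at most $s$ vertices. Hence there is a unique component $C_j$ with $|C_j|>s$, or else everything is covered by a bounded region. In either case, all components except one "giant" component together contain at most $2s$ vertices (one bounded prefix and one bounded suffix, say). So the solution is determined by the small set $X=V\setminus C_j$ of at most $2s$ vertices and its internal structure. The terminals in $C_j$ form one $R$-class.

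The algorithm therefore proceeds as follows. Guess the $R$-class that lands in the giant component, or that none does, in which case $|V|$ is bounded and we brute-force. All other terminals, and the class count, must then be bounded; otherwise we reject. We then find $X$ using the randomized-contractions colour coding of Chitnis et al., adapted to arcs: colour arcs randomly so that the arcs of $S$ and the arcs inside $G[X]$ that are spanning-tree edges are separated with probability $2^{-\mathrm{poly}(s,k)}$, then derandomize with splitters. Within each candidate small connected chunk, brute-force the assignment of strong components and the arcs of $S$ inside it, and finally verify the solution. Alternatively, and more simply, since $\tilde\psi$ is \cmso, we can invoke the general fact used with \Cref{thm:lrsz}, namely that \cmso problems are FPT on unbreakable instances when solutions are confined to a bounded region: the region $X$ has bounded size and bounded boundary $\le 2k\cdot$(number of parts), so it can be enumerated with important-separator style branching in FPT time.

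\textbf{Main obstacle.} The hard step is the unbreakable case. One must show that the small side $X$ can be located in FPT time, and must handle terminals that lie outside the giant component in bounded-size components. I would first try the colour-coding approach: guess the at most $k$ arcs of $S$ together with at most $2s$ vertices via a $(n,2s+k)$-universal family, and check each connected piece. Nonuniformity arises only from $\tilde s$ being given by \Cref{thm:cmsofpt}.
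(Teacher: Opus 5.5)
Your reduction to \Cref{thm:cmsofpt} and your structural analysis of the unbreakable case are sound. For the topological order $C_1,\dots,C_r$ of $D-S$, the arcs \emph{entering} a prefix $P_i=C_1\cup\dots\cup C_i$ are backward and hence lie in $S$; it is not true that all arcs leaving a single component lie in $S$. Balance then gives $|\delta(P_i)|\le 2k$, and taking the first $j$ with $|P_j|>s$ yields $|V\setminus C_j|\le 2s$. This even avoids the cyclic-shift lemma the paper uses to put the giant component first.

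The genuine gap is the algorithm that should exploit this, which you flag as the obstacle but do not supply. Colouring only the arcs of $S$ red and spanning trees of $X$ black does not separate $X$ from $C_j$. Forward arcs between distinct strong components (e.g.\ from $X$ into $C_j$) are not in $S$; they may be black and then glue small components onto the giant one. Even when the small components do appear as black chunks, nothing distinguishes them from the possibly $\Theta(n)$ small black chunks inside $C_j$. So ``brute-force within each chunk'' cannot be combined in FPT time.

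The paper's good colouring has three requirements. First, \emph{all} arcs between distinct strong components are red; there are boundedly many, since each of the boundedly many prefix cuts has at most $2k$ edges. Second, spanning trees of the small components are black. Third, every vertex of the giant component incident to an arc leaving it lies in a black tree with more vertices than the whole small side. After contracting black components, the contracted vertices that are too large for the small side lie in the giant component and separate it from all small components. Hence each component of $D^*-B$ lies entirely on one side. Those carrying a terminal of a non-giant class are small, terminal-free ones can be assigned to the giant side, and the small side is settled by trying its boundedly many orderings.

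Your fallback is not available either. \Cref{thm:lrsz} only reduces to unbreakable instances and does not solve them. There is also no general result that \cmso problems are FPT on unbreakable instances when the solution is confined to a bounded region, and the important-separator enumeration is unsubstantiated.

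The encoding is also left open in a way that matters. Structures here are a digraph plus vertices, arcs and vertex/arc sets, so there are no binary relations. A colouring by $R$-classes would need unboundedly many sets: already for $k=1$, a directed cycle whose vertices are terminals in pairwise distinct classes is a \yes-instance. The paper adds $R$ as arcs, $D'=(V(D),E(D)\cup R)$, which preserves balance because $R$ is symmetric, and excludes these arcs in the reachability subformula. The price is that unbreakability is then a property of $D'$, not $D$. Your prefix-cut bound transfers only because, in any valid solution, each $R$-arc joins two terminals of the same strong component and so never crosses a prefix cut; this must be argued.

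Finally, ``no class lands in the giant component'' does not make $|V|$ bounded, because the giant component may be terminal-free. An example is a large terminal-free strongly connected part next to a $2$-cycle on two terminals of different classes. This must be handled as one more guess, not as a brute-force case.
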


\defparproblem{\textsc{Directed Steiner Cut (DSC)}}{A directed graph $D$, a set $T \subseteq V(G)$ of terminal pairs, and positive integers $k$ and $p$}{$k+p$}{Does there exist $S \subseteq E(D)$ of size at most $k$, such that at least $p$ strongly connected components of $D-X$ contain at least one terminal vertex?
}

\begin{restatable}{theorem}{thmdsc}\label{thm:dsc}
    \dscfull on Eulerian digraphs is (nonuniformly) FPT parameterized by $k+p$. 
\end{restatable}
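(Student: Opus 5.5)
The plan is to apply Theorem~\ref{thm:cmsofpt} to a \cmso encoding of \dscfull, and then to solve the problem directly on edge-unbreakable Eulerian instances.

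\textbf{Encoding.} I would take $\textsc{x} = \textsc{min-Edge}$ and write a sentence $\tilde\psi(S)$ with the terminal set $T$ as a unary predicate of the structure. The integer $p$ is not part of a fixed sentence, so I would handle it in one of two ways. The first is to pass to a structure-dependent formulation in which $p$ is encoded by $p$ distinguished labelled markers. The second, simpler, uses the fact that the parameter is $k+p$ and that deleting $k$ arcs creates at most $k+1$ strongly connected components containing terminals (each deleted arc can split one component into at most two; in Eulerian digraphs the initial graph is strongly connected per weak component). If $p > k + c$, where $c$ is the number of weak components containing terminals, the answer is trivially no. Otherwise $p \le k+c$, and I would reduce to the components containing terminals, so that $p$ is bounded in terms of $k$.

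For each fixed $p$, the property ``$D - S$ has at least $p$ strongly connected components containing a terminal'' is expressible in \cmso: there exist terminals $t_1,\dots,t_p$ that are pairwise not mutually reachable in $D-S$, and reachability is MSO-definable. Deleting arcs keeps the underlying structure Eulerian, so we would apply Theorem~\ref{thm:cmsofpt} with $\tilde c(k) = k$ to each sentence $\tilde\psi_p$ for $p \le 2k+1$. The general case reduces to this by handling weak components separately, combining them by a small knapsack over how many deletions each component receives.

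\textbf{Unbreakable case.} This is the main step. Let $D$ be $(s, k)$-edge-unbreakable and Eulerian, and let $S$ be a solution with $|S|\le k$. Each strongly connected component $C$ of $D-S$ not containing everything satisfies: the arcs leaving $C$ in $D$ lie in $S$. By the Eulerian property, the arcs entering $C$ number the same, so the undirected cut $\delta(C)$ has at most $2k$ edges. Hence every component except one giant component has at most $s$ vertices, or the unbreakability is violated. Taking $\tilde c(k)=2k$ makes this step work.

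Thus a solution consists of a giant component together with at most $k$ small components, each of size at most $s(2k)$, and these small components cover all the deleted arcs' endpoints. To find them I would use the randomized-contraction/colour-coding style of enumeration in the form of a derandomized $(n, s k, 2k)$-splitter or random separation. Alternatively, since every small component $C$ has $d^+(C)\le k$ and is connected, I would enumerate connected sets of size at most $s$ with small boundary from each vertex; there are $f(s,k)\cdot n$ of these, by the standard bound on connected sets with bounded boundary.

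We then need to choose up to $p-1$ pairwise disjoint such sets, each containing a terminal and each inducing a strongly connected subgraph, with total out-boundary at most $k$ (the sets are disjoint, so $S$ is the union of their out-arcs). The remainder must still contain a terminal, or we count differently. This selection is a bounded search problem that can be solved via colour coding over the at most $p s$ involved vertices. The obstacle I expect is precisely this enumeration and selection in FPT time, together with the edge case where no giant component exists, which forces $n \le (k+1)s$ and is solved by brute force.
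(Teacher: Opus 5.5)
Your CMSO$_2$ encoding is fine: ``there are $p$ terminals, pairwise not mutually reachable in $D-S$'' is a clean formulation. The unbreakable case, however, rests on a false structural claim, and the algorithm built on it is incorrect.

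\textbf{The false claim.} You assert that every non-giant strongly connected component $C$ of $D-S$ has all arcs of $\delta^+(C)$ in $S$. This holds only for sink components. If $C_0,\dots,C_r$ is a topological order of the strong components of $D-S$, then for minimal $S$ the set $S$ is exactly the set of \emph{backward} arcs, and forward arcs leaving an intermediate $C_i$ survive. Two of your conclusions therefore fail: ``at most $k$ small components'', and ``$S$ is the union of the out-arcs of the chosen sets, so their total out-boundary is at most $k$''.

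\textbf{Counterexample.} Take a large complete bidirected digraph $H$, a vertex $h\in V(H)$, and new vertices $v_1,v_2$ with arcs $(h,v_1),(v_1,v_2),(v_2,h)$. This digraph is Eulerian and $(s,2)$-edge-unbreakable for every $s\ge 2$. Set $T=\{h,v_1,v_2\}$, $k=1$, $p=3$. Deleting $(v_2,h)$ leaves three terminal components, so this is a yes-instance. But the only candidate sets are $\{v_1\}$ and $\{v_2\}$, whose out-boundaries have union of size $2>k$, so your selection step answers no. Lengthening the path to $v_1,\dots,v_m$ with $m\le s$ shows that a single deletion can create $m$ small components.

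\textbf{The bound on $p$.} The same misconception underlies your bound $p\le k+c$. Deleting one arc of a directed cycle, which is Eulerian, produces $n$ strong components, as the paper itself remarks. That bound is not needed, since $p$ is part of the parameter: you can use one sentence $\psi_p$ per value of $p$ together with a $(k,p)$-version of \Cref{thm:cmsofpt}. Your unbreakable-case algorithm, however, does need repair.

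\textbf{The missing idea: use the ordering.}
\begin{itemize}
\item Because $D$ is Eulerian, the number of backward arcs is invariant under cyclic shifts of the order. The shifted backward-arc set is another solution of the same size with the same component partition. So you may put the largest component $C_0$ first.
\item Every prefix $C_0\cup\dots\cup C_i$ then has at most $k$ entering arcs, hence an undirected cut of at most $2k$ edges. Unbreakability gives $|C_1\cup\dots\cup C_r|\le s$, or else $n$ is bounded.
\item Your bound $|\delta(C)|\le 2k$ for a single component is still true, but only via this prefix argument. The forward arcs entering $C_i$ are at most as many as the backward arcs entering the preceding prefix, and those are disjoint from the backward arcs entering $C_i$.
\item The cost of a solution is the number of arcs from the small components into $C_0$, plus the backward arcs among the small components. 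It must be minimized over the orderings of the small components, not read off from out-boundaries.
\end{itemize}
This is what the paper does. After colour coding and contracting black arcs, each weak component of $D^*-B$ lies either inside $C_0$ or consists only of small components. For each such component the paper tries all boundedly many orderings to obtain its cost and its number of terminal components, and then combines the components with a knapsack.
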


\begin{theorem}
    \demcufull parameterized by $k$ and \dscfull parameterized by $k+p$ are W[1]-hard on general digraphs.
\end{theorem}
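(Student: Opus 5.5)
We prove both W[1]-hardness results by parameterized reductions from \textsc{Directed Multicut} (edge version) with a constant number of terminal pairs, which is W[1]-hard parameterized by $k$ already for four pairs~\cite{DBLP:journals/toct/PilipczukW18a}. An alternative source is \textsc{Directed Edge Multiway Cut} with a few terminals, or its hard skew variant; I commit to \dmc{} with four pairs $(s_1,t_1),\dots,(s_4,t_4)$. Throughout, ``undeletable'' arcs are simulated by $k+1$ parallel copies, or by $k+1$ internally disjoint paths of length two if the model forbids multiarcs.

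\textbf{DEMCU.} Given a \dmc{} instance $(D,\{(s_i,t_i)\}_{i\le 4},k)$, add undeletable arcs $t_i \to s_i$ for every $i$. Then $s_i$ and $t_i$ lie in a common strongly connected component of $D'-S$ if and only if $D-S$ still has an $s_i\to t_i$ path. Indeed, the only possible help comes from the added back arcs, and any $s_i\to t_i$ path in $D'-S$ using other added arcs $t_j\to s_j$ must be examined. This is the delicate point: a cut for all pairs in $D$ must also cut $s_i\to t_i$ paths that pass through back arcs. Such a path decomposes into segments, each an $s_a\to t_b$ path in $D-S$ followed by a back arc, and these segments need not be cut.

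To avoid this, I would instead avoid feedback arcs and encode ``cut'' directly. Let the terminal set be $T=\{s_i,t_i\}$ with $R$ the identity relation, so all terminals must end up in different SCCs. Then add an undeletable arc $t_i\to s_i$ only for each pair, and accept that the reduction is from a suitably chosen hard variant. Concretely, I would reduce from the W[1]-hard \textsc{Directed Multicut} instances of Pilipczuk--Wahlström, whose construction is acyclic (a DAG with four pairs). In that setting, with the terminals ordered appropriately, back arcs $t_i\to s_i$ create no $s_a\to t_b$ paths with $a\neq b$ beyond those already forced to be cut. Alternatively, with the arcs $t_i \to s_i$ and $R$ the identity, we can simply require every distinct terminal pair to be separated. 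If the source instance has all cross paths $s_a\to t_b$ already cut by any multicut, for instance by enforcing this with the gadget, equivalence holds.

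The main obstacle is exactly this cross-pair interaction. I would handle it by first checking that in the Pilipczuk--Wahlström DAG construction the terminals $s_1,\dots,s_4$ are sources and $t_1,\dots,t_4$ are sinks. If needed, I would then reduce instead from \textsc{Directed Multicut} on DAGs parameterized by $k$ alone (W[1]-hard, \cite{DBLP:journals/siamdm/KratschPPW15}) after verifying the property there. The backward direction is immediate: terminals in different SCCs means no $s_i\to t_i$ path, since $t_i\to s_i$ is present.

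\textbf{DSC.} Reduce from the DEMCU instance just built, in which every terminal must become its own SCC. Setting $p=|T|$ with $T$ the terminals, ``at least $p$ SCCs containing terminals'' is equivalent to all terminals being pairwise separated. This holds because a multiway-cut-type condition is exactly what the identity relation required. The parameter $k+p=k+8$ is bounded, so W[1]-hardness transfers. The same identity-relation DEMCU instance thus yields both hardness statements, and all constructions are polynomial.
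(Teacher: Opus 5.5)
\paragraph{Review.} Two parts of your proposal are fine. The backward direction holds: a DEMCU solution of $D'$ is a multicut of $D$. The last step also holds: identity-relation DEMCU on $T$ is the same problem as DSC with $p=|T|$. The gap is the forward direction of the DEMCU reduction. You flag it yourself but never close it, and it fails in general.

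With $R$ the identity, every two terminals must end up in different strongly connected components, whereas a multicut only destroys $s_i\to t_i$ paths. Surviving cross paths then compose with the undeletable back arcs. For instance, let $D$ consist of the paths $s_1\to x\to t_2$ and $s_2\to y\to t_1$. Then $\emptyset$ is a multicut, yet $s_1\to x\to t_2\to s_2\to y\to t_1\to s_1$ is a cycle of $D'$. In this example the $s_i$ are sources and the $t_i$ sinks, so your proposed source/sink check gains nothing. When $D$ has cycles there is a second problem: DEMCU has no ``don't care'' pairs, so terminals such as $s_a,s_b$ lying on a common cycle that no multicut needs to break are equally fatal.

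The remedy you sketch rests on a false premise. The four-pair instances of Pilipczuk and Wahlstr\"om cannot be acyclic unless FPT $=$ W[1], because \textsc{Directed Multicut} on DAGs is FPT parameterized by $k$ plus the number of pairs, as the introduction recalls. Falling back to DAG multicut parameterized by $k$ alone makes $|T|$ unbounded, so $p=|T|$ is no longer a valid parameter for DSC, and the cross-path problem remains anyway. Neither hardness claim is therefore established.

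\paragraph{What is missing.} You need a source instance in which the strongly connected structure on the terminals is fully controlled. The paper gets this by reducing directly from \textsc{Clique} rather than from \textsc{Directed Multicut}.

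For DEMCU it uses just two classes. An undeletable arc $b\to a$, together with $2k$ paths $P_i,Q_i$ from $a$ to $b$, forces exactly one cut arc per path under budget $2k$. Synchronisation terminals $a^i_j$ (class of $a$) and $b^i_j$ (class of $b$) force the cuts on $P_i$ and $Q_i$ to select the same vertex. For each non-edge $v_jv_{j'}$ and $i\neq i'$, a terminal $e^{i,i'}_{j,j'}$ in the class of $a$ stays reachable from $a$ unless exactly this pair of selections is made.

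For DSC, it takes deletable arcs $(r,a_v)$ and undeletable arcs $a_v\to r$, $a_u,a_v\to t_{uv}$ and $t_{uv}\to r$. Then $t_{uv}$ leaves the component of $r$ if and only if both $(r,a_u)$ and $(r,a_v)$ are cut, and $p=\binom{k}{2}+1$ forces a $k$-clique. There $p$ is bounded while $|T|$ is not, which is precisely the freedom your $p=|T|$ encoding gives up.
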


The applications presented above are unlikely to exhaust the reach of the framework. In particular, we expect the reduction to arc-unbreakable instances to provide a useful starting point for other cut problems on Eulerian digraphs.
Two important remaining challenges with Eulerian digraphs are to determine the parameterized complexity of \dmcfull and \textsc{Symmetric Directed Multicut} (see \cite{DBLP:conf/iwpec/EibenRW22}) when parameterized only by the solution size.

\section{Treewidth reduction for Eulerian Digraphs}\label{sec:twr}

The properties of Eulerian digraphs allow us to relate minimal directed \emph{arc-cuts} to minimal undirected \emph{edge-cuts}, not vertex-separators.
Thus, to be able to apply the treewidth reduction technique to cut problems in Eulerian digraphs, we need to modify the original theorem to preserve undirected minimal terminal \emph{cuts} of size at most $k$, as seen in \Cref{thm:twreduction}.

\thmtwreduction*
 
More specifically, our treewidth-reduction proof follows the framework of Marx, O'Sullivan and Razgon~\cite{DBLP:journals/talg/MarxOR13}, adapted to edge cuts. 
The proof works by finding, for each pair $(s,t)$ of terminals in $T$, a vertex set $W_{s,t}$ containing the endpoints of all minimal $s-t$ cuts of size at most $k$, yet of which the torso has bounded treewidth. These sets can be combined to form $W$, which by properties of the torso operation also has bounded treewidth. The set $W_{s,t}$ is found by induction on the \emph{excess} of the cuts that we are looking for. The excess of a cut is equal to the amount it is larger than a minimum cut. Intuitively, for excess 0 (i.e. minimum cuts) one can use flow techniques to find a vertex set to cover the cuts. For excess $i$, one can observe that these cuts must contain edges inside regions sitting between minimum cuts, and we can construct auxiliary graphs where such edges now appear in cuts of excess $i-1$, hence allowing the inductive argument to take effect. The formal proof can be found in \Cref{sec:twrproofs}.

We continue to demonstrate how this tool can be used to prove powerful algorithmic results for cut problems on Eulerian digraphs. As a warming-up, we show that a simple application leads to an FPT algorithm for \dmcfull parameterized by $k$ and the number of terminals. Next, we show that a more technically involved application leads to an FPT algorithm for \deoctfull parameterized by only $k$. Both of these results are in stark contrast to the complexity landscape for general directed graphs, where both problems are W[1]-hard for the same parameterizations~\cite{DBLP:journals/toct/PilipczukW18a,DBLP:conf/soda/LokshtanovR0Z20,DBLP:conf/innovations/LochetLM0SZ20}. 

\subsection{Directed Multicut}
Before we begin, we need to introduce one more definition. The treewidth reduction technique relies on a torso operation to shrink the graph to a specified region containing all minimal terminal cuts of a certain size, while simultaneously ensuring that connectivity outside of this region is preserved. To replicate this for out setting, we introduce the notion of a \emph{directed torso}. 
\begin{definition}
 Let $D$ be a directed graph, $k$ a natural number and $U \subseteq V(D)$, then the \emph{(directed) $k$-torso} of $U$ in $D$, denoted by $\torso(D,U, k)$ is the graph $D[U]$ where for each $a,b \in U$ we additionally add $k+1$ arcs $(a,b)$ if $D$ has an $a$ to $b$ path internally disjoint from $D[U]$.
\end{definition}
In our applications we want to argue that a solution for a cut problem in the directed torso can be transferred back to a solution of the original digraph. Adding $k+1$ copies of all arcs introduced by the torso operation ensures that a minimal solution of size at most $k$ only contains arcs in the induced digraph $D[U]$, which simplifies the required argument. We are now ready to prove the first main algorithmic result.

\dmcthm*

\begin{proof}
    Let $C \subseteq E(D)$ be an optimal solution to the \textsc{Directed Multicut} instance $(D,\mathcal{T},k)$. Let $T$ be set of all terminals in $\mathcal{T}$. 
    For every terminal pair $(s,t) \in \mathcal{T}$, the solution $C$ must contain a minimal $s\rightarrow t$ cut of $D$. 
    As $S$ is optimal, every arc $e \in C$ is part of a minimal $s \rightarrow t$ cut for some terminal pair $(s,t) \in \mathcal{T}$. Furthermore, there is such a minimal terminal-pair cut that contains $e$ and is a subset of $C$. 
    Additionally, since $D$ is an Eulerian digraph, each minimal $s \rightarrow t$ cut of size $\ell$ corresponds to a minimal $s - t$ cut of size $2\ell$ in the underlying undirected graph $G$ of $D$. 
    We apply \Cref{thm:twreduction} to $G$ with cut size $2k$ and terminal set $T$ to obtain a set $W \subseteq V(G)$ as described in the theorem. We prove that is suffices to solve the \textsc{Directed Multicut} problem on the directed $k$-torso of $W$ in $D$. 
    \begin{claim}
        $(D,\mathcal{T},k)$ is a YES-instance for \textsc{Directed Multicut} if and only if $(\torso (D,W,k),\allowbreak \mathcal{T}, k)$ is a YES-instance. 
    \end{claim}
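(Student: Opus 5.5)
We prove the two directions separately, writing $D' = \torso(D,W,k)$. Two facts about the torso are used throughout. First, every arc of $D'$ is either an arc of $D[W]$ or one of $k+1$ parallel copies of an arc $(a,b)$ that represents an $a\to b$ path of $D$ whose internal vertices avoid $W$. Second, any $u\to v$ path in $D$ with $u,v\in W$ splits at its visits to $W$ into arcs of $D[W]$ and maximal subpaths internally outside $W$, and each such subpath is represented by a torso arc. Since $T\subseteq W$, every terminal pair of $\mathcal{T}$ lives in both graphs.

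\emph{($\Rightarrow$)} Let $C$ be an optimal solution for $(D,\mathcal{T},k)$. By the discussion preceding the claim, every arc $e=(x,y)\in C$ lies in a minimal $s\to t$ cut $C_e\subseteq C$ for some $(s,t)\in\mathcal{T}$, and $|C_e|\le k$. Because $D$ is Eulerian, the arc set $C_e$ corresponds to a minimal $s-t$ cut of size at most $2k$ in $G$ that contains the undirected edge $xy$. Hence $xy\in I$ for the cut bound $2k$, so \Cref{thm:twreduction} gives $x,y\in W$ and therefore $C\subseteq E(D[W])\subseteq E(D')$.

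It remains to show that $C$ separates every pair in $D'$. Suppose $D'-C$ contains an $s\to t$ path $P'$ for some $(s,t)\in\mathcal{T}$. We replace each torso arc $(a,b)$ on $P'$ by its witnessing path in $D$, which is internally outside $W$. That path avoids $C$, because every arc of $C$ has both endpoints in $W$. The result is an $s\to t$ walk in $D-C$, which contradicts the fact that $C$ is a solution.

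\emph{($\Leftarrow$)} Let $C'$ be a solution for $(D',\mathcal{T},k)$; we may take it minimal. Since $|C'|\le k$ and every torso arc has $k+1$ parallel copies, $C'$ cannot delete all copies of any torso arc. A minimal solution therefore contains no torso arc at all: removing one copy while another copy survives does not break any path. Thus $C'\subseteq E(D[W])\subseteq E(D)$.

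Now suppose $D-C'$ contains an $s\to t$ path $P$. We decompose $P$ at its visits to $W$, using $s,t\in W$. Each arc of $P$ with both endpoints in $W$ lies in $D[W]$ and is not in $C'$. Each maximal subpath of $P$ internally outside $W$ is represented in $D'$ by $k+1$ parallel arcs, none of which lie in $C'$. Concatenating these pieces gives an $s\to t$ walk in $D'-C'$, a contradiction.

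The main obstacle is the step in ($\Rightarrow$) that puts every solution arc inside $W$. It depends on two things. One is optimality (or inclusion-minimality), which guarantees that each arc lies in some minimal terminal-pair cut contained in $C$. The other is the correspondence, valid because $D$ is Eulerian, between a minimal directed $s\to t$ cut of size $\ell$ and a minimal undirected $s-t$ cut of size $2\ell$ containing the same edges. For the second point, take the set $X$ of vertices reachable from $s$ in $D-C_e$. By minimality, the arcs leaving $X$ are exactly $C_e$. The Eulerian balance then gives exactly $\ell$ arcs entering $X$, so the undirected cut $\delta_G(X)$ has size $2\ell$. We check that this cut is minimal in $G$ by arguing that both $G[X]$ and $G-X$ are connected, on the appropriate components: $X$ is connected by reachability from $s$, and $V\setminus X$ restricted to the component containing $t$ handles the other side.
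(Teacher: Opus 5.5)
Your proof is correct and matches the paper's in both directions. A minimal solution of $D$ lies in $D[W]$ by the Eulerian cut correspondence, and it still separates all pairs in $\torso(D,W,k)$ once torso arcs are expanded into paths internally outside $W$. Conversely, the $k+1$ parallel copies keep torso arcs out of a minimal torso solution, and any excursion of a $D$-path outside $W$ can then be shortcut by a torso arc.

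The one loose spot is your extra justification of that correspondence, which the paper merely asserts. Minimality of $\delta_G(X)$ also needs every arc \emph{entering} $X$ to start in the component of $G-X$ containing $t$. This holds because each head of $C_e=\delta^+(X)$ reaches $t$ in $D-C_e$ without re-entering $X$, so any other component of $G-X$ receives no arc from $X$; by balance it therefore sends no arc into $X$ either.
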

    \begin{claimproof}
        Let $C$ be a minimum arc set of size at most $k$ in $D$ such that $D-C$ has no $s\rightarrow t$ path for any $(s,t) \in \mathcal{T}$. By the argument preceding the claim, all arcs of $C$ are part of some minimal terminal-pair cut of size at most $2k$ in the undirected graph $G$. Hence all arcs of $C$ are in $D[W]$, and thus present in $\torso(D,W,k)$. We argue that $C$ also cuts all terminal pairs in the torso graph. 
        
        For sake of a contradiction, assume that there is an $s \rightarrow t$ path $P$ in $\torso(D,W, k) - C$ for some $(s,t) \in \mathcal{T}$. This implies there is also an $s \rightarrow t$ path in $D - C$, as any arc $e$ on $P$ that is not in $D$ can be replaced by a directed path from the tail of $e$ to its head, internally disjoint from $D[W]$, the existence of which is guaranteed by the presence of $e$ in $\torso(D,W,k)$. This path is not hit by $C$ as $C$ contains only arc with both endpoints in $W$. This contradicts the fact that $C$ hits all $s \rightarrow t$ paths in $D$. 

        Conversely, let $C$ be a minimum directed multicut of size at most $k$ in $\torso(D,W, k)$. As we use a $k$-torso, we know that a minimum $C$ cannot contain any arcs not in $D[W]$. 
        We claim that $C$ also hits all terminal-pair paths in $D$. For a contradiction, assume there is a path $P$ from $s$ to $t$ in $D-C$ for a $(s,t) \in \mathcal{T}$. If $P$ has a subpath internally vertex-disjoint from $W$, then we can replace this part of the path by one of the arcs of $\torso(D,W,k)$ that this subpath gives rise to. This arc cannot be in $C$, and hence we find a path from $s$ to $t$ in $\torso(D,W,k) - C$, a contradiction.   
    \end{claimproof}
    
    Thus we can focus our attention to the instance $(\torso(D,W,k), \mathcal{T},k)$. Observe that $\torso(D,W,k)$ need not be an Eulerian digraph. 
    By \Cref{thm:twreduction} we know that, for the underlying undirected graph $G$ of $D$, the graph $\torso(G,W)$ has treewidth at most $h(2k,|T|)$. When we disregard parallel edges (which do not affect the treewidth regardless), the undirected graph underlying $\torso(D,W,k)$ is a subgraph of $\torso(G,W)$ and thus must also have the same upperbound on the treewidth. As \textsc{Directed Multicut} can be described with a {\sc CMSO}$_2$ formula of size dependent only on $|\mathcal{T}|$, we can use Courcelle's theorem to solve the instance $(\torso(D,W, k), \mathcal{T},k)$ in time $f(k,|T|) \cdot n$ for some function $f$. 
\end{proof}

\subsection{Directed Edge Odd Cycle Transversal}
\label{ssec:deoct}

We now turn to a less direct application of treewidth reduction. Unlike the earlier application, it is not
immediately clear how to exploit treewidth reduction for
{\sc Directed Edge Odd Cycle Transversal}. The difficulty
is that the requirement of destroying all odd {\em directed cycles} does not directly translate into finding a collection of small minimal $s$-$t$ cuts, whereas the classical FPT algorithm for undirected {\sc Odd Cycle Transversal} exploits iterative compression to reduce the problem to finding minimum $s$-$t$ separators~\cite{DBLP:journals/orl/ReedSV04}.

A natural approach is therefore to express the requirement of destroying all odd directed cycles as a collection of directed reachability constraints, as in
{\sc Directed Multicut}. To this end, we first use the FPT $2$-approximation algorithm of Lokshtanov, Ramanujan, Saurabh, and Zehavi~\cite{DBLP:conf/soda/LokshtanovR0Z20} to find a set $M$ of at most $2k$ vertices intersecting every odd directed cycle, or conclude that no solution of size at most $k$ exists. We then construct an auxiliary bipartite digraph by taking two copies of each vertex and replacing every arc by two arcs connecting opposite copies. The resulting digraph remains Eulerian, and odd directed cycles can be detected through reachability between the two copies of vertices in $M$. This suggests a {\sc Directed Multicut} formulation with at most $4k$ terminal pairs, in which the two copies of each vertex in $M$ must be separated in both directions.

However, this formulation does not yield a reduction to ordinary {\sc Directed Multicut}. Each arc of the original digraph corresponds to a pair of arcs in the auxiliary digraph, and deleting an original arc requires deleting both members of the pair. An arbitrary multicut need not respect this pairing, and hence need not correspond to a solution of the original problem. We are thus led to a constrained variant of {\sc Directed Multicut} in which the two arcs of each pair must either both be deleted or both be retained.

The key observation is that this additional constraint does not prevent us from exploiting the structure of small minimal cuts. Let $\phi$ denote the involution exchanging the two copies of each vertex, and hence the two arcs corresponding to each original arc. We show that every inclusion-minimal multicut $S$ respecting the pairing constraint can be expressed as $S=A\cup\phi(A)$, where $A$ is an inclusion-minimal ordinary multicut. Consequently, every arc of $S$ belongs to a minimal cut of size at most $2k$ between some prescribed terminal pair. Since the auxiliary digraph is Eulerian, these directed cuts correspond to undirected cuts of size at most $4k$. This allows us to apply our treewidth-reduction theorem (Theorem~\ref{thm:twreduction}) to obtain an equivalent instance of bounded treewidth while preserving the pairing constraint.

Although treewidth reduction yields an equivalent instance of bounded treewidth, enforcing the pairing constraint introduces an additional difficulty. In an arbitrary tree decomposition, the two arcs of a pair may be handled in different parts of the decomposition, requiring the dynamic programming algorithm to maintain consistency between their deletion decisions.

We overcome this difficulty by proving that the reduced graph admits a bounded-width tree decomposition in which every bag is closed under the involution $\phi$. To establish this, we consider the graph obtained by identifying each vertex with its counterpart under $\phi$. We show that this graph must have bounded treewidth otherwise, the excluded-grid theorem~\cite{DBLP:journals/jctb/ChuzhoyT21}, together with a grid-minor lifting argument that exploits the involution, would imply the existence of an arbitrarily large grid minor in the original reduced graph, contradicting its bounded treewidth. A tree decomposition of the identified graph can then be lifted to one of the original graph by replacing each vertex with its two counterparts. The resulting decomposition allows us to enforce the pairing constraint locally, and dynamic programming over it yields the desired FPT algorithm.

From this point onwards, we will say that the treewidth of a directed graph $D$, written as $\tw(D)$, is equal to treewidth of the undirected underlying graph of $D$. One should not confuse this with the directed treewidth of $D$, which is a different concept not relevant for this work. 

\deoctthm* 

    Let $(D,k)$ be the input instance of \deoct. 
    As a first step, we can use the FPT 2-approximation algorithm by Lokshtanov, Ramanujan, Saurabh and Zehavi~\cite{DBLP:conf/soda/LokshtanovR0Z20}. This will in time $2^{O(k^2)} \cdot n^{O(1)}$ either report that $D$ has no vertex set (and thus no arc set) of size at most $k$ that hits all odd cycles, or it find a vertex set $M$ of size at most $2k$ such that $D-M$ is odd-cycle-free. In the former case we immediately return that this is a {\sc NO}-instance.
    
    Next, we reduce the \deoct instance $(D,k)$ to a \dmcfull instance $(D', \mathcal{T}, 2k)$ defined as follows. The graph $D'$ has as vertex set $\{v^0, v^1 \mid v \in V(D)\}$ and has arcs $(u^0, v^1)$ and $(u^1, v^0)$ for each arc $(u,v)$ of $D$. Finally, $\mathcal{T} = \{ (t^0, t^1), (t^1, t^0) \mid t \in M\}$ and hence $|\mathcal{T}| \leq 2|M| \leq 4k$. We define an involution $\phi$ on $V(D') \cup E(D')$ with $\phi (v^i) = v^{1-i}$ for all $v^i \in V(D')$ and $\phi((u, v)) = (\phi(u), \phi (v))$.
    We say that a set $S \subseteq V(D) \cup  E(D)$ is \emph{symmetric} when $\phi(S) := \{\phi(e) \mid e \in S\}$ is equal to $S$. 
    The reduction relates the two instances through the following equivalence.
    \begin{lemma}
        $D$ has a directed edge odd cycle transversal of size at most $k$ if and only if $D'$ with terminal pairs $\mathcal{T}$ has a symmetric directed multicut of size at most $2k$.
    \end{lemma}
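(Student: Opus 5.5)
The plan is to prove both directions through the natural correspondence between arc sets of $D$ and symmetric arc sets of $D'$. For $S \subseteq E(D)$ put $S' = \{(u^0,v^1),(u^1,v^0) \mid (u,v) \in S\}$. Then $S'$ is symmetric with $|S'| = 2|S|$. Conversely, every symmetric $S' \subseteq E(D')$ arises in this way from a unique $S$ with $|S| = |S'|/2$, because $\phi$ swaps exactly the two lifts of each arc $(u,v)$. Moreover, $D' - S'$ is precisely the graph $(D-S)'$ obtained by applying the doubling construction to $D-S$. So it suffices to prove, for an arbitrary digraph $H$ with $M$ hitting all odd cycles of $H$, that $H$ is odd-cycle-free if and only if $H'$ has no $t^0 \to t^1$ and no $t^1 \to t^0$ path for any $t \in M$. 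We apply this with $H = D - S$. Note that $M$ still hits all odd cycles of $D-S$, since deleting arcs creates no new cycles.

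For the key equivalence I will use the walk lifting property of $H'$. A walk $v_0 v_1 \dots v_\ell$ in $H$ lifts, from any starting copy $v_0^i$, to a unique walk in $H'$ ending at $v_\ell^{\,i+\ell \bmod 2}$. Conversely, every walk in $H'$ projects to a walk in $H$ of the same length. Both facts are immediate because every arc of $H'$ flips the superscript.

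($\Leftarrow$ of the key statement, i.e.\ a transversal gives a multicut.) Suppose $H'$ has a path from $t^i$ to $t^{1-i}$. Its projection is a closed walk of odd length through $t$ in $H$. An odd closed walk decomposes into directed cycles whose lengths sum to an odd number, so at least one of them is odd. Hence $H$ is not odd-cycle-free.

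($\Rightarrow$, i.e.\ a multicut gives a transversal.) Suppose $H$ has an odd cycle $C$. Since $M$ hits every odd cycle of $H$, $C$ contains some $t \in M$. Traversing $C$ from $t$ and lifting from $t^0$ gives a path in $H'$ ending at $t^1$. It is a genuine path: a repeated vertex would force a repeated vertex of $C$ within fewer than $|C|$ steps, and the endpoint differs from the start because the superscripts differ. So some terminal pair $(t^0,t^1)$ is not separated in $H'$.

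Combining the two directions: $D$ has a transversal $S$ with $|S| \le k$ exactly when $D'$ has a symmetric multicut $S'$ with $|S'| \le 2k$. The main subtle point is the decomposition of an odd closed walk into cycles containing an odd one, which is standard. I will also explicitly note that odd cycles of $D - S$ are odd cycles of $D$, so the fixed set $M$, computed on $D$, remains valid for $D - S$.
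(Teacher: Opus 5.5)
Your proof is correct and follows the paper's route. It uses the same correspondence $S \leftrightarrow S'$, projects a $t^i \to t^{1-i}$ path in $D'-S'$ to an odd closed walk in $D-S$, and lifts an odd cycle through some $t \in M$ to a $t^0 \to t^1$ path. Your decomposition of the odd closed walk into cycles repairs a small imprecision in the paper, which asserts the projection is itself a cycle even though $P'$ may visit both $v^0$ and $v^1$; separately, your $\Leftarrow$/$\Rightarrow$ labels are swapped relative to your key statement, though the content of each direction is right.
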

\begin{proof}
    Let $S \subseteq E(D)$ be such that $|S| \leq k$ and $D-S$ is odd-cycle-free. We claim that $S' = \{(u^0, v^1), (u^1, v^0) \mid (u,v) \in S \}$ is a solution for the \dmc instance. It is clearly symmetric and of size $2|S| \leq 2k$, so it remains to show that there are no $t$ to $\phi(t)$ paths in $D' - S'$ for any terminal pair $(t, \phi(t))$. For sake of a contradiction, let $P'$ be such a path. There is a natural way to find a path $P$ corresponding to it in $D$. Simply replace each vertex $v^i$ on $P'$ by $v$ for $P$ and replace each arc $(u^i, v^{i-1})$ by $(u,v)$. Then the first and last vertex of $P$ are the same, and hence $P$ is a cycle. Furthermore, this translation of the path preserves its length, and as $D'$ is bipartite and the two endpoints of $P'$ are in distinct sides of the bipartition, the path $P'$ must have been of odd length. Hence $P$ is an odd cycle in $D$. Finally, observe that $P$ avoids $S$ by our choice of $S'$, which contradicts that $D-S$ is odd-cycle-free.

    For the other direction, let $S' \subseteq E(D')$ be a symmetric set of size at most $2k$ such that $D'-S'$ has no path between terminal pairs. Let $S = \{(u,v) \mid (u^0, v^1) \in S'\}$. Clearly $|S| = \frac{|S'|}{2} \leq \frac{2k}{2} = k$, hence it remains to show that $D-S$ has no odd cycles. For sake of a contradiction, let $C$ be an odd cycle in $D-S$. Then as $M$ is a vertex set hitting all odd cycles, there must be at least one $t \in M$ that hits $C$. We can convert $C$ into a $t^0$ to $t^1$ path $P'$ in $D'-S'$ as follows. Let $t= v_0, v_1, \ldots, v_{\ell-1}, v_\ell = t$ be the vertices visited by $C$ in this order, and note that $\ell$ is odd. The path $P'$ starts at $t^0$ and follows arc $(t^0, v_1^1)$. From here it can follow arc $(v_1^1, v_2^0)$ and repeat this procedure to translate the entirety of $C$. Since $\ell$ is odd, $P'$ has as last vertex $t^1$ and is thus a path between terminal pairs.
    By choice of $S$, this path will be disjoint from $S'$, which contradicts it being a solution.
\end{proof}

To solve the original \deoct instance it thus suffices to decide if the \dmc instance $(D', \mathcal{T}, 2k)$ has a symmetric solution. Our next step will be to reduce the treewidth of the instance. We first argue that the treewidth reduction technique is applicable.

 Let $S \subseteq E(D')$ be a minimal symmetric multicut of size at most $2k$ (if one exists). Let $A \subseteq S$ be a minimal, not necessarily symmetric, multicut contained in $S$. Then $S = A \cup \phi(A)$ as by minimality of $S$ any $A$ needs to pick at least one arc from each symmetric pair of arcs in $S$. By minimality of $A$, every arc in it must be part of some minimal terminal-pair cut. By symmetry of the instance, these two properties must also hold for $\phi(A)$. Hence all arcs in $A$, $\phi(A)$ and as a consequence in $S$ must be part of a minimal $t \to \phi(t)$-cut of size at most $2k$ for some $(t, \phi(t)) \in \mathcal{T}$. It is not difficult to confirm that $D'$ is still an Eulerian digraph, hence all arcs of $S$ are contained in some minimal undirected $t - \phi(t)$ cut of size at most $4k$.

\begin{lemma}
    There is a symmetric set $W \subseteq V(D')$ such that $D^* = \torso(D',W, k)$ has treewidth at most $h(4k,4k)$ and such that $(D', \mathcal{T}, 2k$) has a symmetric solution if and only if $(D^*, \mathcal{T}, 2k)$ has a symmetric solution for a suitable involution $\phi^*$.
\end{lemma}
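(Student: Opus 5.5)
The plan is to apply \Cref{thm:twreduction} to the underlying undirected graph $G'$ of $D'$, with terminal set $T=\{t^0,t^1 \mid t\in M\}$ (so $|T|\le 4k$) and cut size $4k$. This gives a set $W_0$ with $\tw(\torso(G',W_0))\le h(4k,4k)$. We then symmetrize, setting $W=W_0\cup\phi(W_0)$.

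The first step is to check that $W_0$ is already symmetric, or at least can be chosen so. The set $I$ of edges lying on minimal $s$-$t$ cuts of size at most $4k$ between terminals is $\phi$-invariant, because $\phi$ is an automorphism of $G'$ that maps $T$ to itself. The cleanest route is therefore to take $W$ to be $T$ together with all endpoints of edges in $I$. This $W$ is symmetric, and it is a subset of $W_0$. Torso treewidth is monotone under taking subsets containing the relevant vertices, because $\torso(G',W)$ is a minor of $\torso(G',W_0)$: contract each bypass path through $W_0\setminus W$. Hence $\tw(\torso(G',W))\le h(4k,4k)$. As in the proof of \Cref{thm:dmc}, the underlying graph of $D^*=\torso(D',W,k)$, ignoring parallel arcs, is a subgraph of $\torso(G',W)$, which gives the treewidth bound. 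We define $\phi^*$ as the restriction of $\phi$ to $W$, extended to torso arcs by $\phi^*((a,b))=(\phi(a),\phi(b))$. This is well defined because a path from $a$ to $b$ internally disjoint from $W$ maps under $\phi$ to such a path from $\phi(a)$ to $\phi(b)$. We also fix a bijection between the $k+1$ parallel copies.

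For the equivalence, we follow the claim in the proof of \Cref{thm:dmc}. Forward direction: take a minimal symmetric solution $S$ of $D'$. By the discussion preceding the lemma, every arc of $S$ lies in a minimal undirected terminal cut of size at most $4k$. Hence $S\subseteq E(D'[W])\subseteq E(D^*)$, and $S$ remains symmetric under $\phi^*$. Any terminal path in $D^*-S$ lifts to a path in $D'-S$ by expanding torso arcs, and this contradicts $S$ being a solution. Backward direction: a minimal symmetric solution $S^*$ of $D^*$ of size at most $2k$ avoids torso arcs. Indeed, each torso arc has $k+1$ parallel copies, so a solution containing one copy but not all copies gains nothing from it. Here we must use multiplicity $2k+1$ rather than $k+1$, or argue that a minimal solution cannot contain a partial bundle. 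This is a small wrinkle, and I would handle it by noting that removing a partial bundle from $S^*$, together with its $\phi^*$-image, keeps the set a symmetric solution. Then $S^*\subseteq E(D'[W])$, and any terminal path in $D'-S^*$ shortcuts through torso arcs into a path in $D^*-S^*$.

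I expect the main obstacle to be the treewidth step. \Cref{thm:twreduction} as stated only guarantees that some output $W_0$ contains the required endpoints, not that $W_0$ is symmetric or minimal. So I must justify that shrinking $W_0$ to the symmetric set $W$ (or alternatively enlarging it to $W_0\cup\phi(W_0)$) keeps the torso treewidth bounded. The minor argument for shrinking seems safest: edges of $\torso(G',W)$ come from paths through $V(G')\setminus W$, and each such path decomposes into segments through $V(G')\setminus W_0$ and vertices of $W_0\setminus W$. Contracting the latter onto neighbours realizes $\torso(G',W)$ as a minor of $\torso(G',W_0)$, after deleting components that do not attach.
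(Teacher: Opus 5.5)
\textbf{Gap in the treewidth step.} The problem is where you expected it. You rely on the claim that $\torso(G',W)$ is a minor of $\torso(G',W_0)$ whenever $W\subseteq W_0$, so that torso treewidth cannot grow when the set shrinks. This is false. Take $G=K_{1,n}$, $W_0=V(G)$ and $W$ the set of leaves. Then $\torso(G,W_0)=G$ has treewidth $1$, while $\torso(G,W)=K_n$. Your contraction fails because the torso turns the whole $W$-neighbourhood of each component of $G'-W$ into a clique. Contracting a vertex of $W_0\setminus W$ into one neighbour only produces a star. So the bound for your symmetric set ($T$ together with all endpoints of $I$) does not follow from \Cref{thm:twreduction}. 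It may still hold for this particular set, but proving it needs a new ingredient. One option is to show, via important cuts, that every component of $G'-W$ has only $f(k)$ neighbours in $W$, and then eliminate these components from a decomposition of $\torso(G',W_0)$.

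\textbf{The repair.} Use the option you mention and then set aside: enlarge instead of shrink. Since $\phi$ is an automorphism of $G'$, we have $\torso(G',\phi(W_0))\cong\torso(G',W_0)$. So \Cref{lem:twunion} gives $\tw(\torso(G',W_0\cup\phi(W_0)))\le 2h(4k,4k)$. The set $W_0\cup\phi(W_0)$ is symmetric and still contains $T$ and all endpoints of $I$, so your equivalence argument goes through verbatim.

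This is essentially the paper's proof. The paper applies \Cref{lem:paircut} separately to each pair $(t_i^0,t_i^1)$ and takes the union of the $2p\le 4k$ sets $W_i,\phi(W_i)$, which keeps the bound at $4k\cdot g(4k,4k)\le h(4k,4k)$. Your use of \Cref{thm:twreduction} has one small advantage: it puts the terminals into $W$, which \Cref{lem:paircut} alone does not guarantee.

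\textbf{The bundle worry.} This resolves without changing the multiplicity. A symmetric set containing all $k+1$ copies of a torso arc $(a,b)$ also contains all copies of $(\phi(a),\phi(b))$. That is a different bundle, since $\phi$ fixes no vertex, so the set has at least $2k+2>2k$ arcs. Combined with your removal of partial bundles, minimal symmetric solutions of $D^*$ avoid torso arcs. The paper asserts this point without justification.
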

\begin{proof}
    Let $t_1, t_2, \ldots, t_p$ be the vertices in $M$ for some $p \leq 2k$. We find a set $W_i$ for each $i \in [p]$ by running \Cref{lem:paircut} for each pair of terminals $(t_i^0, t_i^1)$. We set $W = \bigcup_{i=1}^p W_i \cup \phi(W_i)$ and claim that using the treewidth reduction tools we find that $\tw(\torso(D',W,k))$ is appropriately bounded. In particular, by \Cref{lem:paircut} we obtain $\tw(\torso(G', W_i)) \leq g(4k, 4k)$ where $G'$ is the underlying undirected graph of $D'$. By symmetry of the graph the same holds for $\tw(\torso(G, \phi(W_i))$ and thus by \Cref{lem:twunion} we know that $\tw(\torso(G,W)) \leq 2p \cdot g(4k,4k) \leq 4k \cdot g(4k,4k) \leq h(4k,4k)$, and since the underlying undirected graph without parallel edges of $\torso(D',W,k)$ is a subgraph of $\torso(G,W)$, we conclude it has a similar upper bound on its treewidth. 
    By construction of $W$ we obviously have $\phi(W) = W$. This allows us to define a natural symmetry function $\phi^*$ for $D^*$ based on $\phi$.
    
    It remains to show that the two instances are equivalent.
    Let $S$ be a minimal symmetric solution on $D'$ of size at most $2k$. Then by the argument preceding this lemma we know that all endpoints of arcs in $S$ must be in $W$. We claim that $S$ is also a solution on $D^*$, which by construction of $D^*$ remains a symmetric solution. For sake of a contradiction, let $P^*$ be a $t$ to $\phi^*(t)$ path in $D^*-S$ for a terminal pair $(t, \phi^*(t)) \in \mathcal{T}$. By construction of $D^*$, if $P^*$ uses any arc introduced by the torso operation then it can be replaced in $D'$ by the path internally disjoint from $D[W]$ that gave rise to this arc. Replacing each such arc results in a $t$ to $\phi(t)$ path in $D'-S$, which contradicts $S$ being a solution.

    Now let $S$ be a minimal symmetric solution for $D^*$. By construction of $D^*$, $S$ will not contain any arcs introduced by the torso operation, i.e. all arcs of $S$ are arcs in $D'$. As the symmetry of these arcs are maintained for $D'$, it remains to argue there is no terminal pair path in $D' - S$. This follows directly from the fact that any subpath outside of $D'[W]$ that such a path could use can be replaced by an arc between its endpoints in $W$, which exists by definition of the torso, yet is certainly not in $S$.
\end{proof}

At this point we know that $D^*$ has bounded treewidth, but we want to specifically show that $D^*$ has a tree decomposition of bounded width that respects the symmetry of the graph. Let $g : \mathbb{N} \to \mathbb{N}$ be a function such that every undirected graph of treewidth at least $g(p)$ contains a $(p \times p)$-grid (henceforth referred to as a grid of size $p$) as a minor. By Churzhoy and Tan~\cite{DBLP:journals/jctb/ChuzhoyT21} we know that $g(p) = O(p^9 \text{ poly} \log p)$. We prove the following lemma.

\begin{lemma} \label{lem:symtd}
    $D^*$ has a tree decomposition of width at most $2 \cdot g((h'(k))^2)$ for $h'(k) = 1 + h(4k,4k)$ such that for every bag $B$ of the decomposition and every vertex $v \in V(D^*)$, $v \in B$ if and only if $\phi(v) \in B$. 
\end{lemma}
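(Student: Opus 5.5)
Let $G^*$ denote the underlying undirected simple graph of $D^*$; we know $\tw(G^*) \leq h(4k,4k) = h'(k)-1$. Define the quotient graph $Q$ obtained from $G^*$ by identifying each vertex $v$ with $\phi^*(v)$. So $V(Q)$ is the set of orbits $\{v,\phi^*(v)\}$, and two orbits are adjacent if some member of one is adjacent in $G^*$ to some member of the other. The plan has three steps. First, show $\tw(Q) < g((h'(k))^2)$. Second, take a tree decomposition of $Q$ of width at most $g((h'(k))^2)$. Third, lift it by replacing each orbit in every bag by its (at most two) members. The lifted decomposition is valid for $G^*$: every vertex's orbit occupies a connected subtree, and every edge $uv$ of $G^*$ maps to an edge or a single vertex of $Q$, which is covered by some bag. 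Bags at most double in size, so the width is at most $2\cdot g((h'(k))^2)+1$, which is essentially the claimed bound up to the usual off-by-one convention. Bags are closed under $\phi^*$ by construction.

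The core is the first step, which we prove by contradiction. Suppose $\tw(Q) \geq g(p^2)$ with $p = h'(k)$. Then $Q$ contains a $(p^2\times p^2)$-grid minor with connected branch sets $X_{i,j}\subseteq V(Q)$. Lift each branch set to $\tilde X_{i,j} = \bigcup X_{i,j} \subseteq V(G^*)$, which is a symmetric set. The problem is that $\tilde X_{i,j}$ need not be connected in $G^*$. However, $\tilde X_{i,j}$ has at most two components, and these are swapped by $\phi^*$: any path in $Q$ lifts, starting from either preimage, to a path in $G^*$ whose image is the same, since $\phi^*$ is an automorphism of $G^*$ and an edge between orbits lifts to an edge from either representative after possibly applying $\phi^*$.

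The main obstacle is extracting a large grid from these lifts. A cleaner route is to pass to the covering graph: $G^*$ is a (possibly non-connected, possibly with fixed points — but $\phi^*$ has no fixed vertices since $v^0\neq v^1$) 2-fold cover of $Q$ whenever $\phi^*$ maps no vertex to a neighbour, and in general $Q$ is a quotient of $G^*$ by a fixed-point-free involution. Using this, I would prove the following. For each branch set, choose one lifted component; for each grid edge between $X_{a}$ and $X_b$, some edge of $G^*$ joins a component of $\tilde X_a$ to a component of $\tilde X_b$. Assign to each grid edge a label in $\mathbb{Z}_2$ according to whether it connects the chosen components or the chosen component to the other one. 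This gives a $\mathbb{Z}_2$-voltage assignment on the grid, and the lift of the grid minor in $G^*$ is the derived double cover. The grid's rows and columns, each of length $p^2$, can then be subdivided into $p\times p$ blocks of size $p$. By a pigeonhole/re-rooting argument (choosing components row by row, then correcting columns, using that flipping the choice at a vertex toggles all incident labels), I would find a $p\times p$ subgrid (taking every $p$-th row and column with the connecting segments routed so that all labels along them are $0$) on which the chosen lifts form a genuine $(p\times p)$-grid minor of $G^*$. This is the delicate part; the $p^2$ versus $p$ slack is what pays for it. The result is a $p$-grid minor in $G^*$, so $\tw(G^*)\geq p = h'(k) > h(4k,4k)$, contradicting the previous lemma.

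Finally, check the edge cases: loops in $Q$, from edges $v\phi^*(v)$, are irrelevant to treewidth, and multi-edges are ignored. The decomposition can also be computed algorithmically, since $Q$ has bounded treewidth, so Bodlaender's algorithm applies in linear time, which is needed for the subsequent dynamic programming.
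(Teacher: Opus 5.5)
Your outer framework (quotient by $\phi^*$, lift a decomposition of the quotient by doubling bags, and derive a contradiction from a $(p^2\times p^2)$-grid minor of the quotient with $p=h'(k)$) is the paper's. However, the step that carries all the content is left unproved, and the mechanism you sketch for it cannot work.

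Re-rooting (flipping the chosen lift at a vertex) adds a coboundary to your $\mathbb{Z}_2$-labelling, so it never changes the parity of a cycle. Moreover, the parity of a face of any sub-grid you select is the sum of the parities of the unit squares it encloses. Hence taking every $p$-th row and column and then switching can make all labels $0$ only if every face of that sub-grid already encloses an even number of odd unit squares. No pigeonhole over which rows and columns to take guarantees this. If the unit squares are odd independently at random, a fixed choice of $p$ rows and $p$ columns has all $(p-1)^2$ faces even with probability $2^{-(p-1)^2}$, while there are at most $\binom{p^2}{p}^2\le (ep)^{2p}$ choices. Your sketch does not say how ``routing the connecting segments'' would change face parities. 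It also only ever looks for a balanced sub-grid, treating odd squares purely as obstacles.

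The missing idea is that odd squares help. The lift of a connected piece of the grid is either two disjoint copies of it (if all its cycles are even) or a single connected set (if it has an odd cycle). Partition the $p^2\times p^2$ grid into a $p\times p$ array of blocks, each a $p\times p$ sub-grid.
\begin{itemize}
\item If some block has no odd unit square, all its cycles are even, so its lift contains a $p\times p$ grid.
\item Otherwise every block lifts to a connected set. Lifts of distinct blocks are disjoint, and adjacent blocks are joined by lifted edges, so the block lifts are branch sets of a $p\times p$ grid minor.
\end{itemize}
Either way the underlying graph of $D^*$ has treewidth at least $p>h(4k,4k)$. This is exactly the paper's ``unifier'' argument, and it is where the $p^2$-versus-$p$ slack is actually spent.

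A smaller point: if some $\tilde X_{i,j}$ is connected, there is no ``other component'', so your labels are undefined there. The paper's fix works in all cases: lift a spanning tree of $Q[X_{i,j}]$ from each of the two preimages of a root. This always yields two disjoint connected sets swapped by $\phi^*$.
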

\begin{proof}
    Let $G$ be the graph obtained from $D^*$ and by identifying each vertex pair $(v^0 , v^1)$ to a single vertex $v$. Observe that a tree decomposition of $G$ can be turned into a tree decomposition of $D^*$ with symmetric bags by replacing each such vertex $v$ of $G$ in each bag by the two vertices $v^0, v^1$ of $D^*$ that were identified to form $v$. This only doubles the size of each bag, hence we will prove the claim by bounding the treewidth of $G$.
    For sake of a contradiction, assume that $\tw(G) > g((h'(k))^2)$. 
    By choice of $g$, that means $G$ has as a minor a grid $B$ of size at least $h'(k)^2$. We will show that this implies $D^*$ has as undirected minor a grid of size at least $h'(k) = 1 + h(4k,4k)$ contradicting that $D^*$ has treewidth at most $h(4k,4k)$. 
    
    Consider the minor model $(V_b)_{b \in V(B)}$ of $B$ in $G$. For each $b \in V(B)$, let $T_b$ be a spanning tree of $G[V_b]$.  
    Consider an arc $(u,v)$ that is part of $T_b$. Since $G$ is formed by identifying symmetric vertex pairs of $D^*$, it must mean one of the following arcs exists: $(u^0, v^1)$, $(u^1,v^0)$ $(u^0,v^0)$ or $(u^1,v^1)$. By symmetry of $D^*$, these arcs come in pairs, i.e. the first two are either both in $D^*$ or are both not in $D^*$, and same for the last two. We proceed to replicate $T_b$ in $D^*$. Select an arbitrary $r \in V_b$ as the root. Start from $r^0$ in $D^*$ and consider the arcs of $T_b$ incident to $r$. By the argument above, each such arc must have two counterparts in $D^*$, one with endpoint $r^0$ and one with endpoint $r^1$. Hence we can select for each arc of $T_b$ incident to $r$ a matching arc in $D^*$ incident to $r^0$. One can easily see that repeating this argument will yield us a copy of $T_b$ in $D^*$. Furthermore, if we repeat the process but start now with $r^1$ as root, we find another copy of $T_b$, vertex-disjoint from the first copy. Let $T_b^0$ and $T_b^1$ refer to these two copies of $T_b$ in $D^*$. Find such pairs of trees for each $b \in V(B)$ and let $G^*$ be the (undirected) graph obtained from $D^*$ by contracting all such trees $T_b^0$ and $T_b^1$ to single vertices $b^0$ and $b^1$ and removing all other vertices from the graph.

    At this point we can attempt to replicate the grid minor model in $G^*$.
    Fix some embedding of $B$ on the plane. Then we start at the top-left vertex $b_{00}$ of $B$, for which we have the two candidate vertices $b_{00}^0$ and $b_{00}^1$ in $G^*$. Select $b_{00}^0$ and we will try to discover a copy of $B$ starting from here. 
    We look at a neighbour of $b_{00}$, let us say $b_{01}$. In the original minor model on $G$ there is an arc between $V(T_{b_{00}})$ and $V(T_{b_{01}})$, hence by symmetry there must be an arc from $V(T_{b_{00}}^0)$ to at least one of $V(T_{b_{01}}^0)$ and $V(T_{b_{01}}^1)$. Thus we can find an arc from $b_{00}^0$ to at least one of $b_{01}^0$ or $b_{01}^1$. Add one such arc to our structure. 
    By repeatedly finding such arcs we can step-by-step build a structure resembling the grid $B$, but now in $G^*$. Assume that we build the grid from left to right, top to bottom. There is one potential obstacle to this structure being a proper grid. For vertices $b_{ij}$ of $B$ that are the bottom right corner of a square (a $C_4$) there will be a connection made in our structure both from the vertex above it and to the left of it. However, one arc might connect to $b_{ij}^0$ while the other connects to $b_{ij}^1$ (see \Cref{fig:mismatch} for an illustration). As a result, there will not be a $C_4$ in our structure. Regardless, we can continue the construction procedure by choosing to continue from either only $b_{ij}^0$ or only $b_{ij}^1$ in the next step. We address the raised problem as follows.

    Note that in the above we constructed a structure by starting a growing procedure from $b_{00}^0$. We can in exactly the same way grow a structure starting from $b_{00}^1$. Let $B^0$ and $B^1$ denote these two structures.
    By symmetry of $D^*$, these two structures can be constructed in such a way that they are entirely arc-disjoint. This is achieved by, when in the growing procedure we have a choice between connecting to or continuing from $b_{ij}^0$ and $b_{ij}^1$, making sure both structures make a different choice. 

    Now, in case a mismatch occurs in a square of the grid, then observe that it means that all 8 vertices involved in the square are connected through these 8 arcs that are partitioned over $B^1$ and $B^2$ (i.e. it forms a $C_8$). Let us refer to such mismatched squares of the grid as \emph{unifiers}. 
    If either $B_1$ or $B_2$ (and by symmetry then both) contains a subgrid of 
    size at least $h'(k)$ then we are done. Otherwise, for all subgrids of $B^0$ and $B^1$ of at least this size, there will be a unifier in the corresponding region of $B_1 \cup B_2$. Partition the grid into $h'(k)$ vertex-disjoint subgrids, structured themselves in a grid like structure (see \Cref{fig:gridofgrids}). Each of these subgrids must contain a unifier, and as a consequence it is a connected subgraph of $B_1 \cup B_2$. That means we can treat each subgrid as a branching set, and all branching sets together form a new grid, which is a minor of $D^*$. Since each subgrid is connected, the problem of a mismatch cannot occur. The grid found in this way has size $h'(k)$, concluding the proof. 
\end{proof}

\begin{figure}[t]
  \centering
  \begin{subfigure}{0.48\textwidth}
    \includegraphics{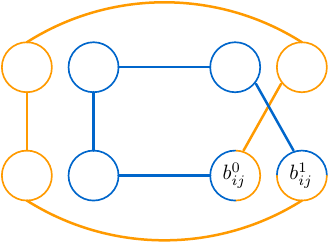}
      \centering
    \caption{Mismatch when constructing a model of the grid. Each vertex $b_{ij}$ on the square of the grid has two candidate vertices $b_{ij}^0$ and $b_{ij}^1$. The structure $B^0$ is in blue and $B^1$ in orange, and both connect to both candidates for the bottom left corner.}
    \label{fig:mismatch}
  \end{subfigure}
  \hfill
  \begin{subfigure}{0.48\textwidth}
    \centering
    \includegraphics{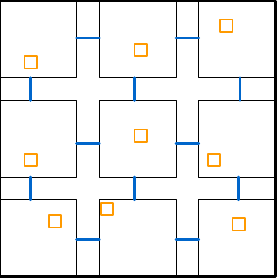}
    \caption{The grid of grids construction. Each subgrid must contain at least one unifier (indicated in orange). 
    }
    \label{fig:gridofgrids}
  \end{subfigure}
  \caption{}
  \label{fig:both}
\end{figure}

The function $h$ in \Cref{thm:twreduction} is such that $h(4k,4k) = 2^{O(k^2)}$, and hence $D^*$ has a symmetric tree decomposition of size at most $2^{O(k^2)}$. 
We can compute a symmetric tree decomposition as stated in \Cref{lem:symtd} by following the setup at the start of its proof. We construct $G$ and compute a tree decomposition of it of width at most $2^{O(k^2)}$ using the algorithm by Korhonen~\cite{DBLP:conf/focs/Korhonen21} in $2^{2^{O(k^2)}} \cdot |V(G)|$ time. Furthermore, we can turn this into a nice tree decomposition of same width and with $2^{O(k^2)} \cdot |V(G)|$ nodes in $2^{O(k^2)} \cdot |V(G)|$ time \cite{DBLP:books/sp/CyganFKLMPPS15}.
We then reverse the identification of all vertices to find a `symmetrically-nice' tree decomposition $(T, \{X_b \mid b \in V(T) \})$ of $D^*$ of width at most $2^{O(k^2)}$. For a node $b \in V(T)$ let $G_b$ be the subgraph of $G$ consisting of all vertex and arcs introduced in the subtree of $T$ rooted at $b$. Let $K_b$ denote the set of all individual terminals that are part of a terminal pair in $\mathcal{T}$ and are in $G_b$.

We then solve the problem of finding a symmetric multicut through dynamic programming on the tree decomposition. The main idea is that the symmetry of our tree decomposition allows us to force only symmetric solutions to be considered, as symmetric arcs can be processed simultaneously. 
We introduce the following two definitions.
\begin{definition}[Reachability relation]
    A \emph{reachability relation} $R$ on a set $U \subseteq V(D)$ of vertices on a digraph $D$ is a reflexive and transitive relation on $U$. The relation is \emph{valid} for a set of terminal pairs $\mathcal{T}$ if $R \cap \mathcal{T} = \emptyset$.
\end{definition}
A digraph $D$ realises a reachability relation $R$ on a set $U \subseteq V(D)$ if $(u,v) \in R$ if and only if there is a (potentially empty) path from $u$ to $v$ in $D$.

\begin{definition}
   For every node $b \in V(T)$, every valid reachability relation $R$ on $K_b \cup X_b $ and every symmetric set $\tilde{S} \subseteq E(G_b[X_b])$, we define $c[b,R,\tilde{S}]$ to be the minimum size of a set $S \subseteq E(G_b)$ such that 
\begin{enumerate}
    \item $S$ is symmetric,
    \item $S \cap E(G_b[X_b]) = \tilde{S}$, and
    \item $G_b-S$ realises reachability relation $R$.
\end{enumerate}
If no such set $S$ exists we put $c[b,R,\tilde{S}] = \infty$.
\end{definition}
Let $r$ be the root of $T$. Then we can find the size of a minimum symmetric DMC solution for $(D^*, \mathcal{T}, k)$ by taking $\min_R c[r,R,\emptyset]$ where $R$ ranges over all valid reachability sets on the terminals. We will show in a series of lemmas how to compute the values of $c$ for each type of node appearing in $T$, after which they can be computed for the whole tree in a bottom-up manner.

\begin{lemma}[Leaf node] \label{lem:leafnode}
    Let $b$ be a leaf node, then $c[b, \emptyset, \emptyset] = 0$. 
\end{lemma}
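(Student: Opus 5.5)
The plan is to unfold the definition of $c[b,R,\tilde S]$ at a leaf of the nice tree decomposition. In the decomposition fixed above, every leaf node $b$ has an empty bag, so $X_b=\emptyset$. Because $G_b$ consists only of the vertices and arcs introduced in the subtree rooted at $b$, and that subtree is just $b$, the graph $G_b$ is empty. It follows that $K_b=\emptyset$, since no terminal lies in $G_b$. Hence $K_b\cup X_b=\emptyset$, and the only relation on the empty set is $R=\emptyset$. This relation is trivially reflexive and transitive, and it is valid because $\emptyset\cap\mathcal{T}=\emptyset$. Likewise, $E(G_b[X_b])=\emptyset$ forces $\tilde S=\emptyset$, and this set is symmetric. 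So $c[b,\emptyset,\emptyset]$ is the only entry to be computed at $b$.

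Next, the argument shows that $S=\emptyset$ is feasible and therefore optimal. Since $E(G_b)=\emptyset$, the only candidate is $S=\emptyset$. It satisfies condition 1 because $\phi(\emptyset)=\emptyset$. It satisfies condition 2 because $S\cap E(G_b[X_b])=\emptyset=\tilde S$. It satisfies condition 3 because $G_b-S$ is the empty digraph, which realises the empty relation on the empty set vacuously: there are no pairs $(u,v)$ to check in either direction of the ``if and only if''. The minimum size is therefore $0$.

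No step is a genuine obstacle. The only point needing care is the convention that leaves of the symmetrically-nice tree decomposition have empty bags. This holds because the decomposition was obtained from a nice decomposition of the identified graph by replacing each vertex with its two copies, and an empty bag stays empty under that replacement.
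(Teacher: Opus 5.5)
Your proof is correct and takes the same route as the paper. The paper's one-line argument is that at a leaf $G_b$ is empty, so the only choice is $R=\tilde S=S=\emptyset$; you unfold the three defining conditions explicitly and justify why leaves of the symmetrically-nice decomposition have empty bags.
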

\begin{proof}
    For a leaf node $b$ the graph $G_b$ is empty and thus there is only one valid choice. 
\end{proof}

\begin{lemma}[Introduce vertex node] \label{lem:intvertnode}
     Let $b$ be an introduce vertex node with a child $b'$ such that $X_b = X_{b'} \cup \{v^0, v^1\}$ for some $v^0, v^1 \notin X_{b'}$. Then
     \begin{align*}
        c[b,R,\tilde{S}] =
            \begin{cases}
              c[b',R \setminus \{(v^0, v^0), (v^1, v^1)\},\tilde{S}] & \text{if $v^0$ and $v^1$ only appear reflexively in $R$}\\
              \infty & \text{otherwise}
            \end{cases}   
\end{align*}
\end{lemma}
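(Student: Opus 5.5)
We prove the recurrence by exhibiting a bijection between the feasible sets counted on both sides, so the two minima coincide. The key fact is that for an introduce vertex node, $G_b$ equals $G_{b'}$ plus the two isolated vertices $v^0, v^1$. Arcs incident to them are introduced only at later introduce-edge nodes; the symmetric nice decomposition introduces $v^0$ and $v^1$ together and arcs separately. Consequently $E(G_b) = E(G_{b'})$. Moreover, $E(G_b[X_b]) = E(G_{b'}[X_{b'}])$, since no arc of $G_b$ touches $v^0$ or $v^1$. So the sets $\tilde S$ ranging over both tables are the same symmetric sets. In addition, $K_b = K_{b'} \cup (\{v^0,v^1\} \cap K)$, where $K$ denotes the set of all terminals.

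\textbf{Case 1: $R$ relates $v^i$ to some other vertex.} Suppose $(v^i,u) \in R$ or $(u,v^i) \in R$ for some $u \neq v^i$. Since $v^i$ is isolated in $G_b - S$ for every $S$, no path joins $v^i$ to $u$. Hence $G_b - S$ never realises $R$, no feasible $S$ exists, and $c[b,R,\tilde S] = \infty$, as claimed.

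\textbf{Case 2: $v^0$ and $v^1$ appear only reflexively in $R$.} Set $R' = R \setminus \{(v^0,v^0),(v^1,v^1)\}$. Then $R'$ is reflexive and transitive on $K_{b'} \cup X_{b'}$: removing isolated reflexive points preserves both properties. It is also valid, since $R' \subseteq R$ and $R \cap \mathcal T = \emptyset$. We show that a set $S \subseteq E(G_b) = E(G_{b'})$ is feasible for $(b,R,\tilde S)$ if and only if it is feasible for $(b',R',\tilde S)$.
\begin{itemize}
\item Symmetry of $S$ is the same condition in both tables.
\item The condition $S \cap E(G_b[X_b]) = \tilde S$ is the same in both, since the two edge sets coincide.
\item For reachability: because $v^0,v^1$ are isolated in $G_b - S$, for $x,y \in K_{b'} \cup X_{b'}$ there is an $x\to y$ path in $G_b - S$ if and only if there is one in $G_{b'} - S$. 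The only reachabilities involving $v^0,v^1$ are the trivial empty paths. Thus $G_b - S$ realises $R$ exactly when $G_{b'} - S$ realises $R'$.
\end{itemize}
So the feasible families are identical, and the minima agree, including the value $\infty$ when both families are empty.

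\textbf{Main obstacle.} The step needing the most care is the bookkeeping that $K_b$ and $K_{b'}$ may differ by $v^0$ or $v^1$. If $v^i$ is a terminal newly appearing in $G_b$, it lies in $X_b$ anyway, so the domain of $R$ is exactly $K_{b'} \cup X_{b'} \cup \{v^0,v^1\}$. Restricting $R$ to $R'$ therefore lands precisely on the domain used by the child's table. We also rely on the convention that the symmetrically-nice decomposition introduces $v^0$ and $v^1$ in the same node, which holds by construction, since each vertex of the identified graph was replaced by the pair.
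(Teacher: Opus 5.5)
Your proposal is correct and follows the paper's argument. Both proofs use the fact that $v^0,v^1$ are isolated in $G_b$. This forces $\infty$ whenever either vertex appears non-reflexively in $R$; otherwise, $S$ is feasible for $(b,R,\tilde S)$ exactly when it is feasible for $(b',R\setminus\{(v^0,v^0),(v^1,v^1)\},\tilde S)$. You additionally make explicit the bookkeeping the paper leaves implicit: that $E(G_b)=E(G_{b'})$ and $E(G_b[X_b])=E(G_{b'}[X_{b'}])$, that $R'$ is a valid relation, and that $R'$ lives on the domain $K_{b'}\cup X_{b'}$.
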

\begin{proof}
    The two introduced vertices $v^0$ and $v^1$ are isolated in $G_b$ as none of their incident arcs have been introduced yet. Hence it is impossible for any solution to realise any reachability relation that involves $v^0$ or $v^1$ (except for their reflexive relations $(v^0, v^0)$ and $(v^1, v^1)$). Therefore, if $R$ contains $v^0$ or $v^1$ in any other pair, $c[b,R,\tilde{S}]$ must be $\infty$. If they indeed appear only in reflexive pairs, then any solution $S$ on $G_b$ realises reachability relation $R$ if and only if $S$ realises $R \setminus \{(v^0, v^0), (v^1, v^1)\}$ on $G_{b'}$.
\end{proof}

\begin{lemma}[Introduce edge node] \label{lem:intedgenode}
    Let $b$ be an introduce edge node with a child $b'$. Let $(u,v)$ and $\phi^*((u,v))$ be the arcs introduced by $b$. Then 
    \begin{align*}
  c[b,R,\tilde{S}] =
    \begin{cases}
      2 + c[b',R,\tilde{S} \setminus \{(u,v), \varphi^*((u,v))\}] & \text{if $(u,v), \varphi^*((u,v)) \in \tilde{S}$}\\
      \min_{R'} c[b',R',\tilde{S}] & \text{otherwise}
    \end{cases}       
\end{align*}
Where $R'$ in the second case ranges over all valid reachability relations such that the transitive closure of $R' \cup \{ (u,v), \varphi^*((u,v)) \}$ is equal to $R$.
\end{lemma}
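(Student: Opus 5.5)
We verify the recurrence by a two-sided comparison between feasible sets for $c[b,R,\tilde S]$ and feasible sets for the child $b'$. First record the setup. $G_b$ is $G_{b'}$ plus the two arcs $e=(u,v)$ and $\phi^*(e)$. Their endpoints all lie in $X_b=X_{b'}$, because the decomposition is symmetric and nice. The terminal sets coincide, $K_b=K_{b'}$, since no vertex is introduced. Consequently any $S\subseteq E(G_b)$ splits uniquely as $S'\cup (S\cap\{e,\phi^*(e)\})$ with $S'\subseteq E(G_{b'})$, and $S$ is symmetric if and only if $S'$ is symmetric and $S$ contains both or neither of $e,\phi^*(e)$. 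Together with condition~2 and the symmetry of $\tilde S$, this means $\tilde S$ contains either both new arcs or neither, so the two cases of the lemma are exhaustive.

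\emph{Case both arcs are in $\tilde S$.} A feasible $S$ for $(b,R,\tilde S)$ deletes both new arcs. Hence $G_b-S=G_{b'}-S'$, with $S'=S\setminus\{e,\phi^*(e)\}$, and $S'$ is symmetric. It satisfies $S'\cap E(G_{b'}[X_{b'}])=\tilde S\setminus\{e,\phi^*(e)\}$, and it realises $R$ on $K_b\cup X_b=K_{b'}\cup X_{b'}$. So $|S|=2+|S'|\ge 2+c[b',R,\tilde S\setminus\{e,\phi^*(e)\}]$. Conversely, adding the two arcs to any feasible $S'$ for the child yields a feasible $S$. Validity of $R$ is unchanged, which gives equality.

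\emph{Case neither arc is in $\tilde S$.} Then $S=S'$, and $G_b-S=(G_{b'}-S')+e+\phi^*(e)$. Let $R'$ be the relation realised by $G_{b'}-S'$ on $U=K_b\cup X_b$. The key step is the claim that $G_b-S$ realises exactly the transitive closure of $R'\cup\{e,\phi^*(e)\}$ on $U$.
\begin{itemize}
\item For $\supseteq$: every pair in that closure is witnessed by concatenating walks, each of which is either an $R'$-path or a single new arc. Such a walk shortcuts to a path.
\item For $\subseteq$: take a path in $G_b-S$ between vertices of $U$ and split it at its uses of $e$ and $\phi^*(e)$. Each maximal remaining segment lies in $G_{b'}-S'$, and its endpoints are in $U$ because $u,v,\phi^*(u),\phi^*(v)\in X_b$. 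So each segment is an $R'$-pair.
\end{itemize}
Since $R'\subseteq R$ and $R$ is valid, $R'$ is valid as well. Hence $S'$ is feasible for $(b',R',\tilde S)$ for some $R'$ in the range of the minimum, which proves $\ge$. Conversely, any feasible $S'$ for some such $R'$ is feasible for $(b,R,\tilde S)$ by the same claim, which proves $\le$.

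The main obstacle is the claim in the second case. It relies on all endpoints of the new arcs lying in the bag, which lets every path decompose into $R'$-segments between vertices of $U$. I also need $U$ to be the same set for $b$ and $b'$. I would state both facts explicitly from the nice, symmetric structure of the decomposition before running the argument.
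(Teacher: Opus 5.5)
Your proposal is correct and uses the same two-case, two-inequality comparison as the paper's proof. You also make explicit two points the paper leaves implicit: symmetry of $\tilde S$ makes the two cases exhaustive, and since all endpoints of the two new arcs lie in the bag, reachability in $G_b - S$ on $K_b \cup X_b$ is exactly the transitive closure of $R'$ together with the two new arcs.
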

\begin{proof}
    First consider the case where $(u,v), \varphi^*((u,v)) \in \tilde{S}$. Then note that these two arcs are in any potential solution $S$ and hence $G_b - S = G_{b'} - (S \setminus \{(u,v), \varphi^*((u,v))\})$. Furthermore any solution for $c[b',R,\tilde{S} \setminus \{(u,v), \varphi^*((u,v))\}]$ can be turned into a solution for $c[b,R,\tilde{S}]$ by adding these two arcs to $S$, from which  equality in the first case follows.  

    Next consider the second case. Note that a solution $S$ for $c[b,R,\tilde{S}]$ realises some reachability relation $R'$ on $G_b'$ that can be obtained from $R$ by potentially removing some pairs. Hence $S$ will be a valid solution for some $R'$ and thus $c[b,R,\tilde{S}] \geq \min_{R'} c[b',R',\tilde{S}]$. For the other direction, note that any solution $S'$ for $c[b', R', \tilde{S}]$ by definition of $R'$ realises reachability $R$ on $G_b$, hence $c[b,R,\tilde{S}] \leq \min_{R'} c[b',R',\tilde{S}]$.
\end{proof}

\begin{lemma}[Forget node] 
    Let $b$ be a forget node with a child $b'$ such that $X_b = X_{b'} \setminus \{v^0, v^1\}$ for some $v^0, v^1 \in X_{b'}$. Then
    \begin{equation*}
        c[b, R, \tilde{S}] = \min_{\tilde{S}', R'} c[b', R', \tilde{S}'] 
    \end{equation*}
    Where $\tilde{S}'$ ranges over all supersets of $\tilde{S}$ obtained by adding a symmetric set of arcs of $E(G_{b'}[X_{b'}])$ incident to $v^0$ and $v^1$, and $R'$ ranges over all valid reachability relations of $K_{b'} \cup X_{b'}$ that is equal to $R$ when restricted to $K_{b} \cup X_{b}$.
\end{lemma}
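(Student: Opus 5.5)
We prove the forget-node recurrence by establishing both inequalities directly from the definition of $c[\cdot,\cdot,\cdot]$. We use one structural fact: $G_b = G_{b'}$. A forget node introduces no new vertices or arcs, and the vertices forgotten at $b$ are exactly the symmetric pair $\{v^0,v^1\}$, because the decomposition is symmetrically-nice. Consequently $K_b = K_{b'}$. Since $v^0,v^1$ are no longer in the bag, the only differences between the two DP states are these.
\begin{itemize}
\item The relation at $b$ is indexed on $K_b \cup X_b$, a subset of $K_{b'} \cup X_{b'}$.
\item The prescribed arc set at $b$ only fixes arcs of $E(G_b[X_b])$, whereas at $b'$ it also fixes the arcs of $E(G_{b'}[X_{b'}])$ incident to $v^0$ or $v^1$.
\end{itemize}

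For the upper bound $c[b,R,\tilde S] \le \min_{\tilde S',R'} c[b',R',\tilde S']$, take $\tilde S'$ and $R'$ as in the statement and a set $S$ attaining $c[b',R',\tilde S']$.
\begin{itemize}
\item $S$ is symmetric.
\item Restricting $S \cap E(G_{b'}[X_{b'}]) = \tilde S'$ to arcs with both endpoints in $X_b$ gives $\tilde S$, since $\tilde S' \setminus \tilde S$ consists only of arcs incident to $v^0$ or $v^1$. Hence $S \cap E(G_b[X_b]) = \tilde S$.
\item Reachability in $G_b - S = G_{b'} - S$ between vertices of $K_b\cup X_b$ is the restriction of $R'$ to that set, which equals $R$.
\end{itemize}
So $S$ is feasible for $c[b,R,\tilde S]$.

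For the lower bound, take $S$ attaining $c[b,R,\tilde S]$ (if it is $\infty$ there is nothing to show). Define
\begin{itemize}
\item $\tilde S' := S \cap E(G_{b'}[X_{b'}])$. This contains $\tilde S$, and $\tilde S' \setminus \tilde S$ is a set of arcs incident to $v^0$ or $v^1$. It is symmetric because $S$ is symmetric and $\phi^*$ maps $E(G_{b'}[X_{b'}])$ to itself, the bag being closed under $\phi^*$.
\item $R'$ := the reachability relation realised by $G_{b'}-S$ on $K_{b'}\cup X_{b'}$. It is reflexive and transitive, and it restricts to $R$.
\end{itemize}
Then $S$ witnesses $c[b',R',\tilde S'] \le |S|$.

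The main obstacle is checking that $R'$ is \emph{valid}, i.e.\ $R'\cap\mathcal T=\emptyset$. All terminals lie in $K_{b'} = K_b \subseteq K_b\cup X_b$, so every terminal pair in $R'$ would already lie in its restriction $R$. Since $R$ is valid, $R'$ is valid too. Here we use that $K_b$ is defined as the set of all terminals in $G_b$, so forgetting a vertex never removes terminals from the index set; if $v^0$ or $v^1$ is itself a terminal, it stays in $K_b$.

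A second point to check is that $\tilde S'$ ranges over symmetric sets only, which is exactly what lets the recurrence respect the pairing constraint. This is where the symmetric tree decomposition of \Cref{lem:symtd} is used.
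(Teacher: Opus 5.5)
Your proposal is correct and matches the paper's proof. The paper also uses $G_b=G_{b'}$, takes $R'$ to be the reachability realised by $G_{b'}-S$ and $\tilde S'=S\cap E(G_{b'}[X_{b'}])$ for one inequality, and checks directly that a solution for $c[b',R',\tilde S']$ is feasible at $b$ for the other; your explicit checks that $R'$ is valid (via $K_b=K_{b'}$) and that $\tilde S'$ is symmetric fill in details the paper leaves implicit.
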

\begin{proof}
   First note that $G_b = G_{b'}$. Let $S$ be a solution for $c[b,R,\tilde{S}]$. Then $S$ is also a solution for $c[b', R_S, S\cap E(G_{b'}[X_{b'}]) ]$ where $R_s$ is the reachability relation on $K_{b'}\cup X_{b'}$ realised by $G_{b'} - S$. As these are both considered for $R'$ and $\tilde{S}'$, we have $c[b, R, \tilde{S}] \geq \min_{\tilde{S}', R'} c[b', R', \tilde{S}'] $. For the other direction, consider a solution $S'$ for some $c[b', R', \tilde{S}']$. Then $G_b - S$ realises $R$ and $S \cap E(G_b[X_b]) = \tilde{S}$ so it is a candidate for $c[b,R,\tilde{S}]$, hence  $c[b, R, \tilde{S}] \leq \min_{\tilde{S}', R'} c[b', R', \tilde{S}'] $.
\end{proof}

\begin{lemma}[Join node] \label{lem:joinnode}
    Let $b$ be a join node with children $b_1$ and $b_2$. Then
\[c[b,R,\tilde{S}] = \min_{R_1, R_2} c[b_1, R_1, \tilde{S}] + c[b_2, R_2, \tilde{S}] - |\tilde{S}|\]
Where $R_1$ and $R_2$ range over the valid reachability relations for both parts, such that the transitive closure of $R_1 \cup R_2$ is equal to $R$.
\end{lemma}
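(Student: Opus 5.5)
We prove the join recurrence by two inequalities, using the structure of a nice tree decomposition at a join node. In a join node $X_b = X_{b_1} = X_{b_2}$, $G_b = G_{b_1} \cup G_{b_2}$, and the two subgraphs share only the vertices of $X_b$. They also share the arcs of $G_b[X_b]$, which we take to be exactly the arcs introduced in both subtrees. Since every arc is introduced exactly once, the bag-internal arcs contributing to $c$ are the ones fixed by $\tilde S$. Hence the two children's solutions overlap exactly in $\tilde S$, which is where the correction term $-|\tilde S|$ comes from. We also note $K_b = K_{b_1} \cup K_{b_2}$. A terminal of $K_{b_1}$ that is not in $X_b$ has already been forgotten, so it does not occur in $G_{b_2}$.

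For $\le$: take valid $R_1, R_2$ whose transitive closure is $R$, and optimal witnesses $S_1$ for $c[b_1,R_1,\tilde S]$ and $S_2$ for $c[b_2,R_2,\tilde S]$. Set $S = S_1 \cup S_2$. It is symmetric, and $S \cap E(G_b[X_b]) = \tilde S$. Since $S_1 \cap S_2 \subseteq E(G_{b_1}) \cap E(G_{b_2}) \subseteq E(G_b[X_b])$, we get $|S| = |S_1| + |S_2| - |\tilde S|$. It remains to show that $G_b - S$ realises $R$. Any $u \to v$ path with $u,v \in K_b \cup X_b$ splits at its visits to $X_b$ into segments. Each segment lies entirely in $G_{b_1} - S_1$ or entirely in $G_{b_2} - S_2$, because the only vertices shared between the two sides are in $X_b$. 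Each segment's endpoints lie in $K_{b_i} \cup X_b$, so each segment is witnessed by $R_1$ or $R_2$. Hence $(u,v)$ lies in the transitive closure of $R_1 \cup R_2$. Conversely, every pair in $R_1 \cup R_2$ is realised in $G_b - S$, and concatenating paths handles the closure.

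For $\ge$: take an optimal witness $S$ for $c[b,R,\tilde S]$ and set $S_i = S \cap E(G_{b_i})$. Each $S_i$ is symmetric, because $\phi^*$ maps $E(G_{b_i})$ to itself (the bags are symmetric and symmetric arc pairs are introduced together). Also $S_i \cap E(G_{b_i}[X_b]) = \tilde S$. Let $R_i$ be the relation that $G_{b_i} - S_i$ realises on $K_{b_i} \cup X_b$. Each $R_i$ is valid, since any path in $G_{b_i} - S_i$ is also a path in $G_b - S$ and $R \cap \mathcal T = \emptyset$. The segment argument from the previous paragraph shows that the transitive closure of $R_1 \cup R_2$ is exactly $R$ restricted to the right domain. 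The size identity then gives $|S| \ge c[b_1,R_1,\tilde S] + c[b_2,R_2,\tilde S] - |\tilde S|$.

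The step needing the most care is the bookkeeping of domains. The relations $R_1$ and $R_2$ live on $K_{b_1} \cup X_b$ and $K_{b_2} \cup X_b$, so we must be sure that forgotten terminals on one side never interact with the other side except through $X_b$. We must also be sure that arcs inside $X_b$ are counted consistently. I would handle this by fixing the convention that arcs with both ends in $X_b$ that lie in $E(G_b)$ belong to both $G_{b_1}$ and $G_{b_2}$, matching the definition of $\tilde S$. With that convention, the overlap identity and the segment decomposition both go through.
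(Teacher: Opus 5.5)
Your proof is correct and follows the paper's argument. For $\geq$ you restrict an optimal $S$ to the two children, and for $\leq$ you take $S_1\cup S_2$, with the overlap $S_1\cap S_2=\tilde S$ producing the $-|\tilde S|$ term. Your segment argument, the symmetry of $S\cap E(G_{b_i})$, and the validity of the $R_i$ make explicit steps the paper only asserts.

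The convention you fix at the end (bag-internal arcs of $G_b$ lie in both $G_{b_1}$ and $G_{b_2}$) is exactly what the paper implicitly needs for $|S|=|S_1|+|S_2|-|\tilde S|$. So drop the remark in your first paragraph that every arc is introduced exactly once, which contradicts it.
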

\begin{proof}
    Remember  that $X_b = X_{b_1} = X_{b_2}$ in this case. Let $S$ be an arc set of minimum size fitting the definition of $c[b,R,\tilde{S}]$. Let $S_1 = S \cap E(G_{b_1})$ and $S_2 = S \cap E(G_{b_2})$. Let $R_1'$ and $R_2'$ be the reachability relations realised respectively by $S_1$ in $G_{b_1}$ and $S_2$ in $G_{b_2}$. 
    Then as $R_1' \cup R_2'$ has $R$ as transitive closure, we get the following inequality. 
    \begin{align*}
        c[b,R,\tilde{S}] &= |S| = |S_1| + |S_2| - |\tilde{S}| \\
        &\geq c[b_1, R_1, \tilde{S}] + c[b_2, R_2, \tilde{S}] - |\tilde{S}| \\
        &\geq   \min_{R_1, R_2} c[b_1, R_1, \tilde{S}] + c[b_2, R_2, \tilde{S}] - |\tilde{S}|
    \end{align*}

    Conversely, let $S_1$ and $S_2$ now be the minimum sets according to the definitions of $c[b_1, R_1, \tilde{S}]$ and $ c[b_2, R_2, \tilde{S}]$ for some $R_1$ and $R_2$ that have as transitive closure of their union $R$. Then it suffices to observe that $S_1 \cup S_2$ is a set of arcs of $G_b$ realising reachability relation $R$ and with $(S_1 \cup S_2) \cap E(G_B[X_B]) = \tilde{S}$, and hence $c[b,R, \tilde{S}] \leq |S_1 \cup S_2| = |S_1| + |S_2| - |\tilde{S}|$. As this holds for each such $S_1$ and $S_2$, the second direction of the inequality and hence the lemma follows. 
\end{proof}

Finally, note that since the tree decomposition has width bounded in $2^{O(k^2)}$, and since we have at most $4k$ terminal vertices in $D^*$, the dynamic programming algorithm runs in time $2^{2^{O(k^2)}} \cdot |V(D^*)|$,  proving \Cref{thm:deoct}.

\subsection{Proof of the Treewidth-reduction Statement for Edge-cuts} \label{sec:twrproofs} 

\begin{lemma}[Edge version of Lemma 2.4 of \cite{DBLP:journals/talg/MarxOR13}] \label{lem:ebase}
    Let $s,t$ be two vertices in graph $G$ such that the size of a minimum $s$-$t$ cut is $\ell>0$. Then there is a collection $\mathcal{X} = \{X_1,\ldots , X_q\}$ of sets where $\{s\} \subseteq X_i \subseteq V(G) \setminus \{t\}$ for $1 \leq i \leq q$, such that
    \begin{enumerate}
        \item $X_1 \subset X_2 \subset \cdots \subset X_q$, \label{it:lem24-1}
        \item $|\delta(X_i)| = \ell$ for every $1 \leq i \leq q$, and \label{it:lem24-2}
        \item Every $s$-$t$ cut of size $\ell$ is fully contained in $\bigcup_{i=1}^{q} \delta(X_i)$ \label{it:lem24-3}

    \end{enumerate}
    
\end{lemma}
\begin{proof}
    We construct a directed graph $D$ on the vertex set of $G$ by adding the two arcs $(v,u)$ and $(u,v)$, both of capacity one, for every edge $\{u,v\}$ of $G$. The following claim follows directly from this construction.
    \begin{claim}
        For every $R \subseteq V(G) \setminus \{t\}$ that contains $s$, the edge-set $\delta_G(R)$ is an $s$-$t$ cut of size $k$ in $G$ if and only if $\delta_D^+(R)$ is an $s\rightarrow t$ cut of weight $k$ in $D$ (where $\delta_D^+(R)$ is the set of all outgoing arcs of $R$).
    \end{claim}
    Since every minimal $s \rightarrow t$ cut is equal to $\delta_D^+(R)$ for some $R$, it suffices to find a collection $\mathcal{X}$ as in the lemma, yet now on $D$ for directed cuts $s \rightarrow t$ cuts instead.

    We run $\ell$ rounds of the Ford-Fulkerson algorithm on $D$ to find a maximum $s\rightarrow t$ flow, let $D'$ be the residual graph obtained based on this flow. Let $C_1, \ldots, C_p$ be a topological ordering of the strongly connected components of $D'$. Since $\ell > 0$ there is a flow from $s$ to $t$ so there will also be a $t$ to $s$ path in $D'$. Since this is a maximum flow, there will be no $s$ to $t$ path in $D'$. So, if we assume $t \in C_x$ and $s \in C_y$, then we must have $x < y$. For $x < i \leq y$, we define $X_i := \bigcup_{j=i}^p C_j$. 
    For $1 \leq i \leq y - x$, we define  $X_i := \bigcup_{j=y+1-i}^p C_j$.

    Property \ref{it:lem24-1} is immediate from this definition of $\mathcal{X}$.
    For property \ref{it:lem24-2}, we need to confirm that $|\delta_D^+(X_i)| = \ell$ for each $i$. We know that $X_i$ is a suffix of the topological ordering of the strongly connected components of $D'$, and hence has no outgoing arcs in $D'$. So all arcs of $\delta_D^+(X_i)$ must have been turned into incoming arcs in $D'$, which happens if this arc is used by the flow. Since no flow enters $X_i$ (it would give rise to an outgoing arc in $D'$), there can be at most $\ell$ arcs in $\delta_D^+(X_i)$. Furthermore note that $\delta_D^+(X_i)$ is an $s\rightarrow t$ cut and thus must have size at least $\ell$, hence $|\delta_D^+(X_i)| = \ell$.

    To prove that property \ref{it:lem24-3} is satisfied, let $F$ be a minimum $s \rightarrow t$ cut in $D$. Let $R$ be the set of nodes reachable from $s$ in $D-F$, which implies that $F = \delta_D^+(R)$. Consider an arc $(a,b) \in F$. Since this arc is in a minimum cut and since a maximum flow saturates all minimum cuts, we know that the arc $(b,a)$ will be present in the residual graph $D'$. Furthermore, since there is a flow going through $(a,b)$, there must also be a flow from $s$ to $a$ and from $b$ to $t$, and thus an $a$ to $s$ path and a $t$ to $b$ path in $D'$. Hence, if $a \in C_{i(a)}$ and $b \in C_{i(b)}$, then $x \leq i(b) \leq i(a) \leq y$. It remains to prove that $a$ and $b$ are in different strongly connected components of $D'$ (i.e. $i(a) \neq i(b)$), from which it follows that $(a,b) \in X_{i(a)}$. Assume for a contradiction that $a$ and $b$ are in the same strongly connected component. Since $a \in R$ and $b \notin R$, this must mean there is an arc $(c,d)$ going out of $R$ in $D'$ (so $c \in R$ and $d \notin R$). Note that such an arc is part of $F$ and thus part of a minimum cut, which means it is saturated by the flow and hence not in $D'$, a contradiction. 
\end{proof}

For a graph $G$ and a vertex set $U \subseteq V(G)$, we denote by $\partial(U)$  the \emph{boundary of $U$}, which consists of all vertices in $U$ that have a neighbour outside of $U$. Let $\ell$ be the minimum size of an $s$-$t$ cut, then we say that a $s$-$t$ cut $C$ has \emph{excess} $e$ if $|C| = \ell + e$. 

\begin{definition}[Torso]
    Let $G$ be an undirected graph and $U \subseteq V(G)$. The graph $\torso(G,U)$ has vertex set $U$ and for each $u,v \in U$ an edge $\{u,v\}$ if there is a path from $u$ to $v$ in $G$ that is internally vertex-disjoint from $U$. 
\end{definition}

\begin{proposition} \label{prop:torsosg}
    For a graph $G$ and sets $C,U \subseteq V(G)$, $\torso(G[U], C\cap U)$ is a subgraph of $\torso(G,C)$.
\end{proposition}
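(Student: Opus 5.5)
The plan is to check the definition directly: vertices first, then edges, with each edge of the smaller torso witnessed by a path in $G$.

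The vertex set of $\torso(G[U], C\cap U)$ is $C\cap U$, which is a subset of $C$, the vertex set of $\torso(G,C)$. So the vertices are fine.

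Now take an edge $\{u,v\}$ of $\torso(G[U], C\cap U)$ with $u,v\in C\cap U$. By the definition of the torso, $G[U]$ contains a $u$-$v$ path $P$ that is internally vertex-disjoint from $C\cap U$. Every vertex of $P$ lies in $U$, because $P$ is a path in $G[U]$. Hence an internal vertex of $P$ that lies in $C$ would lie in $C\cap U$, which is impossible. So $P$ is internally vertex-disjoint from $C$.

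Since $G[U]$ is a subgraph of $G$, the path $P$ is also a $u$-$v$ path in $G$, still internally vertex-disjoint from $C$. Together with $u,v\in C$, this means $\{u,v\}$ is an edge of $\torso(G,C)$, as required.

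The one point that needs care is the step showing that internal vertices of $P$ avoid all of $C$, not just $C\cap U$; it holds because $P$ never leaves $U$. I would also note that, if the torso is read as containing the edges of $G[U]$ itself, these are the case where $P$ is a single edge.
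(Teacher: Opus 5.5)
Your proof is correct. The paper states this proposition without proof, treating it as immediate from the definition of the torso, and your unpacking of that definition is exactly the intended argument: a witnessing path in $G[U]$ stays inside $U$, so being internally disjoint from $C\cap U$ forces it to be internally disjoint from $C$.
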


We will make use of the following lemmas regarding the treewidth of torsos.
\begin{lemma}[{\cite[Lemma 2.8]{DBLP:journals/talg/MarxOR13}}] \label{lem:twunion}
Let $G$ be a graph and $C_1, \ldots, C_r$ be subsets of $V(G)$ and let $C := \bigcup_{i=1}^rC_i$. We have $\tw(\torso(G,C)) \leq \sum_{i=1}^r \tw(\torso(G,C_i))$.    
\end{lemma}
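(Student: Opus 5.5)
The plan is to build a tree decomposition of $\torso(G,C)$ by refining tree decompositions of the individual torsos, arguing by induction on $r$. The case $r=1$ is trivial. So the core is the two-set case: for $C=C_1\cup C_2$, show $\tw(\torso(G,C))\le \tw(\torso(G,C_1))+\tw(\torso(G,C_2))$. The general bound then follows by induction, applying the two-set case to $C_1\cup\dots\cup C_{r-1}$ and $C_r$.

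For the two-set case, let $(T_1,\{B_x\})$ and $(T_2,\{B'_y\})$ be optimal tree decompositions of $\torso(G,C_1)$ and $\torso(G,C_2)$. I will extend the first decomposition to cover all of $C$. Each vertex $v\in C_2\setminus C_1$ lies either in some component $K$ of $G-C_1$, or it lies in $C_2$ but not in $C_1$, which is the same thing. The neighbourhood $N(K)\subseteq C_1$ is a clique in $\torso(G,C_1)$, so some bag $B_{x_K}$ contains $N(K)$.

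For each such component $K$, I take the restriction of the second decomposition to $C_2\cap K$, that is, the bags $B'_y\cap K$. I add $N(K)$ to each of these bags and attach the resulting tree to $x_K$. Bag sizes are at most $|B'_y\cap K|+|N(K)|$. This is too large, since $|N(K)|$ can be as big as $\tw_1+1$ while each bag should only pick up at most $\tw_1+1$ extra vertices.

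To get the sharper additive bound, I use the standard alternative: replace each bag $B_x$ of $T_1$ by the bags $B_x\cup B'_y$ in a product-like manner. The cleaner and more reliable route is the one used by Marx, O'Sullivan and Razgon. Take the decomposition $T_2$ of $\torso(G,C_2)$. Every edge of $\torso(G,C)$ between vertices $u,v$ either is an edge of both torsos restricted appropriately, or arises from a path internally avoiding $C$.

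First I verify that every edge $uv$ of $\torso(G,C)$ with $u,v\in C_i$ is an edge of $\torso(G,C_i)$, which is immediate since avoiding $C$ implies avoiding $C_i$. Then I handle the mixed edges, with $u\in C_1\setminus C_2$ and $v\in C_2\setminus C_1$. The path from $u$ to $v$ avoids $C$ internally, so $v$ lies in the component $K$ of $G-C_1$ adjacent to $u$, and similarly with the roles swapped. The final construction, for each node $x$ of $T_1$ and the subtree structure of $T_2$, uses bags $B_x\cup(B'_y\cap \ldots)$ glued along the cliques $N(K)$. Checking the running-intersection property of this glued decomposition is routine.

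The main obstacle is the exact bookkeeping that keeps the width additive rather than multiplicative, and covering the mixed edges. If the gluing argument becomes messy, I would simply cite it as Lemma 2.8 of Marx, O'Sullivan and Razgon, since the statement is verbatim from there.
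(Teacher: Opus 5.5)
Your proposal stops short of a proof. You discard your first construction, and the ``product-like'' replacement is never defined: its bags are only written as $B_x\cup(B'_y\cap\ldots)$, so there is nothing for the ``routine'' running-intersection check to apply to. More importantly, the obstacle you name cannot be overcome. Writing $w_i=\tw(\torso(G,C_i))$, you want width $w_1+w_2$ instead of $w_1+w_2+1$, but the inequality as printed is false. Take $G$ to be a single edge $uv$ with $C_1=\{u\}$ and $C_2=\{v\}$. Both torsos are single vertices of treewidth $0$, but $\torso(G,\{u,v\})$ is an edge of treewidth $1$. Splitting $K_{a+b}$ into cliques of sizes $a$ and $b$ shows that $w_1+w_2+1$ is tight in general. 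The correct bound is $\tw(\torso(G,C))\le\sum_{i=1}^r(\tw(\torso(G,C_i))+1)-1$, which I believe is the form of Lemma 2.8 in Marx, O'Sullivan and Razgon. The paper gives no proof of its own, only that citation, and appears to have dropped the $+1$ terms. This is harmless for its applications, which only need $2^{O(e\ell)}$-type bounds. Your fallback of citing the printed statement as ``verbatim'' from the source would therefore cite a false inequality.

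With the corrected bound, your first construction is the right proof, and it meets your own criterion. The bags $(B'_y\cap K)\cup N(K)$ pick up at most $|N(K)|\le w_1+1$ extra vertices, so they have size at most $(w_1+1)+(w_2+1)$. What remains is to check that the result is a tree decomposition of $\torso(G,C_1\cup C_2)$:
(i) A torso edge with both ends in $C_1$, or both ends in $C_2\cap K$, is an edge of $\torso(G,C_1)$, resp.\ $\torso(G,C_2)$. This holds because a path internally avoiding $C$ also internally avoids each $C_i$.
(ii) A torso edge from $u\in C_1$ to $v\in C_2\cap K$ has its path interior inside $G-C_1$, hence inside $K$. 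So $u\in N(K)$, and the edge is covered by any bag of the copy attached for $K$ that contains $v$.
(iii) There are no torso edges between $C_2\cap K$ and $C_2\cap K'$ for distinct components $K,K'$ of $G-C_1$, since such a path would lie entirely in $G-C_1$.
(iv) Each $u\in C_1$ occupies a connected subtree, because every copy containing $u$ hangs off $x_K$, whose bag contains $N(K)\ni u$. Each vertex of $C_2\setminus C_1$ appears in only one copy, along a subtree of $T_2$.

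Induction on $r$ then gives $\tw(\torso(G,C))+1\le\sum_{i=1}^r(\tw(\torso(G,C_i))+1)$.
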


\begin{lemma}[{\cite[Lemma 2.9]{DBLP:journals/talg/MarxOR13}}] \label{lem:twr2.9}
    Let $C' \subseteq V(G)$ be a set of vertices and let $R_1, \ldots, R_r$ be components of $G \setminus C'$. For every $1 \leq i \leq r$, let $C'_i \subseteq R_i$ be subsets and let $C'' := C' \cup \bigcup_{i=1}^r C_i'$. Then we have
    \[\tw(\torso(G,C'')) \leq \tw(\torso(G,C')) + \max_{i=1}^r \tw(\torso(G[R_i], C_i')) + 1.\]
\end{lemma}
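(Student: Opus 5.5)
The plan is to build a tree decomposition of $\torso(G,C'')$ by gluing decompositions of the pieces. Let $w := \tw(\torso(G,C'))$ and $w_i := \tw(\torso(G[R_i],C_i'))$. Fix a tree decomposition $(T_0,\{B_x\})$ of $\torso(G,C')$ of width $w$. For each $i$, fix a tree decomposition $(T_i,\{Y_y\})$ of $\torso(G[R_i],C_i')$ of width $w_i$.

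The first step is to observe that $N(R_i) \subseteq C'$ is a clique in $\torso(G,C')$. Indeed, for any $u,v \in N(R_i)$ there is a $u$–$v$ path whose interior lies in the connected set $R_i$, which is disjoint from $C'$. Hence, by the standard clique-containment property of tree decompositions, some node $x_i$ of $T_0$ has $N(R_i) \subseteq B_{x_i}$. For each $i$ I would replace every bag $Y_y$ of $T_i$ by $Y_y \cup N(R_i)$. I would then attach $T_i$ to $T_0$ by an edge from an arbitrary node of $T_i$ to $x_i$. Since $|N(R_i)| \le |B_{x_i}| \le w+1$, the new bags have size at most $(w_i+1)+(w+1)$, which gives width at most $w + \max_i w_i + 1$.

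Next I would verify the vertex-connectivity condition. Each $v \in C'$ occurs in a connected subtree of $T_0$. If $v \in N(R_i)$, it additionally occurs in all of $T_i$, which is attached at $x_i \ni v$, so the occurrence set stays connected. Each vertex of $C_i'$ occurs only inside $T_i$, where it was already connected. The sets $C_i' \subseteq R_i$ are pairwise disjoint and disjoint from $C'$, so no vertex is counted twice.

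The main step is edge coverage. Take an edge $uv$ of $\torso(G,C'')$, witnessed by a path $P$ whose interior avoids $C''$, and in particular avoids $C'$. I would split into cases.
\begin{enumerate}
\item If $u,v\in C'$, then $P$ also witnesses $uv\in\torso(G,C')$, so some bag of $T_0$ covers it.
\item If $u\in C_i'$ and $v\in C'$, then the interior of $P$ starts in $R_i$. It cannot leave $R_i$ without touching $C'$, so the interior stays in $R_i$ and $v\in N(R_i)$. Hence $v$ is in every bag of $T_i$, including any bag containing $u$.
\item If $u,v\in C_i'$, then by the same argument all of $P$ lies in $G[R_i]$ with interior avoiding $C_i'$. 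So $uv$ is an edge of $\torso(G[R_i],C_i')$ and is covered in $T_i$.
\item If $u\in C_i'$ and $v\in C_j'$ with $i\neq j$, the path would have to pass through $C'$. This is impossible, so the case does not occur.
\end{enumerate}
This case analysis, in particular the fact that a path with interior avoiding $C'$ cannot escape the component $R_i$, is the only delicate point. The rest is routine bookkeeping.
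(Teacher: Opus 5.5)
Your proof is correct: attaching each decomposition of $\torso(G[R_i],C_i')$, with the clique $N(R_i)$ added to every bag, to a node of the decomposition of $\torso(G,C')$ whose bag contains $N(R_i)$ gives the stated width, and your case analysis covers every edge (including single-edge witness paths). The paper does not reprove this lemma; it imports it from Marx, O'Sullivan and Razgon, and your argument is the standard gluing proof of that cited result.
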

 
\begin{lemma}[Edge version of Lemma 2.11 of \cite{DBLP:journals/talg/MarxOR13}] \label{lem:paircut}
    Let $s,t$ be two vertices of graph $G$ and let $\ell$ be the minimum size of an $s$-$t$ cut. For some $e \geq 0$, let $I$ be the union of all minimal $s$-$t$ cuts having excess at most $e$ (i.e. size at most $k = \ell + e$). Then there is an $f(\ell, e) \cdot (|E(G)| +|V(G)|)$ time algorithm that returns a set $W$ that contains the endpoints of all edges in $I$ such that $\tw(\torso(G,W)) \leq g(\ell,e)$ (for some $f(\ell, e)$ and $g(\ell, e)$ such that both are $2^{O(e\ell)}$).  
\end{lemma}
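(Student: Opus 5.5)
We prove \Cref{lem:paircut} by induction on the excess $e$, following the scheme of Lemma 2.11 of \cite{DBLP:journals/talg/MarxOR13} but working with edge cuts throughout.

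\textbf{Base case $e=0$.} Apply \Cref{lem:ebase} to obtain the chain $X_1 \subset \cdots \subset X_q$ with $|\delta(X_i)|=\ell$, whose boundaries cover every minimum cut. Let $W$ be the set of endpoints of $\bigcup_i \delta(X_i)$, together with $s$ and $t$. The regions $X_{i+1}\setminus X_i$ are arranged linearly: any path of $G$ internally avoiding $W$ stays inside one region $X_{i+1}\setminus X_i$, or inside $X_1$, or inside $V\setminus X_q$. Hence $\torso(G,W)$ is a "path of layers". Each layer consists of the at most $2\ell$ endpoints of $\delta(X_i)$ and $\delta(X_{i+1})$, so a path decomposition with bags $W\cap(\text{endpoints of }\delta(X_i)\cup\delta(X_{i+1}))$ has width $O(\ell)$. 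For running time, the $\ell$ augmentations and the strongly connected component computation take $O(\ell(|V|+|E|))$.

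\textbf{Inductive step $e>0$.} Let $C$ be a minimal cut of size $\ell+e$ and $R$ its source side. I plan to use the MOR alternation: compute the chain $\mathcal{X}$ from the base case. A minimal non-minimum cut $C$ must contain an edge with both endpoints inside some region $Z_i = X_{i+1}\setminus X_i$ (or in $X_1$, or in $V\setminus X_q$); otherwise $C$ is made up of edges of the boundaries and, by minimality, equals one $\delta(X_i)$.

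The key submodularity step is as follows. Let $i$ be maximal with $X_i\subseteq R$. Then $R\cup X_i$ and $R\cap X_{i+1}$ give cuts whose sizes are related via $|\delta(R\cap X_{i+1})|+|\delta(R\cup X_{i+1})|\le |\delta(R)|+\ell$. From this, the part of $C$ inside $Z_i$, together with $\delta(X_i)$, forms a minimal cut in the auxiliary graph $G_i$. This graph is obtained from $G[Z_i\cup \partial$-endpoints$]$ by contracting $X_i$ into a new source $s_i$ and $V\setminus X_{i+1}$ into a new sink $t_i$.

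Which cut ends up with the lower excess? In $G_i$ the minimum $s_i$-$t_i$ cut is at least $\ell$. I would first try to show that if the minimum cut in $G_i$ is exactly $\ell$, then $C$ restricted to $G_i$ has excess at most $e$, but the flow structure forces a strictly better chain, and we recurse with $\ell' \ge \ell+1$ and $e' \le e-1$. This mirrors MOR's $(\ell,e)\mapsto(\ell+1,e-1)$ potential. The case analysis for this dichotomy, either minimum cut larger or excess smaller, is the step I expect to be hardest. In particular, it is delicate to ensure that the edges of $\delta(X_i)$, which are shared, do not inflate the excess, and that multiple regions being cut simultaneously is handled: $C$ may touch several $Z_i$, each with partial excess $e_i$ summing to at most $e$. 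If the direct statement fails, I would fall back to MOR's version, which re-parametrizes by the excess spent per region: since $\sum_i e_i\le e$, each region has $e_i\le e-1$ unless only one region is touched, and the latter case can be treated by the $\ell+1$ shift.

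\textbf{Combining.} Take $W = W_0\cup\bigcup_i W_i$, where $W_0$ is the base-case set and $W_i\subseteq Z_i$ comes from the recursive call on $G_i$, with the contracted terminals removed. The regions $Z_i$ are unions of components of $G\setminus W_0$, so \Cref{lem:twr2.9} together with \Cref{prop:torsosg} gives $\tw(\torso(G,W))\le O(\ell)+\max_i g(\ell',e')+1$. Since $\ell'+e' \le \ell+e$ and the recursion depth is at most $e$ in the second coordinate, the recurrence solves to $2^{O(e\ell)}$; the stated bound only needs some function of this shape. For running time, the regions $Z_i$ are disjoint, so each recursion level costs linear time times $f$, which yields $f(\ell,e)(|V|+|E|)$ with $f=2^{O(e\ell)}$ from the branching over the boundary interactions.
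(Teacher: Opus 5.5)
\textbf{There is a genuine gap in the inductive step.}

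Your base case matches the paper's. $W_0$ is the set of endpoints of $\bigcup_i\delta(X_i)$, which is the paper's $\bigcup_i S_i$ with $S_i=N(X_i)\cup\partial(X_i)$, and the bags $S_i\cup S_{i+1}$ give width $O(\ell)$. The problems are all in the inductive step.

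\emph{Your $G_i$ cannot make the excess drop.} Contracting $X_i$ into $s_i$ and $V\setminus X_{i+1}$ into $t_i$ gives a graph whose minimum $s_i$-$t_i$ cut is exactly $\ell$. It is attained by $\delta(X_i)$, which is itself a minimum cut of $G$. So there is no $\ell'\ge\ell+1$. The replacement argument that would drive the excess down also breaks: you would swap the local part of $C$ for a minimum cut of the auxiliary graph and note the result is not a minimum cut of $G$, but here it can be one.

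\emph{$C$ need not induce a cut of $G_i$ at all.} With $i$ maximal such that $X_i\subseteq R$, the source side $R$ can still contain vertices outside $X_{i+1}$, and $C$ can cut in many regions. The contraction therefore imposes a terminal assignment on the boundary of $Z_i$ that $C$ does not respect. Also, ``the part of $C$ inside $Z_i$ together with $\delta(X_i)$'' is never a minimal cut, since $\delta(X_i)$ alone already separates $s_i$ from $t_i$.

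\emph{The fallback does not close the gap.} Splitting $e$ among regions leaves exactly the single-region case unresolved, and the promised ``$\ell+1$ shift'' for that case does not exist.

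\emph{Your structural claim is false.} Take vertices $s,y,v,u,z,t$ and edges $sy,sv,su,yv,vu,vt,uz,ut,zt$. Then $\ell=3$, and \Cref{lem:ebase} produces the chain $\{s\}\subset\{s,y\}\subset\{s,y,v\}\subset\{s,y,v,u\}\subset\{s,y,v,u,z\}$, whose boundaries cover all nine edges. Yet $\delta(\{s,u,z\})$ is a minimal cut of size $5$ that equals no $\delta(X_i)$. This is harmless only because $W_0$ already contains all such endpoints.

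\textbf{The missing idea.} The recursive instances should be determined by how $C$ splits the bounded boundary of a layer, not by the chain.
\begin{itemize}
\item Let $L_i=X_i\setminus(X_{i-1}\cup S_{i-1}\cup S_i)$. It is disjoint from $W_0$, and its neighbourhood lies in $S_{i-1}\cup S_i$, which has at most $4\ell$ vertices.
\item The paper tries all at most $3^{4\ell}$ pairs $(A,B)$ of disjoint subsets of $S_{i-1}\cup S_i$. The intended pair has $A$ equal to the boundary vertices reachable from $s$ in $G-C$ and $B$ equal to the rest.
\item It builds $G_{i,A,B}$ from the edges of $G$ incident to $L_i$, attaching new terminals $a$ and $b$ to $A$ and $B$ through $k+1$ undeletable connections.
\item The edges $C_2$ of $C$ incident to $L_i$ then form a minimal $a$-$b$ cut.
\item For a minimum $a$-$b$ cut $C_2'$, the set $(C\setminus C_2)\cup C_2'$ is still an $s$-$t$ cut of $G$, and it uses an edge incident to $L_i$.
\item Every minimum cut of $G$ lies in $\bigcup_j\delta(X_j)$, whose endpoints are in $W_0$. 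So this cut has size at least $\ell+1$, which gives $|C_2|-|C_2'|\le e-1$ regardless of how many layers $C$ touches.
\end{itemize}

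The induction is therefore on $e$ alone. Taking the union over the $3^{4\ell}$ guesses (\Cref{lem:twunion}) and over the layers (\Cref{lem:twr2.9}) gives $g(\ell,e)\le 4\ell+3^{4\ell}g(\cdot,e-1)+1=2^{O(e\ell)}$. That branching factor is where the $2^{O(e\ell)}$ bound comes from. Your scheme has a single auxiliary graph per region and no such branching, so your claimed bound is not actually derived.
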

\begin{proof}
    We prove the lemma by induction on the excess $e$. Start by invoking \Cref{lem:ebase} to get a collection $\mathcal{X} = \{X_1, \ldots, X_q\}$. We define $S_i := N(X_i) \cup \partial(X_i)$ for $1 \leq i \leq q$. Additionally set $X_0 = \emptyset$, $X_{q+1} = V(G)\setminus\{t\}$, $S_0 = \{s\}$ and $S_{q+1} := \{t\}$.
    Then for $1 \leq i \leq q+1$ let $L_i := X_i \setminus (X_{i-1} \cup S_{i-1} \cup S_i)$.
    Observe that $|S_i| \leq 2\ell$ and $S_i$ each is a $s$-$t$ separator of $G$.

    Now for  $1 \leq i \leq q+1$ and disjoint non-empty $A,B \subseteq S_{i-1} \cup S_i$ we define $G_{i,A,B}$ as the graph $G[L_i \cup A \cup B]$ with new vertices $a$ and $b$ and $k+1$ subdivided edges from $a$ to every vertex in $A$, and similar for $b$ and $B$. This ensures that these edges are never part of a minimal $a-b$ cut of size at most $k$, and hence all such cuts are in fact $A-B$ cuts. 

    \begin{claim} \label{claim:inductivemincut}
        If an edge $c = \{x,y\}$ with $y \in L_i$ is in $I$, then there are disjoint non-empty $A,B \subseteq S_{i-1} \cup S_i$ such that $c$ is part of a minimal $a-b$ cut $C_2$ in $G_{i,A,B}$ of size at most $k$ and excess at most $e-1$.
    \end{claim}
    \begin{claimproof}
        As $c \in I$, there is a minimal $s$-$t$ cut $C$ of size at most $k$ that contains $c$. Let $C_2 := \{ \{a,b\} \in C \mid \{a,b\} \cap L_i \neq \emptyset\}$ (i.e. the edges of $C$ with at least one endpoint in $L_i$) and $C_1 := C \setminus C_2$. Partition $S_{i-1} \cup S_i$ into sets $A$ and $B$ with vertices reachable from $s$ in $G-C$ and with those not reachable from $s$ in $G-C$, respectively. 
        
        We first argue that $A$ and $B$ are non-empty. As $C$ is a minimal cut, there must be an $s$-$t$ path $P$ in $G$ that intersect $C$ exclusively in $c$. Note that $S_{i-1}$ separates $s$ from $y$ and $S_i$ separates $y$ from $t$. Thus the prefix from the path $P$ up till $x$ must intersect at least one $u \in S_{i-1}$, and since this prefix is entirely disjoint from $C$ there is a $s-u$ path in $G-C$, ergo $u \in A$. Next consider a suffix of $P$ starting from $y$, which gives a $y-t$ path, disjoint from $C$, which must intersect $S_i$. Thus there is a $w \in S_i$ for which there is a $w-t$ path in $G-C$. If $w$ was reachable from $s$ in $G-C$ this would then imply that there is a $s$-$t$ path in $G-C$ which contradicts $C$ being a $s$-$t$ cut. Hence it must be the case that $w$ is not reachable from $s$ in $G-C$ and thus $w \in B$.

        Next, we show that $K_2$ is an $a-b$ cut in $G_{i,A,B}$. For sake of a contradiction, let $P$ be an $a-b$ path in $G_{i,A,B} - C_2$. By construction of $G_{i,A,B}$ this path contains a minimal $A-B$ subpath, meaning a subpath of which the interior is contained in $L_i$. This subpath must clearly exist in $G-C_2$ and thus by definition of $C_2$ also in $G-C$. However, by definition of $A$ there exists an $s-A$ path in $G-C$, which could consequently be extended with this $A-B$ subpath, which results in an $s-B$ path in $G-C$, contradicting the definition of $B$. Hence $C_2$ must be an $a-b$ cut.

        We continue to prove that $C_2$ is a \emph{minimal} $a-b$ cut in $G_{i,A,B}$.
        We again use a proof by contradiction. Assume that $C_2 \setminus \{\{p,q\}\}$ is still an $a-b$ cut for some $\{p,q\} \in C_2$. Since $C$ is a minimal $s$-$t$ cut of $G$, there must exist a $s$-$t$ path $P$ in $G$ of which only intersections $C$ in $\{p,q\}$. Assume without loss of generality that $q \in L_i$. Then for similar reasons as above $P$ has to intersect both $A$ and $B$. Let $P'$ be a minimal $A-B$ subpath of $P$. Then by construction all internal vertices of $P'$ belong to $L_i$,
        hence $P'$ is a $A-B$ path in $G_{i,A,B} - (C_2\setminus\{\{p,q\}\}$, which can trivially be extended to an $a-b$ path, contradicting that $C_2 \setminus \{\{p,q\}\}$ is an $a-b$ cut.

        It remains to show that $C_2$ has excess at most $e-1$ (with regards to being an $a-b$ cut in $G_{i,A,B}$). Let $C_2'$ be a minimum $a-b$ cut in $G_{i,A,B}$. We claim that $C' := C_1 \cup C_2'$ is an $s$-$t$ cut in $G$. If this is not the case, then there must be a $s$-$t$ path $P$ in $G - C'$. This path must intersect $C_2 \setminus C_2'$ and thus must visit $L_i$. We must then again have that $P$ visits both $A$ and $B$. Take a minimal $A-B$ subpath $P'$ of $P$, which by construction is an $A-B$ path in $G_{i,A,B} - C_2'$ and can be trivially extended to an $a-b$ path. This contradicts $C_2'$ being a $a-b$ cut, hence $C'$ must be an $s$-$t$ cut. Since $C'$ contains an edge with an endpoint in $L_i$ it cannot be of minimum size, so $|C'| > \ell$. Putting this together gives us that
        $|C_2| - |C_2'| = (|C_1| + |C_2|) - (|C_1| + |C_2'|) = |C| - |C'| < k - \ell = e$. 
        This concludes the proof of the claim.
    \end{claimproof}

    Having proven the claim, we can now present our algorithm for computing such a set $C'$. Remember that we use induction on the excess $e$.   
    First, we define $W_0 := \bigcup_{i=1}^q S_i$. In case $e=0$ we set $W = W_0$ and we are done. Note that we can define a tree decomposition of $\torso(G,W_0)$ of width at most $4\ell$ by taking as bags the sets $S_i \cup S_{i+1}$ for $1 \leq i < q$. By \Cref{lem:ebase} this set contains all endpoints of minimum $s$-$t$ cuts in $G$. 
    If $e>0$, then for every $1 \leq i \leq q+1$ and subsets $A,B$ of $S_{i-1} \cup S_i$, we use the induction hypothesis to compute a set $W_{i,A,B}$ containing all minimal $a-b$ cuts of excess at most $e-1$ in $G_{i,A,B}$. We set $W = W_0 \cup \bigcup W'_{i,A,B}$ where $W'_{i,A,B} := W_{i,A,B} \setminus (A \cup B \cup \{a,b\})$ (note that $A,B \subseteq C_0$ already). By \Cref{claim:inductivemincut}, $W$ contains the endpoints of all edges in minimal $s$-$t$ cuts of size at most $k$ and excess at most $e$. 

    Next we want to bound the treewidth of $\torso(G,W)$. Observe that by the induction hypothesis $\tw(\torso(G_{i,A,B},W_{i,A,B})) \leq g(\ell,e-1)$ and by construction  $W'_{i,A,B} \subseteq L_i$. 
    Observe that \\ $\torso(G_{i,A,B}[L_i], W_{i,A,B} \cap L_i) =  \torso(G[L_i], W'_{i,A,B})$ is a subgraph of $\torso(G_{i,A,B}, W_{i,A,B})$ by \Cref{prop:torsosg}, and thus $\tw(\torso(G[L_i], W'_{i,A,B})) \leq \tw(\torso(G_{i,A,B}, W_{i,A,B}))$. Consider the set $W'_i := \bigcup W'_{i,A,B}$. Then by using \Cref{lem:twunion}, noting  that for any $i$ we find at most $3^{4\ell}$ sets $A,B$, we get $\tw(\torso(G[L_i], W'_i)) \leq 3^{4\ell} \cdot g(\ell, e-1)$. Finally, by \Cref{lem:twr2.9} we can conclude that $\tw(\torso(G,W)) \leq \tw(\torso(G, W_0)) + \max_i \tw(\torso(G[L_i], W'_i)) + 1 \leq 4\ell + 3^{4\ell} \cdot g(\ell, e-1) + 1$. This means we can choose $g$ such that $g(\ell ,e) = 2^{O(e\ell)}$. 

    The running time remains essentially unchanged compared to the original proof, and hence we refer to \cite{DBLP:journals/talg/MarxOR13} for this.
\end{proof}

\thmtwreduction*

\begin{proof}
    Consider each distinct pair $s,t \in T$ and compute a set $W_{s,t}$ containing all minimal $s$-$t$ cuts of size at most $k$ using \Cref{lem:paircut} and adding $s$ and $t$ to it (in case there is no $s$-$t$ cut of size at most $k$, $W_{s,t} = \{s,t\}$). Note that adding $s$ and $t$ to $W_{s,t}$ increases the treewidth of the torso induced by it by at most two. Let $W$ be the union of all these sets $W_{s,t}$, then the torso of $W$ in $G$ has by \Cref{lem:twunion} treewidth bounded by $|T|^2 2^{O(k^2)}$.
\end{proof}

\section{Reducing CMSO model-checking on Eulerian digraphs to arc-unbreakable instances}\label{sec:reduce-to-unbreakable}

The goal of this section is to prove \Cref{thm:cmsofpt}. 

\thmcmsofpt

We follow the general approach of Lokshtanov, Ramanujan, Saurabh and Zehavi~\cite{DBLP:conf/icalp/LokshtanovR0Z18} where possible. That being said, in \cite{DBLP:conf/icalp/LokshtanovR0Z18} the concept of \emph{boundaried graphs} plays a fundamental role. Roughly speaking, this is a graph where a set of vertices is marked as the boundary, mainly with the purpose of being able to `glue' two compatible boundaried graphs to each other.
However, in our setting of Eulerian digraphs and arc-cuts, it is more suitable to consider a set of arcs to be the boundary instead. One fundamental result on traditional boundaried graphs is that when considering those graphs with a fixed, finite boundary size, there is only a finite amount of different behaviours they can exhibit with regards to a fixed \cmso property.
The goal of the first half of this section is to prove that the same holds for arc-boundaried graphs.  
We achieve this by showing that the arc-boundaried setting can be reduced to the vertex-boundaried setting, which allows us to lift this finite state property to arc-boundaried graphs. Specifically, we prove this for (arc-boundaried) structures, which are a generalisation of digraphs useful when modelling \cmso properties. 

After showing that these structures have finite state, we can prove \Cref{thm:cmsofpt}. This is done by recursion on the size of the structure. If the structure is small (with regards to $\tilde{s}(k)$) then it is trivially unbreakable. Otherwise, a large yet breakable instance can be partitioned into two large parts separated by a small cut. We can take one of these parts in isolation, regard it as an arc-boundaried structure, and recursively find for it a small representative (meaning an arc-boundaried structure that behaves identically with regards to our \cmso property), which is guaranteed to exist by the finite state property. We then substitute in the original instance the chosen part by its representative to obtain an equivalent instance of smaller size, on which we can recurse as long as the structure remains breakable. We ensure that the instance remains Eulerian throughout by picking a representative that is internally balanced and has the same imbalance along the boundary as the part of the graph that it replaces.

\subsection{The framework}

\subparagraph{Arc-boundaried Digraphs.}
We start by introducing all the necessary definitions.

\begin{definition}
    A \emph{digraph with half-arcs} $D$ is a tuple $(V(D),E(D),H(D))$ where $(V(D),\allowbreak E(D))$ describes a directed graph and $H(D)$ is a relation on $V(D) \cup \{ \bot \}$  where for each $(x,y) \in H(D)$, exactly one of $x$ or $y$ is in $V(D)$ and the other is equal to $\bot$. The elements of $H(D)$ are called \emph{half-arcs}. 
\end{definition}

In a digraph $D$ with half-arcs, the in-degree $d_D^-(v)$ (out-degree $d_D^+(v)$) of a vertex $v$ is equal to the number of elements of $E(D) \cup H(D)$ for which it is the head (tail). For these graphs, we still say it is \emph{balanced} if $d_D^-(v) = d_D^+(v)$ for all $v \in V(D)$. We assign to each half-arc a type of either $\head$ or $\tail$ based on whether its tail or head is $\bot$, respectively. For a digraph with half-arcs $D$ and $U \subseteq V(D)$ we define the induced subgraph $D[U]$ to have vertex set $U$, as arcs all edges of $D$ with both endpoints in $U$ and as half-arcs those in $H(D)$ incident to $U$, as well as all the arcs in $\delta(A)$ that are turned into half-arcs by setting occurrences of vertices outside of $U$ to $\bot$.

 We continue by introducing the concept of arc-boundaries, which motivates why we are interested in half-arcs.

\begin{definition}
    An \emph{arc-boundaried digraph} is a digraph with half-arcs $D$ together with an injective labelling $\lambda : H(D) \to \mathbb{N}$. We refer to $H(D)$ as the \emph{boundary} and say that the \emph{label set} $L((D, \lambda))$ of the arc-boundaried digraph is the set $\{ \lambda(e) \mid e \in H(D)\}$. 
\end{definition}

When the labelling is clear from the context, we may sometimes refer to the arc-boundaried graph $(D,\lambda)$ as simply $D$. 
We say that two arc-boundaried graphs $(D_1,\lambda_1)$ and $(D_2, \lambda_2)$ are \emph{compatible} if $ L(D_1) = L(D_2)$ and for each $i \in L(D_1)$, the half-arcs $\lambda^{-1}_{1}(i)$ and $\lambda^{-1}_{2}(i)$ have distinct types. For two half-arcs $e_1$ and $e_2$ with distinct types (i.e. one is $\head$ and the other is $\tail$), we define $e_1 \oplus e_2 = (u_1,v_2)$ if $e_1 = (u_1, \bot)$ and $e_2 = (\bot ,v_2)$ and $e_1 \oplus e_2 =  (u_2,v_1)$ if $e_1 = (\bot, v_1)$ and $e_2 = (u_2, \bot)$.

\begin{definition}
    Let $D_1$ and $D_2$ be two compatible arc-boundaried digraphs. Then $D_1 \oplus D_2$ is the digraph (not-boundaried and without half-arcs) obtained by taking the disjoint union of the digraphs of $D_1$ and $D_2$ and for each $i \in L(D_1)$, replacing the half-arcs $\lambda_1^{-1}(i)$ and $\lambda_2^{-1}(i)$ by $\lambda_1^{-1}(i) \oplus \lambda_2^{-1}(i)$. 
\end{definition}

To help with notation, when gluing two arc-boundaried digraphs $D_1$ and $D_2$ we define a mapping $h_{D_1 , D_2}$ that maps each arc or half-arc of $D_1$ or $D_2$ to the arc in $D_1 \oplus D_2$ that it gives rise to. Hence for $e \in E(D_1) \cup E(D_2)$ we get $h_{D_1 , D_2}(e) = e$ and for $i \in L(D_1)$ we get $h_{D_1 , D_2}(\lambda_1^{-1}(i)) = h_{D_1 \oplus D_2}(\lambda_2^{-1}(i)) = \lambda_1^{-1}(i) \oplus \lambda_2^{-1}(i)$.
The following observation will be pivotal for guaranteeing that the reduced instance in \Cref{thm:cmsofpt} is Eulerian. 

\begin{observation} \label{obs:abdbalance}
    Let $D_1$ and $D_2$ be two compatible arc-boundaried digraphs. Then $D_1 \oplus D_2$ is balanced if and only if $D_1$ and $D_2$ are balanced. 
\end{observation}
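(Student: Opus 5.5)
The plan is a vertex-by-vertex degree comparison. Since $V(D_1 \oplus D_2)$ is the disjoint union $V(D_1) \cup V(D_2)$, it suffices to show that every vertex $v$ of $D_j$ ($j \in \{1,2\}$) has
\[
d^-_{D_1 \oplus D_2}(v) = d^-_{D_j}(v) \quad\text{and}\quad d^+_{D_1 \oplus D_2}(v) = d^+_{D_j}(v).
\]
Once these equalities hold, $v$ is balanced in $D_1 \oplus D_2$ exactly when it is balanced in $D_j$. Hence $D_1 \oplus D_2$ is balanced if and only if both $D_1$ and $D_2$ are.

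To prove the degree equalities I will use the map $h_{D_1,D_2}$, restricted to $E(D_j) \cup H(D_j)$. I will show that it is a bijection onto the set of arcs of $D_1 \oplus D_2$ incident to $V(D_j)$, and that it preserves, for every $v \in V(D_j)$, whether $v$ is the head or the tail.

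\begin{itemize}
\item For an ordinary arc $e \in E(D_j)$, we have $h_{D_1,D_2}(e) = e$, so head and tail are unchanged.
\item For a half-arc $e \in H(D_j)$ with $\lambda_j(e) = i$, compatibility guarantees that the partner half-arc with label $i$ in the other structure has the opposite type. By the definition of $\oplus$ on half-arcs:
  \begin{itemize}
  \item if $e = (u,\bot)$, then $e \oplus e' = (u,v')$, so $u$ remains the tail;
  \item if $e = (\bot,v)$, then $e \oplus e' = (u',v)$, so $v$ remains the head.
  \end{itemize}
  In either case the endpoint in $V(D_j)$ keeps its role, while the $\bot$ end is replaced by a vertex of the other structure.
\end{itemize}

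Injectivity of the labellings and the equality $L(D_1) = L(D_2)$ ensure that each label produces exactly one glued arc. Every arc of $D_1 \oplus D_2$ is either an original arc of $D_1$ or $D_2$, or such a glued arc. Therefore the arcs at $v$ in $D_1 \oplus D_2$ correspond one-to-one to the elements of $E(D_j) \cup H(D_j)$ at $v$, with the same orientation relative to $v$.

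One point needs care, and it is the only step I expect to require attention: a glued arc might seem to contribute to the degree of a vertex twice. This cannot happen, because its two endpoints lie in different structures, and each endpoint receives exactly the single contribution that its own half-arc already made in $D_j$. There is no genuine obstacle beyond this bookkeeping.
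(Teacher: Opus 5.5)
Your proof is correct. The paper states this as an observation without proof, and your argument is exactly the reasoning it leaves implicit: gluing turns each half-arc $(\bot,v)$ or $(u,\bot)$ of $D_j$ into a full arc that keeps $v$ as head or $u$ as tail, compatibility together with $L(D_1)=L(D_2)$ and injectivity of the labellings makes every half-arc be glued exactly once, and so every vertex keeps its in- and out-degree.
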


\subparagraph{Structures and Properties.}
Next, we generalise the notion of arc-boundaried digraphs to \emph{arc-boundaried structures}. First, we introduce the standard notion of a structure, following the exposition in \cite{DBLP:conf/icalp/LokshtanovR0Z18}.

\begin{definition}
    A \emph{structure} $\alpha$ is a tuple with as first element a (directed) graph denoted by $D_\alpha$, and each remaining element is either a subset of $V(D)$, a subset of $E(D)$, a vertex in $V(D)$ or an edge in $E(D)$. The \emph{arity} of the structure is the number of elements in the tuple. 
\end{definition}
Given a structure $\alpha$ of arity $p$ and a $i \in [p]$, we let $\alpha[i]$ refer to $i$'th element in the tuple $\alpha$ and hence $\alpha[i] = D_\alpha$. When $S$ is either a single vertex or edge, or a vertex or edge set, then we can append $S$ to $\alpha$, denoted by $\alpha \diamond S$, to obtain a new structure of arity $p+1$ such that $(\alpha \diamond S)[p+1] = S$ and for all $i \in [p]$ we have $(\alpha \diamond S)[i] = \alpha[i]$. 
We now define the type of a structure.

\newcommand{\graph}{\mathrm{graph}}
\newcommand{\vertex}{\mathrm{vertex}}
\newcommand{\vertexset}{\mathrm{vertex\, set}}
\newcommand{\edge}{\mathrm{edge}}
\newcommand{\edgeset}{\mathrm{edge\, set}}
\newcommand{\ttt}{\mathrm{true}}
\newcommand{\fff}{\mathrm{false}}

\begin{definition}
    Let $\alpha$ be a structure of arity $p$. The \emph{type} of $\alpha$, denoted by $\type(a)$, is a tuple of arity $p$ such that $\type(\alpha)[1] = \graph$ and for every $i \in \{2,3,\ldots,p\}$, define
\[
\type(\alpha)[i]
=
\begin{cases}
    \vertex,
        & \text{if } \alpha[i] \in V(D_\alpha), \\
    \edge,
        & \text{if } \alpha[i] \in E(D_\alpha), \\
    \vertexset,
        & \text{if } \alpha[i] \subseteq V(D_\alpha), \\
    \edgeset,
        & \text{if } \alpha[i] \subseteq  E(D_\alpha)
\end{cases}
\]
\end{definition}

\begin{definition}
    A \emph{property} is a function from the set of all structures to $\{\ttt, \fff \}$. 
\end{definition} 

\emph{Counting Monadic Second Order Logic} (\cmso) provides a syntax to describe properties on graphs. Its syntax includes the standard logical connectives $\neg$, $\wedge$, $\vee$ and $\implies$. It also includes the following atomic formulas. 
\begin{enumerate}
    \item $u \in X$ where $v$ is a vertex (edge) and $V$ is a vertex set (edge set).
    \item $\textbf{tail}(a,v)$ where $a$ is an edge variable and $v$ is a vertex variable. It holds if and only if $v$ is the tail of $a$.
    \item $\textbf{head}(a,v)$ where $a$ is an edge variable and $v$ is a vertex variable. It holds if and only if $v$ is the head of $a$.
    \item $u=v$ for any two variables $u$ and $v$ of the same type.
    \item $\textbf{card}_{q,r}(X) = \textbf{true}$ where $X$ is a set variable and $q$ and $r$ are integers. It holds if and only if $|X| \equiv q  \mod r$
\end{enumerate}
Finally, we can use existential and universal quantifiers over vertices, edges and vertex and edge sets. For example, if $\psi$ is a sentence with a free vertex variable $v$, then $\forall v \in V: \psi(v)$ is a \cmso
sentence that is true if and only if $\psi$ holds when substituting $v$ by each vertex in the graph $D$ to which the formula is applied. We remark that in the context of undirected graphs the predicates for tail and head are usually replaced by an incidence predicate instead. 

A \cmso-sentence $\psi$ can contain several free variables. Each free variable $x$ has a rank $r_x$ associated with it such that $r_x$ is an integer greater than 1. 
Evaluating a \cmso-formula on a structure $\alpha$ involves instantiating the free variables of $\psi$ by the elements in the tuple $\alpha$.
We say that a structure $\alpha$ \emph{matches} a \cmso-formula $\psi$ if the arity of $\alpha$ is at least $\max r_x$ and for each free variable $x$ the type of $x$ in the \cmso-sentence corresponds with $\type(\alpha)[r_x]$. Finally, each \cmso-sentence $\psi$ naturally defines a property $\sigma_\psi$ on structures by mapping structures that do not match it to false, and evaluating $\psi$ on the structure otherwise. \cmsop{$\psi$} describes the decision problem where, given an input structure $\alpha$, one needs to decide whether $\sigma_\psi(\alpha)$ is true.

\subparagraph{Arc-boundaried Structures.}
Now that we have defined the standard notion of structures, we can combine it with our concept of arc-boundaried digraphs. Intuitively, one should consider arc-boundaried structures as a means to allow us to split and glue structures, similar as to how an arc-boundaried digraph allows us to split and glue digraphs. We now give the formal definitions.

\begin{definition}
    An \emph{arc-boundaried structure} $\alpha$ is a tuple with as first element an arc-boundaried digraph denoted by $D_\alpha$, and each remaining element is either a subset of $V(D)$, a subset of $E(D) \cup H(D)$, a vertex in $V(D)$, an element in $E(D) \cup H(D)$, or the symbol $\star$. The \emph{arity} of the arc-boundaried structure is the number of elements in the tuple. 
\end{definition}

\begin{definition}
    Let $\alpha$ be an arc-boundaried structure of arity $p$. The \emph{type} of $\alpha$, denoted by $\type(\alpha)$, is a tuple of arity $p$ such that $\type(\alpha)[1] = \mathrm{arc\text{-}boundaried\, digraph}$ and for every $i \in \{2,3,\ldots,p\}$, define
\[
\type(\alpha)[i]
=
\begin{cases}
    \mathrm{vertex},
        & \text{if } \alpha[i] \in V(D_\alpha), \\
    \mathrm{edge},
        & \text{if } \alpha[i] \in E(D_\alpha) \cup H(D_\alpha), \\
    \mathrm{vertex\, set},
        & \text{if } \alpha[i] \subseteq V(D_\alpha), \\
    \mathrm{edge\, set},
        & \text{if } \alpha[i] \subseteq  E(D_\alpha) \cup H(D_\alpha), \\
    \star,
        & \text{otherwise}.
\end{cases}
\]
\end{definition}

\begin{definition}[Compatibility]
Two arc-boundaried structures $\alpha$ and $\beta$ are \emph{compatible}, denoted by $\alpha \sim_c \beta$, if
\begin{itemize}
    \item $\alpha$ and $\beta$ have the same arity $p$,
    \item $D_\alpha$ and $D_\beta$ are compatible,
    \item for every $i \in \{2, \ldots, p\}$, one of the following conditions holds
\begin{enumerate}
    \item $\type(\alpha)[i] = \type(\beta)[i] \notin \{\star, \mathrm{vertex}\}$;
    \item $\type(\alpha)[i] \in \{\mathrm{vertex}, \mathrm{edge}\}$ and $\type(\beta)[i]=\star$;
    \item $\type(\beta)[i]\in\{\mathrm{vertex},\mathrm{edge}\}$ and $\type(\alpha)[i]=\star$.
\end{enumerate}
\item for every $i \in \{2, \ldots, p\}$ such that both $\alpha[i]$ and $\beta[i]$ are edges, $\alpha[i] \in H(D_\alpha)$, $\beta[i] \in H(D_\beta)$ and $\lambda_\alpha(\alpha[i]) = \lambda_\beta(\beta[i])$
\end{itemize}
\end{definition} 

Compatibility of arc-boundaried structures captures when two such structures can be combined to form a single structure. This gluing operation is now defined.

\begin{definition}[Gluing by $\oplus$]
Given two compatible arc-boundaried structures $\alpha$ and $\beta$ of arity $p$, the operation $\alpha\oplus\beta$ is defined as follows.
\begin{itemize}
\item $\alpha\oplus\beta$ is a structure $\gamma$ of arity $p$.
\item $D_\gamma = D_\alpha \oplus D_\beta$.
\item For every $i \in [2, \ldots p]$:
\begin{itemize}
\item if $\alpha[i]$ and $\beta[i]$ are vertex sets, $\gamma[i]=\alpha[i]\cup\beta[i]$; 
\item if $\alpha[i]$ and $\beta[i]$ are edge sets, $\gamma[i]= \{  h_{D_\alpha, D_\beta}(e) \mid e \in \alpha[i] \cup \beta[i]\} $
\item if $\alpha[i]$ and $\beta[i]$ are edges, $\gamma[i]=\alpha[i] \oplus\beta[i]$; 
\item if $\alpha[i]=\star$, $\gamma[i]=\beta[i]$ if $\beta[i]$ is a vertex and $\gamma[i]=h_{D_\alpha, D_\beta}(\beta[i])$ if it is an edge
\item if $\beta[i]=\star$, $\gamma[i]=\alpha[i]$ if $\alpha[i]$ is a vertex and $\gamma[i]=h_{D_\alpha, D_\beta}(\alpha[i])$ if it is an edge
\end{itemize}
\end{itemize}
\end{definition}
We say that an (arc-boundaried) structure $\alpha$ is Eulerian if $D_\alpha$ is balanced. Let $p \in \mathbb{N}$ and $I \subseteq \mathbb{N}$, we then denote with $\mathcal{A}^p_{\subseteq I}$ the set of all arc-boundaried structures of arity $p$ and for which the arc-boundaried digraph has as label set a subset of $I$. In a similar vein, we let $\mathcal{B}_{\subseteq I}^p$ be the set of all Eulerian arc-bounded structures of arity $p$ with label set a subset of $I$. 
Finally, similar as \Cref{obs:abdbalance}, we make note of the following.
\begin{observation}
    A structure $\alpha \oplus \beta$ is Eulerian if and only the arc-boundaried structures $\alpha$ and $\beta$ are.
\end{observation}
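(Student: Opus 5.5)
The claim to prove is the last observation: a glued structure $\alpha \oplus \beta$ is Eulerian if and only if both arc-boundaried structures $\alpha$ and $\beta$ are Eulerian. Being Eulerian refers only to the underlying digraph. By definition of gluing, $D_{\alpha\oplus\beta} = D_\alpha \oplus D_\beta$, and the remaining tuple entries are irrelevant to balance. So the statement reduces to \Cref{obs:abdbalance} applied to the compatible arc-boundaried digraphs $D_\alpha$ and $D_\beta$; compatibility of the digraphs is part of the definition of $\alpha \sim_c \beta$. The plan is therefore to verify \Cref{obs:abdbalance} directly and then read off the structure version.

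For \Cref{obs:abdbalance}, I would compare degrees vertex by vertex. Every vertex $v$ of $D_1 \oplus D_2$ lies in exactly one of $D_1$ or $D_2$, say $D_1$. Each arc of $D_1$ stays an arc with the same endpoints in the glued digraph. Each half-arc incident to $v$ is replaced by exactly one arc $\lambda_1^{-1}(i) \oplus \lambda_2^{-1}(i)$. If the half-arc is $(v,\bot)$, the new arc is $(v, v_2)$, so $v$ is still its tail. If the half-arc is $(\bot,v)$, the new arc is $(u_2, v)$, so $v$ is still its head. Distinct half-arcs give distinct arcs because $\lambda_1, \lambda_2$ are injective and labels are paired bijectively, using $L(D_1)=L(D_2)$. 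Hence $d^-_{D_1\oplus D_2}(v) = d^-_{D_1}(v)$ and $d^+_{D_1\oplus D_2}(v) = d^+_{D_1}(v)$, where the degrees in $D_1$ count half-arcs as in the paper's definition.

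Since the vertex set of $D_1\oplus D_2$ is the disjoint union $V(D_1)\cup V(D_2)$, the glued digraph is balanced exactly when every vertex of $D_1$ and every vertex of $D_2$ is balanced in its own part. Both directions of the equivalence follow at once.

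The only point needing care is the bookkeeping that a half-arc's contribution to in- or out-degree at its real endpoint is unchanged by gluing. This is immediate from the two cases in the definition of $e_1 \oplus e_2$, which use the distinct-types condition of compatibility. There is no real obstacle beyond this.
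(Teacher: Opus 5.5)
Your proposal is correct and follows the paper's approach. The paper states this observation without proof, as the structure-level analogue of \Cref{obs:abdbalance}, which is exactly your reduction via $D_{\alpha\oplus\beta} = D_\alpha \oplus D_\beta$. Your vertex-by-vertex check, that gluing leaves each vertex's in- and out-degree (half-arcs included) unchanged, supplies the routine justification of that digraph observation which the paper leaves implicit.
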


\subparagraph{Finite state.}
We first define a compatibility equivalence relation capturing when two arc-boundaried structures can be glued to precisely the same set of arc-boundaried structures. Formally, $\alpha \equiv_c \beta$ when $L(D_\alpha) = L(D_\beta)$ and for each $\gamma$, $\alpha \sim_c \gamma \iff \beta \sim_c \gamma$.
We next define an equivalence relation on Eulerian arc-boundaried structures based on a structure property $\sigma$. 

\begin{definition}
    Given a property $\sigma$ for structures, the \emph{Eulerian canonical equivalence relation} $\equiv^E_\sigma$ is an equivalence relation on Eulerian arc-boundaried structures defined as follows. For two Eulerian arc-boundaried structures $\alpha$ and $\beta$, we let $\alpha \equiv^E_\sigma \beta$ if
    \begin{enumerate}
        \item $\alpha \equiv_c\beta$, and
        \item for all Eulerian arc-boundaried structures $\gamma$ that are compatible with $\alpha$ (and thus $\beta$), $\sigma(\alpha \oplus \gamma) = \mathbf{true} \iff \sigma(\beta \oplus \gamma) = \mathbf{true}$. 
    \end{enumerate}
\end{definition}

Next, we prove that when $\sigma$ is a \cmso property and we have fixed and finite $p$ and $I$, this equivalence relation has only a finite amount of equivalence classes on $\mathcal{B}^p_{\subseteq I}$. This same result was proven in the context of general undirected vertex-boundaried structures by Bodlaender et al.~\cite{DBLP:journals/jacm/BodlaenderFLPST16}. We will prove the result for Eulerian arc-boundaried structures by showing that we can translate them to vertex-boundaried structure while preserving equivalence, hence allowing us to lift the result back to arc-boundaried structures. To keep this presentation compact, we refer to \cite{DBLP:journals/jacm/BodlaenderFLPST16} or \cite{DBLP:conf/icalp/LokshtanovR0Z18} for the relevant definitions related to vertex-boundaried structures. To improve readability, we slightly abuse notation and write $\equiv_\psi^E$ rather than $\equiv_{\sigma_\psi}^E$ from here on out.

\begin{theorem} \label{thm:arcfstate}
    For every \cmso-definable property $\psi$, natural number $p \in \mathbb{N}$ and finite $I \subseteq \mathbb{N}$, the equivalence relation $\equiv^E_{\psi}$ on $\mathcal{B}^p_{\subseteq I}$ has a finite number of equivalence classes.
\end{theorem}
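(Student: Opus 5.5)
The plan is to reduce to the finite-state theorem for vertex-boundaried structures of Bodlaender et al.~\cite{DBLP:journals/jacm/BodlaenderFLPST16}, as the paragraph preceding the statement suggests. I will define a translation $\tau$ from $\mathcal{B}^p_{\subseteq I}$ to vertex-boundaried structures with boundary labels from a finite set $I'$ depending only on $I$.

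For $\alpha \in \mathcal{B}^p_{\subseteq I}$, build $\tau(\alpha)$ as follows. Replace each half-arc $e$ with label $i$ by a genuine arc to a fresh vertex $x_i$, oriented $(u,x_i)$ if $e=(u,\bot)$ and $(x_i,v)$ if $e=(\bot,v)$. Declare $\{x_i : i \in L(D_\alpha)\}$ to be the vertex boundary, with $x_i$ carrying label $i$. The distinguished elements are translated naturally: a half-arc entry becomes the new arc incident to $x_i$, and a $\star$ entry becomes a marker that the glued partner supplies.

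Next I check gluing. For compatible $\alpha,\gamma$, the glued vertex-boundaried structure $\tau(\alpha)\oplus_V\tau(\gamma)$ is $\alpha\oplus\gamma$ with each arc $h(\lambda^{-1}(i))$ subdivided once by $x_i$. The type condition in compatibility ensures that exactly one side supplies the tail and the other the head. Subdivision is undone by a fixed CMSO interpretation. Concretely, I write a CMSO sentence $\psi'$ that identifies the boundary vertices, for example via an extra unary vertex-set coordinate $X$ appended to the structure. It relativizes edge quantification to arcs not incident to $X$ plus one representative per path $u\to x_i\to v$, and rewrites $\mathbf{head}$, $\mathbf{tail}$ and $\mathbf{card}$ accordingly. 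For $\mathbf{card}$, representatives can be chosen as the arc entering $x_i$, so cardinalities are preserved exactly. This gives $\sigma_{\psi}(\alpha\oplus\gamma)=\sigma_{\psi'}(\tau(\alpha)\oplus_V\tau(\gamma))$. I also record the translation of the distinguished coordinates, keeping track that an edge-set coordinate containing a half-arc maps to the set containing the corresponding new arc.

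Then I conclude. If $\tau(\alpha)\equiv_{\psi'}\tau(\beta)$ in the vertex-boundaried canonical equivalence, and $\alpha\equiv_c\beta$, then $\alpha\equiv^E_\psi\beta$. The Eulerian $\gamma$ form a subfamily of all partners, and every such $\gamma$ yields a compatible $\tau(\gamma)$. The relation $\equiv_c$ has finitely many classes on $\mathcal{B}^p_{\subseteq I}$, since a class is determined by the label set, the head/tail type of each label, and the type tuple together with the labels of half-arc coordinates. Finite state for $\psi'$ with boundary size at most $|I|$ then bounds the number of $\equiv^E_\psi$ classes by a product of two finite numbers.

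I expect the main obstacle to be the bookkeeping around $\star$ entries and half-arc coordinates. The vertex-boundaried framework of \cite{DBLP:conf/icalp/LokshtanovR0Z18} has its own compatibility rules, and I need $\tau$ to map arc-compatibility to vertex-compatibility exactly. I would handle this by encoding each $\star$ or half-arc coordinate through its label, refining structures by this finite data, and by checking that the boundary vertices $x_i$ carry no other incident arcs. That degree condition, that each $x_i$ has exactly one arc on each side, is definable and is what makes the desubdivision interpretation sound.
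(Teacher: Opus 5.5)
Your argument is correct and follows the same core strategy as the paper. You turn boundary arcs into boundary vertices by subdivision, rewrite $\psi$ into a $\psi'$ that undoes the subdivision, and combine the Bodlaender et al.\ finite-state lemma with the finiteness of $\equiv_c$, using that Eulerian partners are a subfamily of all partners. The difference is in how much is subdivided, and it bears directly on the obstacle you flagged.

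The paper subdivides \emph{every} arc. Its $g(\alpha)$ is the undirected incidence graph on $V(D_\alpha)\cup E(D_\alpha)\cup H(D_\alpha)$, and the half-arcs themselves are the boundary vertices. Orientation is recorded in an appended edge set $\mathbf{T}$ of tail-incidences, and the sets $\mathbf{Vt},\mathbf{Ar}$ relativize quantifiers. This buys the following:
\begin{itemize}
\item Every edge or edge-set coordinate becomes a vertex or vertex-set coordinate, uniformly.
\item The arc-boundaried rule ``both are half-arcs with the same label'' becomes literally the vertex-boundaried rule ``both are boundary vertices with the same label''.
\item A glued arc \emph{is} the identified boundary vertex. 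Hence $g(\alpha\oplus\gamma)=g(\alpha)\oplus g(\gamma)$ holds exactly, with no choice of representatives.
\item Since $g(\alpha)$ is undirected, the cited lemma applies verbatim.
\end{itemize}

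Your partial subdivision is also workable, but the rule you first state does not work. If a half-arc coordinate becomes the new arc at $x_j$, then when both partners carry that half-arc you get two distinct non-boundary edges $(u,x_j)$ and $(x_j,v)$. That violates vertex-boundaried compatibility. You need the fix you sketch: send a half-arc coordinate to the boundary vertex $x_j$. This fix brings further obligations:
\begin{itemize}
\item Coordinate $i$ is then a vertex or an edge depending on the partner. You must refine by the finitely many coordinate patterns, or split each edge coordinate into an edge-set and a vertex-set coordinate.
\item Membership of a glued arc in a free edge-set coordinate must be read as ``$(u,x_j)\in S$ or $(x_j,v)\in S$''.
\item Since $\tau(\alpha)$ stays directed, you need either a directed finite-state statement or the paper's orientation trick.
\end{itemize}
All of this is routine, so your route goes through. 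The paper's full subdivision simply makes the translation type-uniform and avoids this bookkeeping.
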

\begin{proof}
Fix a \cmso formula $\psi$, a $p \in \mathbb{N}$ and finite $I \subseteq \mathbb{N}$. We can describe a map $g$ that turns any arc-boundaried structure $\alpha \in \mathcal{B}^p_{\subseteq I}$ to a vertex-boundaried structure $\nu$ with arity $p+3$ and the same label set. The undirected graph $G_{g(\alpha)}$ has vertex set $V(D_\alpha) \cup E(D_\alpha) \cup H(D_\alpha)$. It has edges $t_e = \{u,e\}$ and $h_e = \{e,v\}$ for each edge $e = (u,v) \in E(D_\alpha)$. It additionally has an edge $t_e = \{u,e\}$ for each half-arc $e = (u,\bot)$ and edge $h_e = \{e, v\}$ for each half-arc  $(\bot, v)$ in $H(D_\alpha)$. The vertex boundary of $g(\alpha)$ is set to $H(D_\alpha)$ with labelling $\lambda_{g(\alpha)} = \lambda_\alpha$. We translate the elements $\alpha[2], \ldots, \alpha[p]$ to $g(\alpha)$ directly, where typewise an edge in $\alpha$ becomes a vertex in $g(\alpha)$ and an edge set becomes a vertex set. We additionally append three new elements to the end of $g(\alpha)$, namely the vertex sets $\mathbf{Vt} = V(D_\alpha)$, $\mathbf{Ar} = E(D_\alpha) \cup H(D_\alpha)$ and the edge set $\mathbf{T} = \{t_e \mid e \in E(D_\alpha) \vee (e \in H(D_\alpha) \wedge \type(e) = \tail )\}$. 

We consequently translate $\psi$ into a \cmso formula $\psi'$ for undirected structures capturing the same behaviour as $\psi$ did originally. The atomic formulas for head and tail incidence are replaced as follows.
\[
 \textbf{tail}(a,u)\ \rightsquigarrow\ \exists e\in E\ \big(e\in\mathbf{T}\wedge
   \mathbf{inc}(e,a)\wedge\mathbf{inc}(e,u)\big)\]
   \[
 \textbf{head}(a,v)\ \rightsquigarrow\ \exists e \in E\ \big(e\notin\mathbf{T}\wedge
   \mathbf{inc}(e,a)\wedge\mathbf{inc}(e,v)\big)
\]
Furthermore, when in the formula $\psi$ there is a quantification over  the vertex set of the graph, we include that the vertex should be in $\mathbf{Vt}$. Similarly occurrences of the edge set are modified with $\mathbf{Ar}$. For example, $\forall v \in V :\eta$ becomes $\forall v \in V : v \in \mathbf{Vt} \wedge \eta$.
\begin{claim}
    For every arc-boundaried structure $\alpha$ with $H(D_\alpha) = \emptyset$ (i.e. a non-boundaried structure), $\sigma_\psi(\alpha) = \sigma_{\psi'}(g(\alpha))$
\end{claim}
\begin{claimproof}
This follows from the fact that all atomic sentences of $\psi$ get translated into equivalent sentences for $\psi'$ (with regards to the structures they will be evaluated on), and that thus instantiating the free variables of $\psi$ with the elements of $\alpha$ gives the same truth value as instantiating the free variables of $\psi'$ with the elements of $g(\alpha)$.
\end{claimproof}

\begin{claim}
    For two arc-boundaried structures $\alpha$ and $\gamma$ that are compatible, the vertex-boundaried structures $g(\alpha)$ and $g(\gamma)$ are compatible and $g(\alpha \oplus \gamma) = g(\alpha) \oplus g(\gamma)$. 
\end{claim}
\begin{claimproof}
    First we confirm that $g(\alpha)$ and $g(\gamma)$ are indeed compatible (in context of vertex-boundaried structures). They both have arity $p+3$ and for $i \in \{2, \ldots, p\}$ the translation $g$ consistently maps the elements in the structure that are a $\star$ to a $\star$ and a vertex/edge (set) to a vertex (set). As the conditions for type compatibility are stronger for arc-boundaried graphs than for vertex-boundaried graphs, their types must be compatible. Furthermore, as no $g(\alpha)[i]$ or $g(\gamma)[i]$ for $i \in \{2, \ldots, p+3\}$ is an edge, we only need to confirm that when both $g(\alpha)[i]$ and $g(\gamma)[i]$ are a vertex, that 
    $g(\alpha)[i] \in \delta(G_{g(\alpha)})$, $g(\gamma)[i] \in \delta(G_{g(\gamma)})$ and $\lambda_{g(\alpha)}(g(\alpha[i])) =  \lambda_{g(\gamma)}(g(\gamma[i]))$. However, this follows immediately from the way the boundary is defined and that for compatible $\alpha$ and $\gamma$, it must be that both $\alpha[i]$ and $\gamma[i]$ are half-arcs in their respective boundaries with $\lambda_\alpha(\alpha[i]) = \lambda_\gamma(\gamma[i])$. 

    To see that $g(\alpha \oplus \gamma) = g(\alpha) \oplus g(\gamma)$ first observe that $G_{g(\alpha \oplus \gamma)} = G_{g(\alpha)} \oplus G_{g(\gamma)}$ as they are glued precisely along the half-arcs in both cases (although in one they have been transformed to vertices), and the rest of the graph is unaffected by the order of operations. Additionally, for $i \in \{2, \ldots, p\}$, $\alpha[i] = g(\alpha)[i]$ and $\gamma[i] = g(\gamma)[i]$, hence it is irrelevant in which order the gluing operation occurs. It is not difficult to verify that the same holds for the special elements $\mathbf{Vt}$, $\mathbf{Ar}$ and $\mathbf{T}$ as they are sets and the edges are appropriately carried over through the $h_{D_{\alpha}, D_{\gamma}}$ function. 
\end{claimproof}

Now we can argue that $\equiv^E_{\psi}$ has a finite amount of equivalence classes on $\mathcal{B}^p_{\subseteq I}$. It is clear that $\equiv_c$ indeed only has a finite amount for finite $I$ and $p$. Consider $\alpha, \beta \in \mathcal{B}^p_{\subseteq I}$ and an arc-boundaried structure $\gamma$ compatible with them. We observe that $g(\alpha) \equiv_{{\psi'}} g(\beta)$ implies that $\alpha \equiv^E_{\psi} \beta$ through the following use of the claims.
\[\sigma_\psi(\alpha \oplus \gamma) = \sigma_{\psi'}(g(\alpha \oplus \gamma)) = \sigma_{\psi'}(g(\alpha) \oplus g(\gamma)) = \sigma_{\psi'}(g(\beta) \oplus g(\gamma)) = \sigma_{\psi'}(g(\beta \oplus \gamma)) = \sigma_\psi(\beta \oplus \gamma)\]
We obtain from \cite[Lemma 2.2]{DBLP:journals/jacm/BodlaenderFLPST16} that $\equiv_{{\psi'}}$ has a finite number of equivalence classes on vertex-boundaried structures of arity $p+3$ and label set in $I$. Hence $\equiv^E_{\psi}$ has a finite number of equivalence classes on $\mathcal{B}^p_{\subseteq I}$.
\end{proof}

\subparagraph{Edge-breakability.}
Before we can start to prove \Cref{thm:cmsofpt} we need to formally introduce the notion of breakability, and describe how to decide if a graph is or is not breakable.

\begin{definition}
    Let $D$ be a directed graph and let $s$ and $c$ be natural numbers. $D$ is \emph{$(s,c)$-edge-breakable} if there exists a partition $A,B$ of $V(D)$ such that $|\delta(A)| \leq c$ and $|A|, |B| > s$. If no such partition exists, $D$ is \emph{$(s,c)$-edge-unbreakable.}
\end{definition}
 Where $\delta_D(X)$ (written as $\delta(X)$ when it is clear which graph is intended) is defined as the set of edges with one endpoint inside $X$ and one endpoint in $V(D) \setminus X$. As half-arcs can never contribute to the size of a cut, we say that digraph with half-arcs $D$ is $(s,c)$-edge-(un)breakable if and only if the digraph $(V(D),E(D))$ is. Similarly, we say that an (arc-boundaried) structure $\alpha$ is $(s,c)$-edge-(un)breakable if $D_\alpha$ is. As in the above definition the orientation of the edges is irrelevant, we can directly make use of the following result by Chitnis, Cygan, Hajiaghayi, Pilipczuk and Pilipczuk~\cite{DBLP:journals/siamcomp/ChitnisCHPP16}. 
 \begin{lemma}[{\cite[Lemma 2.2]{DBLP:journals/siamcomp/ChitnisCHPP16}}] \label{lem:findbreakable-con}
     There exists a deterministic algorithm that, given an (un)directed graph $G$ on $n$ vertices, along with integers $c$ and $s$, in time $O(2^{O(\min(s,c) \log (s+c))} \allowbreak n^3 \log n)$ either finds a partition $A,B$ witnessing that $G$ is $(s,c)$-edge-unbreakable or reports that no such partition exists for which $A$ and $B$ both contain a connected component of size greater than $s$. 
 \end{lemma}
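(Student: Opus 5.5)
The plan is to prove the statement as a colour-coding argument over the edges of the underlying undirected graph. Orientations play no role in $|\delta(A)|$, so we may assume $G$ is undirected. Suppose a partition $A,B$ exists with $|\delta(A)|\le c$, where $A$ contains a connected set $X$ with $|X|=s+1$ and $B$ contains a connected set $Y$ with $|Y|=s+1$. Fix spanning trees $T_X$ of $G[X]$ and $T_Y$ of $G[Y]$. Together they have exactly $2s$ edges, which we call the \emph{important} edges. The edges of $\delta(A)$ number at most $c$ and are disjoint from the important edges, since $T_X\subseteq G[A]$ and $T_Y\subseteq G[B]$. We seek a red/blue colouring of $E(G)$ in which all important edges are red and all edges of $\delta(A)$ are blue.

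\textbf{Step 1: derandomised family of colourings.} Use the standard deterministic replacement for random colouring: an $(m, 2s, c)$-universal family of functions $E(G)\to\{\text{red},\text{blue}\}$. For every pair of disjoint edge sets of sizes at most $2s$ and $c$, some function in the family colours the first set red and the second set blue. Such families (built from splitters / lopsided universal sets, as in Chitnis et al.) have size $\binom{2s+c}{\min(s,c)}2^{o(s+c)}\log m = 2^{O(\min(s,c)\log(s+c))}\log n$ and can be constructed in comparable time. This is the only place where the $\min(s,c)$ in the exponent enters.

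\textbf{Step 2: processing a fixed colouring.} For each colouring in the family, let $\mathcal{R}$ be the set of connected components of the red subgraph that have more than $s$ vertices. Under a good colouring, $T_X$ lies in a single red component $P\supseteq X$. Since all edges of $\delta(A)$ are blue, no red component crosses the cut, so $P\subseteq A$; symmetrically, the red component $Q\supseteq Y$ satisfies $Q\subseteq B$. Hence the minimum $P$–$Q$ edge cut in $G$ has size at most $c$. Conversely, any $P$–$Q$ cut of size at most $c$ between two components in $\mathcal{R}$ gives a partition $A',B'$ with $|\delta(A')|\le c$ and $|A'|,|B'|>s$, which is a valid witness of breakability. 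So the algorithm, for each colouring and each pair $P,Q\in\mathcal{R}$, contracts $P$ and $Q$ to single vertices. It then runs at most $c+1$ rounds of Ford--Fulkerson and outputs the source side of the resulting cut if the flow value is at most $c$. If no colouring and no pair succeeds, we correctly report that no partition of the required kind exists.

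\textbf{Step 3: running time.} The main obstacle is bounding the running time by $n^3\log n$ rather than by the naive product. There are up to $(n/s)^2$ pairs, and each flow computation costs $O(c\,m)$. To reduce this, fix one component $P\in\mathcal{R}$. For a good colouring, every red component inside $B$ of size more than $s$ can serve as $Q$. Contracting all other large components accordingly, and using Gomory--Hu style reuse or simply the bound $|\mathcal{R}|\le n/(s+1)$, keeps the total work per colouring at $O(n^3)$ after sparsifying to at most $c\,n$ edges via Nagamochi--Ibaraki certificates, which preserve all cuts of size at most $c$. Multiplying by the family size gives the claimed bound. If this bookkeeping turns out to be tight, the fallback is to accept an extra polynomial factor, which is harmless for all uses of the lemma in this paper.
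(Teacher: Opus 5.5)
Your proposal is correct and is essentially the standard colour-coding proof of Chitnis et al., which the paper cites rather than reproving. You colour the $2s$ spanning-tree edges red and the at most $c$ cut edges blue, then run max-flow between large red components. For the family size, invoke Lemma~\ref{lem:derandom} directly with $U=E(G)$, $a=2s$, $b=c$. Your $\binom{2s+c}{\cdot}2^{o(s+c)}$ estimate is not $2^{O(\min(s,c)\log(s+c))}$ when $\min(s,c)$ is small.

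Step~3 is simpler than you make it, and the fallback should be dropped. Under a good colouring a fixed large red component $P$ lies wholly on one side, and the other side contains another large red component. So testing $P$ against each of the at most $n/(s+1)$ other large components, with $c+1$ augmentations each (after capping edge multiplicities at $c+1$ or sparsifying), already gives the $n^3$ factor per colouring. Gomory--Hu reuse is unnecessary. The polynomial-overhead fallback would not give the stated bound, and the $d>4$ threshold in Theorem~\ref{thm:cmso} is calibrated to exactly this $n^3\log n$ term.
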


 One limitation is that \Cref{lem:findbreakable-con} can only decide breakability with the additional constraint that both sides of the partition must contain a large connected component. The next lemma shows that we can avoid this constraint if we are satisfied with merely an `approximate' breakability-witnessing partition. 

 \begin{lemma} \label{lem:findbreakable}
       There exists a deterministic algorithm that, given an (un)directed graph $G$ on $n$ vertices, along with integers $c$ and $s$, in time $O(2^{O(\min(s,c) \log (s+c))} n^3 \log n)$ either finds a partition $A,B$ witnessing that $G$ is $(\frac{s}{2c},c)$-edge-breakable or reports that $G$ is $(s,c)$-edge-unbreakable. 
 \end{lemma}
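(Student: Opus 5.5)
We reduce to \Cref{lem:findbreakable-con} by preprocessing on connected components. First compute the connected components of the underlying undirected graph of $G$ in linear time. We then handle two cases, depending on whether some component has more than $s$ vertices.

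\emph{No component has more than $s$ vertices.} If $n \le 2s$, any partition has a side of size at most $s$, so $G$ is $(s,c)$-edge-unbreakable; report this. Otherwise take a union $A$ of components with $|A| > s/(2c)$ and $|V\setminus A|>s/(2c)$. Since each component has at most $s$ vertices, greedy addition works: add components to $A$ until $|A|>s/2$. Then $|A|\le 3s/2$ and $|V\setminus A|\ge n-3s/2>s/2$. The cut $\delta(A)$ is empty, so this witnesses $(s/(2c),c)$-breakability.

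\emph{Some component $K$ has more than $s$ vertices.} If $|V\setminus K| > s/(2c)$, output $(K, V\setminus K)$, whose cut is empty. Otherwise run \Cref{lem:findbreakable-con} on $G$ with parameters $s,c$. If it returns a partition, that partition witnesses $(s,c)$-breakability and hence also $(s/(2c),c)$-breakability, since $s/(2c)\le s$.

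If it reports that no partition with large components on both sides exists, we must show $G$ is $(s,c)$-edge-unbreakable. This is the main obstacle. Suppose, for contradiction, that $(A,B)$ has $|\delta(A)|\le c$ and $|A|,|B|>s$, and say $B$ contains no connected component (of $G[B]$) of size greater than $s$. Deleting the at most $c$ edges of $\delta(A)$ splits $K\cap B$ into pieces. Each piece touches $\delta(A)$, because $K$ is connected, so there are at most $c$ pieces of $K\cap B$, each of size at most $s$.

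Here the plan needs refinement. I would instead exploit the slack in the statement and apply \Cref{lem:findbreakable-con} with modified parameters, for instance $s' = s/(2c)$. The argument runs as follows. Given a bad partition for $(s,c)$, choose the side $B$ lacking a large component. Its at most $c$ pieces inside $K$, together with the small leftover $V\setminus K$ of size at most $s/(2c)$, have total size greater than $s$. Hence some piece has more than $s/(2c)$ vertices, roughly $(s - s/(2c))/c \ge s/(2c)$. So both sides contain a connected component of size greater than $s' = s/(2c)$. Therefore the lemma, run with $(s',c)$, must find a partition witnessing $(s',c)$-breakability, which is exactly the desired output. If instead it reports that no such partition exists, this contradiction shows $G$ is $(s,c)$-unbreakable.

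The running time is dominated by the single call to \Cref{lem:findbreakable-con} with $s'\le s$, which gives the claimed bound. The delicate step is the counting inequality for the largest piece. If it fails by constants, I would adjust the threshold for $|V\setminus K|$ or, when $A$ is the side lacking a large component, use the symmetric argument.
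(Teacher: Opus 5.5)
Your refined argument is correct and is essentially the paper's proof. Both dispose of the empty-cut cases using the connected components and then call \Cref{lem:findbreakable-con} with threshold $s/(2c)$. The call is justified by the same count: each side of an $(s,c)$-breaking partition meets the large component $K$ in at most $c$ pieces, each touching $\delta(A)$, and misses at most $|V\setminus K|\le s/(2c)$ vertices, so some piece has more than $(s-s/(2c))/c\ge s/(2c)$ vertices. This count applies verbatim to both sides, so there is no need to single out a side lacking a large component.

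The differences are cosmetic. You split cases at $s$ using a greedy union, where the paper splits at $s/(2c)$ using sorted prefixes. You also call the lemma on all of $G$, whereas the paper calls it on $G[K]$ and appends $V\setminus K$ to one side afterwards. The paper's version avoids relying on the lemma for disconnected inputs, and it works just as well for you, since your large pieces all lie inside $K$.
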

 \begin{proof}

    We can easily determine whether $G$ is $(s,0)$-edge-unbreakable by solving a polynomial-time solvable instance of the knapsack problem on the connected components of $G$.
    Next, if $n \leq 2s$ then it is trivially $(s,c)$-edge-unbreakable.
    Otherwise,  let $C_1,C_2, \ldots, C_p$ be the connected components of $G$, ordered decreasingly by their number of vertices. Consider the case where $|(V(C_1)| \leq \frac{s}{2c}$. Then we can always find a $q \in [p]$ such that $\frac{s}{2c} < \sum_{i=1}^q |V(C_i)| \leq \frac{s}{c}$, and since $n > 2s \geq \frac{3s}{2c}$ we get that $\bigcup_{i=1}^q V(C_i)$ and $\bigcup_{i=q+1}^p V(C_i)$ forms a partition witnessing $(\frac{s}{2c},0)$-edge-unbreakability, so we output this for $A$ and $B$. 
    Now consider the second case where $|V(C_1)| > \frac{s}{2c}$. Let $U =  \bigcup_{i=2}^p V(C_i)$. If $|U| > \frac{s}{2c}$ then we again find a trivial partition witnessing $(\frac{s}{2c},0)$-edge-unbreakability, namely $V(C_1)$ and $U$, thus assume henceforth that $|U| \leq \frac{s}{2c}$. We use the algorithm of \Cref{lem:findbreakable-con} on $G[C_1]$ to examine $(\frac{s}{2c}, c)$-edge-breakability. If it finds no partition we report that $G$ is $(s,c)$-edge-unbreakable. Alternatively, it finds a partition $A_1, B_1$ of $V(C_1)$ such that $|A_1|, |B_1| > \frac{s}{2c}$ and $|\delta(A_1)| \leq c$, in which case we output $A_1 \cup U$ and $B_1$. 

    It should be obvious that whenever the algorithm outputs a partition, then it is correct. It remains to argue that $G$ is only $(s,c)$-edge-unbreakable when the algorithm reports so. To this end, assume that $G$ is $(s,c)$-edge-breakable and let $A,B$ be a partition of $V(G)$ witnessing this. We will show that both $G[A]$ and $G[B]$ contain a connected component of size greater than $\frac{s}{2c}$, after which correctness immediately follows by \Cref{lem:findbreakable-con}. Remember, we only enter the final case (and risk falsely reporting unbreakability) if $|U| \leq \frac{s}{2c}$. Split the connected components of $G[A]$ into $\mathcal{C}$ which contains all those components $C$ such that $\delta_G(C) = \emptyset$ (ergo $C$ is a connected component already in $G$) and $\mathcal{C'}$ which contains all remaining components. Thus $\mathcal{C} \subseteq \{ C_2, C_3, \ldots, C_p\}$ and $|\bigcup_{C \in \mathcal{C}} V(C)| \leq |U|$. As each connected component in $\mathcal{C}'$ contains an endpoint of at least one edge in $\delta(A)$, we get $|\mathcal{C}'| \leq c$. Combining these facts with $|A| > s$ tells us that at least one $C' \in \mathcal{C}'$ has $|V(C')| > \frac{s-|U|}{c} \geq \frac{s}{c} - |U| \geq \frac{s}{c} - \frac{s}{2c} = \frac{s}{2c}$. The same argument holds on the $B$ side, which concludes the correctness proof.

    For the running time of the algorithm, observe that the $(s,0)$-edge-unbreakable case can be solved in $O(n^2)$ time with a dynamic programming algorithm. 
    The algorithm from \Cref{lem:findbreakable-con} runs in the claimed time and is only called once, and all other steps can be easily performed in $O(n^2)$ time.   
 \end{proof}

\subparagraph{Reduction to Unbreakable Instances.}
Our goal is to prove an Eulerian analogue to Theorem 3 of \cite{DBLP:conf/icalp/LokshtanovR0Z18}. Observe that while this original theorem could be used to reduce the task of solving the problem on general Eulerian structures to the problem of solving it on unbreakable structures, it would require us to solve it on unbreakable \emph{non-Eulerian} structures, which defeats the purpose of looking at Eulerian structures in the first place.
Our main contribution is in showing that it suffices to solve the problem on unbreakable \emph{Eulerian} structures. 

Before stating the theorem, we need to introduce the notion of the parametrized problem \minecmso{$\psi$} for a fixed \cmso formula $\psi$. It takes as input a structure $\alpha$ and a parameter $k$ such for each $S \subseteq E(D_\alpha)$, the structure $\alpha \diamond S$ matches $\psi$. We are tasked to decide whether there is a set $S \subseteq E(D_\alpha)$ of size at most $k$ such that $\alpha \diamond S$ satisfies $\psi$. Similarly, the \maxecmso{$\psi$} and \eqecmso{$\psi$} are similarly defined, now with the task to find an edge set of size at least $k$ or exactly $k$, respectively.

\thmcmsofpt*

We will follow the general approach of \cite{DBLP:conf/icalp/LokshtanovR0Z18} and make modifications where necessary to preserve the Eulerian property and account for arc-boundaried structures. 
We will in fact focus on a non-parameterized result from which \Cref{thm:cmsofpt} follows. 
\begin{restatable}{theorem}{thmcmso} \label{thm:cmso}
    Let $\psi$ be a \cmso formula. For all $c \in \mathbb{N}$ there exists $s \in \mathbb{N}$ such that if there exists an algorithm that solves \cmsop{$\psi$} on $(s,c)$-edge-unbreakable Eulerian structures in time $O(n^d)$ for some $d > 4$, then there exists an algorithm that solves \cmsop{$\psi$} on general Eulerian structures in time $O(n^d)$.
\end{restatable}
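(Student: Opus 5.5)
We follow the recursive-understanding scheme of Lokshtanov, Ramanujan, Saurabh and Zehavi, adapted to arc-boundaried Eulerian structures.

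\emph{Choice of $s$.} Fix $\psi$ and $c$. Let $I=[2c]$; note that in the Eulerian setting an undirected cut $\delta(A)$ with $|\delta(A)|\le c$ has at most $c$ arcs in total, split equally between the two directions. By \Cref{thm:arcfstate}, $\equiv^E_\psi$ has finitely many classes on $\mathcal{B}^{p}_{\subseteq I'}$ for every $I'\subseteq I$, where $p$ is the arity of the structures matching $\psi$. Nonuniformly, we fix for each class a representative of minimum size and let $r$ be the maximum number of vertices over all representatives. We then choose $s$ such that $s/(2c)>r$, which is needed because \Cref{lem:findbreakable} only produces $(s/(2c),c)$-breakable witnesses.

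\emph{Recursive algorithm.} On an Eulerian structure $\alpha$ with $n$ vertices, we run \Cref{lem:findbreakable} with parameters $(s,c)$. If it reports that $\alpha$ is $(s,c)$-edge-unbreakable, we call the assumed $O(n^d)$ algorithm. Otherwise we obtain a partition $A,B$ with $|\delta(A)|\le c$ and $|A|,|B|>s/(2c)>r$. We split $\alpha$ into two compatible arc-boundaried structures $\alpha_A$ and $\alpha_B$ with $\alpha=\alpha_A\oplus\alpha_B$. The arcs of $\delta(A)$ become half-arcs labelled by $1,\dots,|\delta(A)|$. The tuple entries are distributed as follows: vertex and edge sets are intersected with each side, with boundary arcs becoming half-arcs on both sides; single elements go to the side containing them, with $\star$ on the other side; an element that is a crossing arc becomes a half-arc on both sides. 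By the observation following \Cref{obs:abdbalance}, both $\alpha_A$ and $\alpha_B$ are Eulerian, since each vertex keeps its degree when arcs are replaced by half-arcs.

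\emph{Computing the class of $\alpha_A$ (main obstacle).} We need to determine the $\equiv^E_\psi$-class of $\alpha_A$, which we do recursively as in LRSZ. For each of the finitely many representatives $\gamma$ compatible with $\alpha_A$, we decide $\sigma_\psi(\alpha_A\oplus\gamma)$ by a recursive call. The structure $\alpha_A\oplus\gamma$ is a genuine Eulerian structure on at most $|A|+r<n$ vertices. From these answers we must recover the class of $\alpha_A$.

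The class is pinned down by these answers only if the test set of representatives $\gamma$ distinguishes classes. To ensure this, we would first transfer the standard argument: for each pair of classes we fix, nonuniformly, a distinguishing Eulerian context and let this finite family serve as the test set. It is essential that these contexts are Eulerian, since $\equiv^E_\psi$ only quantifies over Eulerian $\gamma$, and that they are small, so the recursive calls stay on strictly smaller inputs. We would enlarge $r$ to cover the sizes of these contexts as well.

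Having identified the class, we replace $\alpha_A$ by its representative $\rho$ and form $\alpha'=\rho\oplus\alpha_B$. This is Eulerian by the observation, satisfies $\sigma_\psi(\alpha')=\sigma_\psi(\alpha)$ by the definition of $\equiv^E_\psi$, and has fewer vertices because $|\rho|\le r<|A|$. We then recurse on $\alpha'$.

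\emph{Running time.} The recursion follows LRSZ. Each of the recursive calls on $A$-side instances has size about $|A|+O(r)$, together with one call of size $n-|A|+r$. The lower bound $|A|,|B|>r$ with room to spare (the $O(r)$ term is absorbed by choosing $s$ large) yields the recurrence $T(n)\le \max_{n_1}\bigl(q\,T(n_1+r)+T(n-n_1+r)\bigr)+O(n^4\log n)$, where $q$ is the number of test structures. This recurrence solves to $O(n^d)$ for $d>4$ by the same convexity argument as in the original, provided $s$ is chosen large relative to $q$ and $r$. The cost of \Cref{lem:findbreakable}, which is $n^3\log n$, together with the gluing steps, is dominated by this bound.
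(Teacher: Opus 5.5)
Your correctness argument is sound: pairwise distinguishing Eulerian contexts do form a valid test family, and the paper notes that the representatives themselves already form one. The gap is the running time. Keeping the \emph{same} exponent $d$ is the whole content of the theorem, and your recursion does not achieve it.

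At every breakable node you make $q$ recursive calls on the glued structures $\alpha_A\oplus\gamma$. These are not unbreakable, so they go through the full recursion again, and you add one more call on $\rho\oplus\alpha_B$. \Cref{lem:findbreakable} gives no control over balance. For example, a directed cycle of $(s,c)$-edge-unbreakable Eulerian blobs linked by single arcs may be cut in the middle at every level. Then roughly $T(n)\ge (q+1)\,T(n/2)$, and the recursion tree has about $(n/s)^{\log_2(q+1)}$ leaves.

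Convexity only handles the lopsided endpoint $n_1\approx s/(2c)$, which is the only place where enlarging $s$ helps. At the balanced endpoint you need $(q+1)2^{-d}<1$, and no choice of $s$ affects this. A family separating $N$ classes has $q\ge\log_2 N$, and $N$ is determined by $\psi$ and $c$, not by $d$. So you only get $O(n^{d'})$ with $d'$ depending on $\psi$ and $c$. In \Cref{thm:cmsofpt}, where $\psi=\psi_k$ and $c=\tilde c(k)$, that yields XP rather than FPT.

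The missing idea is to recurse on the boundaried piece itself instead of deciding $q$ glued instances. The paper, following LRSZ, solves {\sc Understand-E[$\psi$]}: given an Eulerian arc-boundaried structure with label set in $I=[2c]$, return the representative of its $\equiv^E_\psi$-class.
\begin{itemize}
\item On an $(s-r,c)$-edge-unbreakable input $\alpha$, every $\alpha\oplus\gamma$ and $\beta\oplus\gamma$ with representatives $\beta,\gamma$ is $(s,c)$-edge-unbreakable. So the assumed algorithm applies directly, and the multiplicative overhead is paid only at the leaves.
\item On a breakable input it takes the side $A$ carrying at most half of the existing half-arcs, so $\alpha[A]$ has at most $c+c=2c$ half-arcs. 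This is why $I=[2c]$ is needed once boundaried inputs are allowed.
\item It then makes exactly two recursive calls. The first is on $\alpha[A]$ and returns a representative $\beta'$; the second is on the structure obtained by substituting $\beta'$ for $A$.
\end{itemize}
The two calls have sizes $n_1$ and at most $n-n_1+r$, so $T(n)\le T(n_1)+T(n-n_1+r)+O(n^3\log n)$. This does solve to $O(n^d)$ for the given $d>4$. The non-boundaried problem is the special case of an empty boundary.
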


We start by showing how \Cref{thm:cmsofpt} follows from \Cref{thm:cmso}.
\begin{proof}[Proof of \Cref{thm:cmsofpt}]
    For $\textsc{x} = \textsc{min-Edge}$ and for a value of $k \in \mathbb{N}$, we define $\psi_k = \exists_{S \subseteq E(D_\alpha)} |S| \leq k \wedge \tilde \psi (S)$ where $\tilde \psi$ used $S$ as free variable. Note that one can easily express $\exists_{S \subseteq E(D)} |S| \leq k$ in \cmso with a formula of length depending on $k$. Similarly, if $x \in \{ \textsc{max-Edge, eq-Edge}\}$ one can similarly define $\tilde \psi_k$ using $|S| \geq k$ and $|S| = k$. Now observe that \cmso{$\tilde \psi$} is equivalent to \cmsop{$\psi_k$}. 
    
    For each $k \in \mathbb{N}$, we let $\tilde s(k)$ be the value of $s$ obtained from \Cref{thm:cmso} with $c = \tilde c(k)$ and $\psi = \psi_k$. Let $b >4$ be fixed such that for each $k$, there is an $f(k)$ such that \xcmso{$\tilde\psi$} on $(\tilde{s}(k), \tilde{c}(k))$-edge-unbreakable Eulerian structures can be solved in $f(k) \cdot n^b$. As \xcmso{$\tilde\psi$} is equivalent to \cmsop{$\psi_k$}, we find that it can be solved in $O(n^b)$ for a fixed $k$ on these unbreakable structures. By \Cref{thm:cmso}, that means \cmsop{$\psi_k$} and hence \xcmso{$\tilde\psi$} can be solved in in $O(n^b)$ in general Eulerian structures. This shows that for every fixed $k \in \mathbb{N}$, \xcmso{$\tilde\psi$} can be solved in $O(n^b)$ with $b$ independent of $k$, and hence it is nonuniformly FPT. 
\end{proof}

Thus it remains to show \Cref{thm:cmso}. First, consider $\psi$ as a fixed \cmso formula with $p$ free variables and a fixed $c \in \mathbb{N}$. We will set $s \in \mathbb{N}$ to a fixed value depending only on $\psi$ and $c$, the exact value of which we will describe later. Thus, to prove the theorem we assume that there exists an algorithm \Unb that solves \cmsop{$\psi$}  on $(s,c)$-edge-unbreakable Eulerian structures in time $O(n^d)$ for some $d > 4$.

We will instead focus on a different problem, namely {\sc Understand-E[$\psi$]}. To define it we need to introduce some notation. Let $I = [2c]$.
Let $\mathcal{E}_{\equiv^E_\psi}[\mathcal{B}^p_{\subseteq I}]$ be the set of equivalence classes of $\equiv^E_\psi$ when restricted to $\mathcal{B}^p_{\subseteq I}$. For an arc-boundaried structure $\alpha \in \mathcal{B}^p_{\subseteq I}$ we denote with $E_\alpha$ the equivalence class containing $\alpha$. For each equivalence class $E_q \in \mathcal{E}_{\equiv^E_\psi}[\mathcal{B}^p_{\subseteq I}]$, we let $\alpha_{E_q}$ be the arc-boundaried structure in $E_q$ such that its encoding is smallest among all elements of $E_q$. We refer to $\alpha_{E_q}$ as the \emph{representative} of $E_q$, and we let $\mathcal{R}_{\equiv^E_\psi} [\mathcal{B}^p_{\subseteq I}]$ be the set of all representatives. By \Cref{thm:arcfstate} there is a fixed $r \in \mathbb{N}$ that depends only on $\psi$ and $c$ such that both $|\mathcal{E}_{\equiv^E_\psi}[\mathcal{B}^p_{\subseteq I}]|$ as well of the maximum size of a representative is upper bounded by $r$. 
Finally, set $s = (2c+1)r$. 

The {\sc Understand-E[$\psi$]} problem is now defined as follows. Given an Eulerian arc-boundaried structure $\alpha \in \mathcal{B}^p_{\subseteq I}$ that matches $\psi$, output an Eulerian arc-boundaried structure $\beta \in \mathcal{R}_{\equiv^E_\psi} [\mathcal{B}^p_{\subseteq I}]$ such that $E_\alpha = E_\beta$ (i.e. return the representative of $E_\alpha$). We start by showing that {\sc Understand-E[$\psi$]} generalises CMSO[$\psi$]. 

\begin{lemma}
    If there exists an algorithm that solves {\sc Understand-E[$\psi$]} on Eulerian arc-boundaried structures in time $O(n^d)$, then there exists an algorithm that solves \cmsop{$\psi$} on Eulerian structures in time $O(n^d)$
\end{lemma}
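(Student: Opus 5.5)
Given an Eulerian structure $\alpha$ (no half-arcs) matching $\psi$, I view it as an Eulerian arc-boundaried structure $\hat\alpha$ with empty boundary, so $L(D_{\hat\alpha}) = \emptyset \subseteq I$. Hence $\hat\alpha \in \mathcal{B}^p_{\subseteq I}$. Elements of $\alpha$ that are vertices or edges stay vertices or edges; there are no $\star$ entries.

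I run the assumed {\sc Understand-E[$\psi$]} algorithm on $\hat\alpha$, in time $O(n^d)$, and obtain the representative $\beta$ of $E_{\hat\alpha}$. The size of $\beta$ is at most $r$, a constant depending only on $\psi$ and $c$.

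The remaining task is to decide $\sigma_\psi(\alpha)$ from $\beta$. Let $\gamma_\emptyset$ be the Eulerian arc-boundaried structure of arity $p$ with an empty digraph and empty boundary. Its entries are $\star$ wherever $\hat\alpha$ has a vertex or edge, and $\emptyset$ wherever $\hat\alpha$ has a set. Then $\gamma_\emptyset$ is compatible with $\hat\alpha$, and $\hat\alpha \oplus \gamma_\emptyset = \alpha$. Since $\beta \equiv^E_\psi \hat\alpha$, we have $\beta \equiv_c \hat\alpha$, so $\gamma_\emptyset$ is also compatible with $\beta$ and
\[
\sigma_\psi(\alpha) = \sigma_\psi(\hat\alpha\oplus\gamma_\emptyset) = \sigma_\psi(\beta\oplus\gamma_\emptyset).
\]
The structure $\beta \oplus \gamma_\emptyset$ has size at most $r$. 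Its truth value can therefore be evaluated by brute-force evaluation of $\psi$ in time depending only on $\psi$ and $c$, that is, $O(1)$. The total running time is $O(n^d) + O(n)$ for the conversion, which is $O(n^d)$.

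The main point to check is the handling of the degenerate gluing. First, $\equiv_c$ must force $\beta$ to have an empty label set and the same arity and types (up to $\star$ compatibility). This holds because $\equiv_c$ requires $L(D_\beta) = L(D_{\hat\alpha}) = \emptyset$. Second, gluing with the empty structure $\gamma_\emptyset$ must return exactly the original structure. If vertex- or edge-type entries of $\beta$ could be $\star$, I would instead choose $\gamma_\emptyset$ according to the types of $\hat\alpha$. The same $\gamma_\emptyset$ then works for $\beta$ by $\equiv_c$, since compatibility with the same $\gamma$ is preserved. In the worst case, the answer for the single class $E_\beta$ is simply precomputed as a hardcoded constant, which is where nonuniformity enters anyway.
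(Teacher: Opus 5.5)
Your proof is correct and follows the paper's argument. You give $\alpha$ an empty boundary, call the {\sc Understand-E[$\psi$]} algorithm to obtain the representative $\beta$, glue it with the empty structure $\gamma_\emptyset$ (entries $\star$ or $\emptyset$ according to the types of $\alpha$), and use $\beta \equiv^E_\psi \hat\alpha$ to get $\sigma_\psi(\alpha)=\sigma_\psi(\beta\oplus\gamma_\emptyset)$. The only difference is the last step: the paper notes that $\beta\oplus\gamma_\emptyset$ has fewer than $s$ vertices, so it is trivially $(s,c)$-edge-unbreakable, and calls \textsf{Solve-Unbr-ALG} on it; you instead evaluate $\psi$ on this constant-size structure by brute force, which is equally valid and avoids relying on a hypothesis that does not appear in the lemma's statement.
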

\begin{proof}
    Let $\alpha$ be a Eulerian structure given as input for \cmsop{$\psi$}. Let $\alpha'$ be the Eulerian arc-boundaried structure obtained by setting $\alpha$ to have an empty boundary. As $\alpha' \in \mathcal{B}^p_{\subseteq I}$, we can call upon the algorithm from the hypothesis to find the representative $\beta \in \mathcal{R}_{\equiv^E_\psi} [\mathcal{B}^p_{\subseteq I}]$ for $E_{\alpha'}$ in $O(n^d)$ time. 

    Now as $\alpha' \sim_c \beta'$, they must both have an empty boundary. Hence by removing this redundant information we can obtain an Eulerian (not boundaried) structure $\beta$. More concretely, let $\gamma$ be the arc-boundaried structure with $\gamma[1]$ as an empty graph and $\gamma[i]$ equal to either $\star$ or $\emptyset$ depending on whether $\alpha[i]$ is a vertex/edge or a set, respectively. Then $\beta = \beta' \oplus \gamma$ and $\alpha = \alpha' \oplus \gamma$. 
    
    By choice of $s$, we know $s > |V(D_\beta)|$ and hence $\beta$ is a trivially $(s,c)$-edge-unbreakable structure. Therefore, we can call upon \Unb to solve $\beta$ in $O(n^d)$ time. As $E_{\alpha'}= E_{\beta'}$ we get $\sigma_\psi(\alpha) = \sigma_\psi(\alpha' \oplus \gamma) = \sigma_\psi(\beta'\oplus \gamma) = \sigma_\psi (\beta)$. 
\end{proof}

Hence we may focus our attention instead on the {\sc Understand-E[$\psi$]} problem. We start by showing how to solve it on unbreakable instances. 

\begin{lemma}
    There exists an algorithm \UndU that solves {\sc Understand-E[$\psi$]} on $(s-r,c)$-edge-unbreakable Eulerian arc-boundaried structures in time $O(n^d)$.  
\end{lemma}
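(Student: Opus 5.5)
The plan is to follow the unbreakable case of \cite{DBLP:conf/icalp/LokshtanovR0Z18}: the representative of $E_\alpha$ is found by testing $\alpha$ against a finite family of Eulerian ``test'' structures, and each resulting glued structure is solved with \Unb. Fix, independently of the input, a finite set $\mathcal{G}$ of Eulerian arc-boundaried structures in $\mathcal{B}^p_{\subseteq I}$ that \emph{distinguishes} the representatives. For every pair of distinct compatible representatives $\beta_1,\beta_2\in\mathcal{R}_{\equiv^E_\psi}[\mathcal{B}^p_{\subseteq I}]$ with $\beta_1\equiv_c\beta_2$, the set $\mathcal{G}$ contains an Eulerian $\gamma$ compatible with both such that $\sigma_\psi(\beta_1\oplus\gamma)\neq\sigma_\psi(\beta_2\oplus\gamma)$. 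Such a $\gamma$ exists by the definition of $\equiv^E_\psi$, and $|\mathcal{R}|\le r$, so $\mathcal{G}$ is finite. We also choose each $\gamma$ to be of smallest size; the maximum of these sizes is a constant depending only on $\psi$ and $c$. This is where non-uniformity enters. We will need $|V(D_\gamma)|\le r$ for all $\gamma\in\mathcal{G}$. I expect to arrange this by taking $\gamma$ itself to be a representative, since any test structure can be replaced by the representative of its own class. We also need $\gamma$ to lie in $\mathcal{B}^p_{\subseteq I}$, which holds because compatible structures share the label set.

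Given $\alpha$, the algorithm first computes its $\equiv_c$-class in linear time, reading off the label set, the half-arc types and the types of the entries. For every $\gamma\in\mathcal{G}$ compatible with $\alpha$, it forms $\alpha\oplus\gamma$ and decides $\sigma_\psi(\alpha\oplus\gamma)$ using \Unb. It then outputs the unique representative $\beta\equiv_c\alpha$ whose truth vector $(\sigma_\psi(\beta\oplus\gamma))_{\gamma}$ agrees with that of $\alpha$; these vectors for the representatives are hardcoded. Correctness is as follows. The representative $\beta^*$ of $E_\alpha$ has the same vector as $\alpha$, since $\alpha\equiv^E_\psi\beta^*$. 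Any other compatible representative differs from $\beta^*$ on some $\gamma\in\mathcal{G}$, so it also differs from $\alpha$ there. The number of calls is at most $|\mathcal{G}|=O(1)$, each taking time $O((n+r)^d)=O(n^d)$, and gluing takes linear time.

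The step that needs care is verifying that each glued instance is a legal input for \Unb. First, $\alpha\oplus\gamma$ must be an Eulerian structure. This follows from the observation that gluing preserves balance: both $\alpha$ and $\gamma$ are Eulerian. We also need $\alpha\oplus\gamma$ to match $\psi$, which follows from the compatibility conditions on types. Second, $\alpha\oplus\gamma$ must be $(s,c)$-edge-unbreakable, and this is the main obstacle. Take a partition $(A,B)$ of $V(D_{\alpha\oplus\gamma})$ with $|\delta(A)|\le c$. Restrict it to $V(D_\alpha)$, giving $(A\cap V(D_\alpha),B\cap V(D_\alpha))$. Its cut in $D_\alpha$ uses only genuine arcs of $D_\alpha$, and half-arcs do not count, so it has at most $c$ edges. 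Since $\alpha$ is $(s-r,c)$-edge-unbreakable, one side, say $A\cap V(D_\alpha)$, has at most $s-r$ vertices. Then $|A|\le (s-r)+|V(D_\gamma)|\le s$, so $(A,B)$ does not witness breakability. The point to double-check is that the breakability notion for digraphs with half-arcs ignores half-arcs, which the paper states explicitly, and that the bound $|V(D_\gamma)|\le r$ holds. The latter is exactly why we insisted that the elements of $\mathcal{G}$ be representatives, of size at most $r$. With both points settled, the lemma follows.
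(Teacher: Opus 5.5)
Your proposal is correct and follows the paper's proof. You test $\alpha$ against compatible representatives. These have at most $r$ vertices, so each glued instance is Eulerian and $(s,c)$-edge-unbreakable and can be passed to \Unb. You then output the unique compatible representative with the same answers, where uniqueness follows because any test structure can be replaced by the representative of its own class. The differences are minor: you use a fixed distinguishing subfamily and hardcode the representatives' truth vectors rather than computing $\sigma_\psi(\beta\oplus\gamma)$ with \Unb, and you write out the restriction argument for the unbreakability of $\alpha\oplus\gamma$, which the paper only asserts.
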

\begin{proof}
    Let $\alpha$ be an $(s-r,c)$-edge-unbreakable Eulerian arc-boundaried structure given as input. Let $\mathcal{C} = \{ \beta \in \mathcal{R}_{\equiv^E_\psi} [\mathcal{B}^p_{\subseteq I}] \mid \alpha \equiv_c\beta\}$ and let $\mathcal{S}= \{ \gamma \in \mathcal{R}_{\equiv^E_\psi} [\mathcal{B}^p_{\subseteq I}] \mid \alpha \sim_c \gamma\}$. As $\alpha$ is $(s-r,c)$-edge-unbreakable and by choice of $r$, for any $\gamma \in \mathcal{S}$ we have $|V(D_\gamma)| \leq r$ and thus $\alpha \oplus \gamma$ is $(s,c)$-edge-unbreakable, as well as Eulerian. Therefore we can use \Unb to compute $\sigma_\psi(\alpha \oplus \gamma)$ for each $\gamma \in \mathcal{S}$.

    On the other hand, for each $\beta \in \mathcal{C}$ also $|V(D_\beta)| \leq r$ hence by choice of $s$ for each $\gamma \in \mathcal{S}$ the structure $\beta \oplus \gamma$ is Eulerian and $(s,c)$-edge-unbreakable. Therefore in the same way as before we can use \Unb to find $\sigma_\psi(\beta \oplus \gamma)$ for each combination of $\beta$ and $\gamma$. At this point, we find the $\beta \in \mathcal{C}$ such that $\sigma_\psi(\alpha \oplus \gamma) = \sigma_\psi(\beta \oplus \gamma)$ for all $\gamma \in \mathcal{S}$, and output this $\beta$.  

    As $\alpha \in \mathcal{B}^p_{\subseteq I}$ there must be at least one candidate $\mathcal{C}$ for $\beta$, namely the representative for $E_\alpha$. It remains to argue that no $\beta' \in \mathcal{C} \setminus \{\beta\}$ also satisfies the condition that $\sigma_\psi(\alpha \oplus \gamma) = \sigma_\psi(\beta' \oplus \gamma)$ for all $\gamma \in \mathcal{S}$. For sake of a contradiction, assume such a $\beta'$ exists. As $\beta, \beta' \in \mathcal{C}$, we immediately conclude that $\beta \equiv_c \beta'$. Now let $\eta$ be an arbitrary Eulerian arc-boundaried structure compatible with $\beta$ and let $\eta'$ be the representative of $E_\eta$ (hence $\eta' \in \mathcal{C})$.
    Then \[\sigma_\psi(\beta \oplus \eta) = \sigma_\psi(\beta \oplus \eta') = \sigma_\psi(\alpha \oplus \eta') = \sigma_\psi(\beta' \oplus \eta') = \sigma_\psi(\beta' \oplus \eta)\]
    which implies that $\beta \equiv^E_\psi \beta'$, a contradiction to the fact that we choose only one representative per equivalence class. Finally, the running time of \UndU is clearly $O(n^d)$ as a consequence of $\mathcal{R}_{\equiv^E_\psi} [\mathcal{B}^p_{\subseteq I}]$ having constant size.     
\end{proof}

We are now ready to prove that we can solve {\sc Understand-E[$\psi$]} on general Eulerian arc-boundaried structures.

\begin{lemma}
    There exists an algorithm \Und that solves {\sc Understand-E[$\psi$]} on general Eulerian arc-boundaried structures in time $O(n^d)$.  
\end{lemma}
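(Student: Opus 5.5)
We follow the recursive-understanding scheme of Theorem 3 of \cite{DBLP:conf/icalp/LokshtanovR0Z18} and define \Und recursively on $n = |V(D_\alpha)|$. Given an Eulerian arc-boundaried structure $\alpha \in \mathcal{B}^p_{\subseteq I}$, we first run the algorithm of \Cref{lem:findbreakable} on the underlying undirected graph of $(V(D_\alpha),E(D_\alpha))$. We choose its parameters so that a negative answer certifies $(s-r,c)$-edge-unbreakability, i.e.\ we call it with $s-r$ in place of $s$. If it reports unbreakability, we call \UndU and return its output. Otherwise we obtain a partition $A,B$ with $|\delta(A)| \le c$ and $|A|,|B| > \frac{s-r}{2c}$. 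Since $s = (2c+1)r$, we have $s-r = 2cr$, so both sides have more than $r$ vertices. This is exactly the slack needed for the replacement below to shrink the instance.

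Next we cut $\alpha$ along $\delta(A)$. The boundary half-arcs of $\alpha$ (at most $2c$ of them) are split between the two sides. We pick the side, say $A$, that receives at most $c$ of the original half-arcs. Then the arc-boundaried structure $\alpha_A$ induced on $A$ has boundary consisting of these half-arcs together with the arcs of $\delta(A)$, turned into half-arcs. This gives at most $2c$ half-arcs, which we relabel injectively into $I=[2c]$. The entries $\alpha[i]$ are distributed between $\alpha_A$ and the complementary structure $\alpha_B$:
\begin{itemize}
\item sets are intersected with each side, with edge sets containing arcs of $\delta(A)$ recorded as half-arcs on both sides;
\item a single vertex goes to its side, with $\star$ on the other side;
\item a single edge of $\delta(A)$ becomes the matching half-arc on both sides.
\end{itemize}
This ensures $\alpha = \alpha_A \oplus \alpha_B$ up to the label bookkeeping. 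By the observation that gluing is balanced iff both parts are, $\alpha_A$ is Eulerian, so $\alpha_A \in \mathcal{B}^p_{\subseteq I}$. We recursively call \Und on $\alpha_A$ to get its representative $\beta_A$, which satisfies $|V(D_{\beta_A})| \le r < |A|$. Because $\beta_A \equiv^E_\psi \alpha_A$, $\beta_A$ is Eulerian and $\equiv_c$-compatible with $\alpha_A$. Hence $\alpha' := \beta_A \oplus \alpha_B$, relabelled back to the original labels of $\alpha$, is a well-defined Eulerian arc-boundaried structure in $\mathcal{B}^p_{\subseteq I}$ with fewer vertices than $\alpha$. To get $E_{\alpha'} = E_\alpha$, we glue any Eulerian $\gamma$ compatible with $\alpha$ and use associativity of $\oplus$ to write $\alpha \oplus \gamma = \alpha_A \oplus (\alpha_B \oplus \gamma)$ and $\alpha'\oplus\gamma = \beta_A\oplus(\alpha_B\oplus\gamma)$, then apply the definition of $\equiv^E_\psi$. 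We then recurse on $\alpha'$ and return its answer.

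Two parts need care. The first is justifying this associativity and regrouping on arc-boundaried structures with the boundary of $\alpha$ present. This mostly requires checking labels and the $\star$ and half-arc conventions, and we expect it to be routine. The second, which is the main obstacle, is the running time. As in \cite{DBLP:conf/icalp/LokshtanovR0Z18}, we write $n_A = |A|$. The recursion satisfies
\[T(n) \le T(n_A) + T(n - n_A + r) + O(n^4\log n)\]
with $r < n_A \le n - r$. We must show by induction that $T(n) \le C n^d$, using convexity of $x^d$ and the gap $n_A > r$ (the two recursive sizes sum to $n+r$, each is at least $r+1$) to absorb the $O(n^4 \log n)$ overhead of \Cref{lem:findbreakable}. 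This is where $d>4$ is used. If the slack is insufficient, we would first try strengthening $s$, say to a larger multiple of $r$, so that both parts are guaranteed to be at least a constant factor larger than $r$, and then rerun the same calculation.
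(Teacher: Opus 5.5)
Your algorithm and correctness argument match the paper's proof. The steps line up one for one:
\begin{itemize}
\item the call to \Cref{lem:findbreakable} with $s-r$, giving parts of size $>\frac{s-r}{2c}=r$;
\item keeping the side with at most $c$ original half-arcs, so the new boundary fits in $[2c]$;
\item recursing on that side and substituting its representative;
\item recursing on the smaller structure;
\item proving equivalence by regrouping. Your $\alpha_B\oplus\gamma$ is the paper's auxiliary $\mu=D_{\alpha\oplus\eta}[V\setminus A]$, with $\beta\oplus\mu=\alpha\oplus\eta$ and $\beta'\oplus\mu=\gamma\oplus\eta$.
\end{itemize}

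The problem is in the running-time step you flagged, specifically the overhead term. The non-recursive work per call is $O(n^3\log n)$: that is the bound of \Cref{lem:findbreakable} for fixed $s,c$, and the rest is linear bookkeeping. It is not $O(n^4\log n)$. This matters because the worst case is $n_A=r+1$, where the recursion gains only $x\bigl(n^d-(n-1)^d\bigr)=\Theta(n^{d-1})$. Your recurrence $T(n)\le T(r+1)+T(n-1)+Cn^4\log n$ therefore unrolls to $\Theta(n^5\log n)$ and does not give $O(n^d)$ for $4<d\le 5$. Your fallback of enlarging $s$ would not help: a constant lower bound on the smaller side still gains only $\Theta(n^{d-1})$ per level. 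The role of $d>4$ is exactly that $n^3\log n=o(n^{d-1})$.

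With the correct overhead, the slack you already have suffices.
\begin{itemize}
\item By convexity, for $r+1\le n_A\le n-1$ we have $n_A^d+(n+r-n_A)^d\le (r+1)^d+(n-1)^d$.
\item Then $x\,d(n-1)^{d-1}\ge x(r+1)^d+y\,n^3\log n$ once $n$ exceeds a constant depending on $r,d,y$, using $d-1>3$.
\item Smaller $n$ are covered by choosing $x$ large.
\end{itemize}
The paper's own computation instead lower-bounds both recursive sizes by $t=(s-r)/c=2r$. The partition only guarantees $>r$, so your bound of $r+1$ is the right one to work with.
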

\begin{proof}
    Let $\alpha$ be the Eulerian arc-boundaried structure given as input. We can use \Cref{lem:findbreakable} to decide in $O(n^3\log n)$ time (for fixed $s$ and $c$) whether $\alpha$ is $(s-r,c)$-edge-unbreakable, or find a $(\frac{s-r}{2c}, c)$-edge-breakable witnessing partition. If it is unbreakable, then \UndU can be used to solve it in $O(n^d)$ time. Otherwise, let $A,B$ be a partition of $V(D_\alpha)$ with $|A|,|B| > \frac{s-r}{2c} = r$ and $|\delta(A)| \leq c$. Without loss of generality, assume that the number of half-arcs incident to vertices of $B$ is greater than or equal to the number of half-arcs incident to vertices of $A$. We will describe a recursive procedure for \Und, hence we assume that the algorithm returns the correct output for all instances of size strictly smaller than $\alpha$. 

    We now construct an Eulerian arc-boundaried structure $\beta \in \mathcal{B}^p_{\subseteq I}$ based on the $A$ side of the partition. Let the arc-boundaried digraph $D_\beta$ be equal to $D_\alpha[A]$. To help with notation, let $f : \delta(A) \to H(D_\beta)$ be an injective map that maps each edge in $\delta(A)$ to the half-arc that was created from it, with $f(\delta(A)) := \{ f(e) \mid e \in \delta(A)\}$.
    First observe that $D_\beta$ is Eulerian as the balance of each vertex of $A$ is maintained. Second, $|H(D_\beta)| \leq 2c$ as $|\delta(A)| \leq c$ and the $A$ side contained at most half of the original up to $2c$ half-arcs of $D_\alpha$. Set $\lambda_\beta$ such that the labels for all half-arcs that were present in $D_\alpha$ are maintained, and the half-arcs in $f(\delta(A))$ all get mapped to unique integers in $I = [2c]$. Now, for $i \in \{2, \ldots, p\}$ we set $\beta[i]$ in the following way. 
    \begin{itemize}
        \item If $\type(\alpha)[i] = \star$ then $\beta[i] = \star$. 
        \item If $\type(\alpha)[i] = \mathrm{vertex}$ then if $\alpha[i] \in A$, $\beta[i] = \alpha[i]$ and otherwise $\beta[i] = \star$ .
        \item If $\type(\alpha)[i] = \mathrm{edge}$ then if $\alpha[i] \in E(D_\beta) \cup H(D_\beta)$ we set $\beta[i] = \alpha[i]$, if $\alpha[i] \in \delta(A)$ then $\beta[i] = f(\alpha[i])$ and otherwise $\beta[i] = \star$ .
        \item If $\type(\alpha)[i] = \mathrm{vertex\, set}$ then $\beta[i] = \alpha[i] \cap A$
        \item If $\type(\alpha)[i] = \mathrm{edge\, set}$ then $\beta[i] = (\alpha[i] \cap (E(D_\beta) \cup H(D_\beta))) \cup \{f(e) \mid e \in  \beta[i] \cap \delta(A) \}$ (i.e. all edges that were part of the $A$ side, but edges in $\delta(A)$ get correctly replaced by their corresponding half-arcs). 
    \end{itemize}
    Since $\beta \in \mathcal{B}^p_{\subseteq I}$ and $\beta$ is strictly smaller than $\alpha$, we can recursively call \Und to find the Eulerian representative $\beta'$ such that $E_\beta = E_{\beta'}$. 

    We next define the arc-boundaried structure $\gamma \in \mathcal{B}^p_{\subseteq I}$ which will roughly correspond to replacing the $A$ side of $\alpha$ by $\beta'$. More formally, to construct $D_\gamma$ start by taking the disjoint union of $D_{\beta'}$ with only the half-arcs with label not in $\lambda_\beta(f(\delta(A))$ and $D_\alpha[B]$ with only the half-arcs of $ H(D_\alpha)$. 
    For each edge $e = (u,v) \in \delta(A)$, such that $u \in A$, we add an edge from the tail of $\lambda_{\beta'}^{-1}(\lambda_\beta(f(e)))$ to $v$. Alternatively, if $v \in A$, we add an edge from $u$ to the head of $\lambda_{\beta'}^{-1}(\lambda_\beta(f(e)))$. Similar as for a formal gluing operation, we define here a function $h$ that maps each edge or half-arc of $D_{\beta'}$ and $D_\alpha[B]$ to the edge or half-arc of $D_\gamma$ that it naturally corresponds to. 
    We can determine $\lambda_\gamma$ by, for each half-arc $e \in H(D_\gamma)$, setting $\lambda_\gamma(e)$ to $\lambda_\alpha(e)$ or $\lambda_{\beta'}(e)$ depending on whether $e \in H(D_\alpha)$ or $e \in H(D_{\beta'})$, respectively. To finish our definition of $\gamma$, for each $i \in \{2, \ldots, p\}$ we set the following.
    \begin{itemize}
        \item If $\type(\beta')[i] = \star$ then $\gamma[i] = \alpha[i]$.
        \item If $\type(\beta')[i] = \mathrm{vertex}$, then $\gamma[i] = \beta'[i]$.
        \item If $\type(\beta')[i] = \mathrm{edge}$, then $\gamma[i] = h(\beta'[i])$
        \item If $\type(\beta')[i] = \mathrm{vertex\, set}$, then $\gamma[i] = \beta'[i] \cup (\alpha[i] \cap B)$
        \item if $\type(\beta')[i] = \mathrm{edge\, set}$, then 
        $\gamma[i] = \{ h(e) \mid e \in \beta'[i] \cup (\alpha[i] \cap (E(D_\alpha[B]) \cup H(D_\alpha[B])) \}$ (i.e. the edges and half-arcs from $D_\gamma$ that correspond to those from $\beta'[i]$ or from the $B$ side of $\alpha$). 
    \end{itemize}

    To conclude the algorithm, we now solve the instance $\gamma$, which is strictly smaller than $\alpha$ as $|\beta'| \leq r < |\beta|$. Let $\gamma'$ be the output of the algorithm on instance $\gamma$. We continue to prove that $\gamma'$ is indeed the right representative for $\alpha$. It should be clear from the description that \Und only outputs arc-boundaried structures in $\mathcal{R}_{\equiv^E_\psi} [\mathcal{B}^p_{\subseteq I}]$, so we focus on the following.
    
    \begin{claim}
        $\alpha \equiv^E_\psi {\gamma'}$
    \end{claim}
    \begin{claimproof}
        By our assumption that the algorithm is correct on all inputs of size smaller than $\alpha$, we immediately obtain $\beta \equiv^E_\psi {\beta'}$ and $\gamma \equiv^E_\psi {\gamma'}$.   
        Hence the goal becomes $\alpha \equiv^E_\psi \gamma$. We start by arguing that $\alpha \equiv_c \gamma$. First note that $L(D_\alpha) = L(D_\gamma)$ as $H(D_\alpha)$ is partitioned over $D_\alpha[A]$ (and in turn $D_\beta$) and $D_\alpha[B]$, $\beta \equiv^E_\psi {\beta'}$ and when constructing $\gamma$ the corresponding half-arcs with labels in $L(D_\alpha)$ are maintained. For the same reason, the half-arc with a label $i \in I$ in $D_\alpha$ has type $\head$ if and only if the half-arc with label $i$ in $D_\gamma$ has type $\head$, hence $D_\alpha$ and $D_\beta$ are compatible with the same arc-boundaried digraphs. Clearly, $\gamma$ has the same arity $p$ as $\alpha$. For $i \in \{2, \ldots, p\}$, $\type(\alpha)[i] = \type(\gamma)[i]$ because of how $\beta[i]$ is set, $\beta \equiv_c \beta'$ and how $\gamma[i]$ is consequently defined. Hence $\alpha \equiv_c \gamma$.  

        It remains to prove that for an arbitrary Eulerian arc-boundaried structure $\eta$ that is compatible with $\alpha$, we have $\psi(\alpha \oplus \eta) = \true$ if and only if $\psi(\gamma \oplus \eta) = \true$.  The intuitive idea is that to construct $\gamma$ we replaced in $\alpha$ the $A$ side which was captured by $\beta$ by $\beta'$, which are canonically equivalent. Since we can view both $\alpha \oplus \eta$ and $\gamma \oplus \eta$ instead as gluing something to $\beta$ and $\beta'$, we can apply the equivalence of $\beta$ and $\beta'$ to arrive at our goal.
        
        More formally, let $\mu \in \mathcal{B}^p_{\subseteq I}$ be defined as follows. The arc-boundaried digraph $D_\mu$ is equal to $D_{(\alpha \oplus \eta)}[ V(D_{\alpha \oplus \eta}) \setminus A]$. To set $\lambda_\mu$, note that each edge in $\delta_{D_{(\alpha \oplus \eta)}}(A)$ induces an half-arc $e$ in $H(D_\mu)$ but also can be bijectively mapped to a corresponding half-arc $e'$ in $H(D_\beta)$. Thus we let $\lambda_\mu(e) = \lambda_\beta(e')$, from which it follows that $D_\beta \oplus D_\mu = D_\alpha \oplus D_\eta$. The remaining elements of the arc-boundaried structure $\mu$ can naturally be set such that $\beta \oplus \mu = \alpha \oplus \eta$. 
        In fact, since $\gamma$ was constructed by effectively substituting $\beta$ by $\beta'$ it follows that $\beta' \oplus \mu = \gamma \oplus \eta$. This gives us the following
        \[\sigma_\psi(\alpha \oplus \eta) = \sigma_\psi(\beta \oplus \mu) = \sigma_\psi(\beta' \oplus \mu) = \sigma_\psi(\gamma \oplus \eta)\]
        where the second equality follows from $\beta \equiv^E_\psi \beta'$. 
    \end{claimproof}
    
    It remains to show that \Und runs in time $O(n^d)$. 
    We prove this by induction on the size of the instance that there exists a value $x$ such that it runs in $x \cdot n^d$. The algorithm starts by calling the algorithm of \Cref{lem:findbreakable}, which for fixed $s$ and $c$ in time $y \cdot n^3 \log n$ for some fixed $y$. We require from here $x \geq y$.
    In the base case with no recursive calls, \Und simply calls the algorithm of \Cref{lem:findbreakable} which runs in time $O(n^d)$ and hence in $x \cdot n^d$ for an appropriate choice of $x$. We proceed to the inductive step. 
    Given our input instance $\alpha$, \Und will be recursively applied to $\beta$ and $\gamma$. Let $n_\beta = |V(D_\beta)|$ and $n_\gamma = |V(D_\gamma)|$, then by the inductive hypothesis the two calls will run respectively in time $x \cdot (n_\alpha)^d$ and $x \cdot (n_\gamma)^d$. Thus the total runtime will be upper bounded by
    \[ y\cdot  n^3 \log n + x \cdot (( n_\beta)^d + (n_\gamma)^d).\]
    Let $t = (s-r)/ c$ (hence $t \geq 2r)$ and observe that $n_\beta, n_\gamma \geq s \geq t$. Furthermore, $n_\beta + n_\gamma \leq n + r$ by choice of $r$. Define the function $f(z) = z^d + (n_\beta+n_\gamma- z)^d$. For $0 \leq i \leq \frac{n_\beta+n_\gamma}{2}$ we have $f'(i) \leq 0$ and as at least one of $n_\beta$ or $n_\gamma$ is at most $ \frac{n_\beta+n_\gamma}{2}$, we can derive the following.
    \[( n_\beta)^d + (n_\gamma)^d = f(n_\beta) = f(n_\gamma) \leq f(t) = t^d + (n_\alpha+n_\gamma-t)^d \leq t^d + (n+r-t)^d  \]
    Using this we can find the following upper bound on the runtime. 
    
    \begin{align*}
        & &y\cdot  n^3 \log n + x \cdot (( n_\beta)^d + (n_\gamma)^d) \\
        &\leq  &y\cdot  n^3 \log n + x \cdot (t^d + (n+r-t)^d) \\
        &=  & y\cdot  n^3 \log n + x \cdot t^d + x \cdot (n+r-t)  (n+r-t)^{d-1} \\
        &\leq & y\cdot  n^3 \log n + x \cdot t^d + x \cdot (n+r-t)  n^{d-1} \\
        &= &x \cdot n^d + (y\cdot  n^3 \log n +  x \cdot t^d + x\cdot rn^{d-1}) - x \cdot tn^{d-1}
    \end{align*}
    Hence, to prove an upper bound of $x\cdot n^d$ it suffices to prove that 
    \[ x \cdot tn^{d-1} \geq y\cdot  n^3 \log n +  x \cdot t^d + x\cdot rn^{d-1}.\]
    As $t \geq 2r$, we have $\frac{1}{2}x \cdot tn^{d-1} \geq x\cdot rn^{d-1}$ and since $d > 4$, if we ensure $x/4 \geq y$ also $\frac{1}{4} x \cdot tn^{d-1} \geq y\cdot  n^3 \log n$. It remains to show
    \[ \frac{1}{4} x \cdot tn^{d-1} \geq x \cdot t^d, \]
    but this immediately follows since $n > 2(s-r)$ (otherwise the graph is trivially unbreakable), and $t \leq s-r$. 
\end{proof}

\section{Applications of the Eulerian Breakability Framework}\label{sec:breakapplications}
In this section we demonstrate how \Cref{thm:cmsofpt} can be used to find FPT algorithms by applying it to two cut problems on Eulerian digraphs, namely \demcufull and \dscfull. These are two prime examples of problems for which in the undirected setting breakability proved especially useful to find algorithmic results. The minimum bisection\footnote{Given a graph $G$ and integer $k$, decide whether there exists a partition $A,B$ of $V(G)$ such that $| |A| - |B| | \leq 1 $ and there are at most $k$ edges between $A$ and $B$. In the directed setting, we count the number of arcs from $A$ to $B$.} problem is a third major example of such a problem, however one can easily see that the directed version on Eulerian digraphs reduces to the undirected version of the problem, and is thus immediately FPT. 

When designing algorithms on unbreakable instances, it is common to rely on a colour-coding approach. In such cases we will make use of the following lemma that is stated by Chitnis et al.~\cite{DBLP:journals/siamcomp/ChitnisCHPP16} and based on the work by Naor, Schulman and Srinivasan~\cite{naor1995splitters} to be able to present our algorithms in a deterministic manner.

\begin{lemma} \label{lem:derandom}
Given a set $U$ of size $n$, and integers $0 \leq a,b, \leq n$, one can in time $2^{O(min(a,b)\log(a+b))} \allowbreak n \log n$ construct a family $\mathcal{F}$ of at most $2^{O(min(a,b)\log(a+b))} \log n$ subsets of $U$, such that the following holds: for any sets $A,B \subseteq U$, $A \cap B = \emptyset$, $|A| \leq a$, $|B| \leq b$, there exists a set $S \in \mathcal{F}$ with $A \subseteq S$ and $B \cap S = \emptyset$. 
\end{lemma}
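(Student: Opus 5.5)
This is the standard splitter construction of Naor, Schulman and Srinivasan, as packaged by Chitnis et al. The plan is to reduce to a perfect hash family plus a small explicit family on a bounded universe. Assume without loss of generality that $a \le b$ (otherwise swap roles and take complements of all sets in the family) and write $k = a+b$. If $a = 0$ the family $\{\emptyset\}$ suffices, so assume $a \geq 1$.

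\textbf{Step 1: hashing to a small universe.} I would invoke the $(n,k)$-perfect hash family of Naor--Schulman--Srinivasan. This is a family $\mathcal{H}$ of functions $h: U \to [k^2]$ of size $k^{O(1)} \log n$, computable in $k^{O(1)} n \log n$ time, such that every $k$-subset of $U$ is mapped injectively by some $h \in \mathcal{H}$. For disjoint $A, B$ with $|A| \le a$ and $|B| \le b$, pick $h$ injective on $A \cup B$. Then $h(A)$ and $h(B)$ are disjoint subsets of $[k^2]$ with $|h(A)| \le a$.

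\textbf{Step 2: a family on the small universe.} Next I need a family $\mathcal{G}$ of subsets of $[k^2]$ containing, for every pair of disjoint $A' \subseteq [k^2]$, $B'$ with $|A'| \le a$, a set $T$ such that $A' \subseteq T$ and $T \cap B' = \emptyset$. Taking $T = A'$ itself works, so I let $\mathcal{G}$ be all subsets of $[k^2]$ of size at most $a$. This family has size $\sum_{i \le a} \binom{k^2}{i} = 2^{O(a \log k)}$.

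\textbf{Step 3: combine.} I set $\mathcal{F} = \{ h^{-1}(T) : h \in \mathcal{H}, T \in \mathcal{G} \}$. Then $h^{-1}(h(A)) \supseteq A$. Moreover $h^{-1}(h(A))$ avoids $B$, since $h(B) \cap h(A) = \emptyset$ by injectivity on $A \cup B$. The size is $k^{O(1)} \log n \cdot 2^{O(a\log k)} = 2^{O(\min(a,b)\log(a+b))}\log n$. Construction takes one pass over $U$ per pair $(h,T)$ using a lookup table on $[k^2]$. Together with building $\mathcal{H}$, this gives the claimed time up to the $n\log n$ versus $n$ factor, which is absorbed.

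The only nontrivial ingredient is the deterministic perfect hash family with $\log n$ dependence. I would cite it directly from \cite{naor1995splitters}, where a two-level scheme first hashes to $k^{O(1)}$ values using a $k$-wise-independent-style family of $O(\log n)$ size. I expect no other obstacle.
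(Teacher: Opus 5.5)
Your proof is correct and is essentially the standard argument from Chitnis et al.: compose an $(n,a+b,(a+b)^2)$-splitter of Naor, Schulman and Srinivasan with the preimages of all subsets of $[(a+b)^2]$ of size at most $\min(a,b)$, complementing when $a>b$. The paper gives no proof of its own and simply cites that source for this lemma.
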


\subsection{Directed Edge Multiway Cut-Uncut}
In this section we focus on 
the directed analogue of the {\sc Edge Multiway Cut-Uncut} problem and prove the following. 

\thmdemwcu*

The first step is to show that DEMCU is \cmso-expressible. To this end, we define the following formula with free variables $R$, $T$ and $S$ that should be interpreted according to the problem definition. We will use several obvious shorthands for sake of readability. 
\begin{align*}
    \psi = &
    \textbf{ cluster}(R) \wedge \forall_{s,t \in T} [\textbf{strong-con}(s,t) \iff (s,t) \in R]
\end{align*}
where 
\begin{align*}
    \textbf{cluster}(R) &= \forall_{u,v,w \in T}[ ((u,v) \in R \implies (v,u) \in R) \\
    & \wedge (((u,v) \in R \wedge (v,w) \in R) \implies (u,w) \in R) ] \\
    \textbf{strong-con}(s,t) &= \forall_{A,B \subseteq V(D)} [\textbf{disjoint}(A,B) \wedge ((s \in A \wedge t \in B) \vee (t \in A \wedge s \in B)) \\
    &\implies \exists_{e \in E(D) \setminus (S \cup R)}\ \exists_{ u,v \in V(D)} \ \textbf{tail}(e,u) \wedge \textbf{head}(e,v) \wedge(u \in A \wedge v \in B ]
\end{align*}

\begin{observation}
  $(D,T,R,k)$ is a \yes-instance of DEMCU if and only if $((D',T,R),k)$ is a \yes-instance of {\sc min-Edge-CMSO[$\psi$]} where $D' = (V(D), E(D) \cup R)$. 
\end{observation}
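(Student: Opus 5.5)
The plan is to prove the two directions of the equivalence separately. The main work is to unpack $\psi$ on the structure $((D',T,R),S)$, where $D' = (V(D), E(D)\cup R)$. Here the pairs of $R$ are added as extra arcs only so that $R$ can be quantified over as an edge set; the formula then excludes them through the quantification $e \in E(D')\setminus(S\cup R)$.

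First, I would check that $\textbf{cluster}(R)$ holds in $D'$ exactly when $R$, restricted to $T$, is symmetric and transitive. It is reflexive by the convention that $(t,t)\in R$; if self-loops are not added, reflexivity is implicit in the problem statement. So for an equivalence relation $R$, as in the input, this conjunct is always true. Second, I would show that $\textbf{strong-con}(s,t)$ evaluated on $D'$ with edge set $S$ holds if and only if $s$ and $t$ lie in the same strongly connected component of $D-S'$, where $S' = S\cap E(D)$.

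For the second step, suppose first that there is a directed $s\to t$ path in $D-S'$. Take any disjoint $A,B$ with $s\in A$ and $t\in B$, or the reverse. Walking along the path, which must leave $A$ and eventually enter $B$, gives an arc with tail in $A$ and head in $B$. This is not quite immediate, since vertices outside $A\cup B$ may intervene. I would therefore first rewrite the formula with $B$ replaced by the complement of $A$, or equivalently read the quantifier as ranging over partitions. With that reading, a path from $A$ to $V\setminus A$ crosses an arc from $A$ to $B$. Conversely, if $t$ is not reachable from $s$ in $D-S'$, take $A$ to be the set of vertices reachable from $s$ and $B$ its complement. Then no arc of $E(D)\setminus S'$ goes from $A$ to $B$, so the implication fails. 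The disjunction in the formula covers both directions of reachability, so \textbf{strong-con} captures mutual reachability. I expect this to be the only real obstacle. It needs a careful reading, or a small repair, of the formula, so that the universally quantified pair $(A,B)$ is a bipartition and both orientations of the pair $s,t$ are demanded. That amounts to stating the disjunction as a conjunction of the two orientations, with each requiring an arc from the side of the source to the side of the sink.

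With the second step in hand, both directions follow. If $S\subseteq E(D)$ with $|S|\le k$ solves the DEMCU instance, then $S$ is also an edge set of $D'$ of size at most $k$, and $\psi$ holds. Conversely, if $S\subseteq E(D')$ with $|S|\le k$ satisfies $\psi$, then $S'=S\cap E(D)$ has size at most $k$. The arcs of $S\cap R$ are already excluded by the formula, so $\psi$ still holds for $S'$, and by the second step $S'$ solves DEMCU. Finally, I would note that $D'$ need not be balanced. To apply \Cref{thm:cmsofpt}, I would either add the reverse copies of the $R$-arcs, which keeps the Eulerian property since $R$ is symmetric, or observe that $R$ appears only as an auxiliary set.
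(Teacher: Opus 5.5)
Your proof is correct and is the direct unpacking of $\psi$ that the paper leaves implicit; the paper states this as an observation without proof. Your repair of \textbf{strong-con} is genuinely needed, not just a careful reading. With $A,B$ merely disjoint, the choices $A=\{s\},B=\{t\}$ and $A=\{t\},B=\{s\}$ force direct arcs $(s,t),(t,s)\in E(D)\setminus S$. So for the printed formula the observation fails: the directed triangle $s\to x\to t\to s$ with $T=\{s,t\}$ and a single $R$-class is a \yes-instance of DEMCU via $S=\emptyset$, yet no $S$ satisfies $\psi$. With $B=V(D')\setminus A$ nothing else has to change, because under the universal quantifier over $A$ the disjunctive antecedent already demands both that $s$ reaches $t$ and that $t$ reaches $s$.

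Two minor corrections. For $s=t$ the left side \textbf{strong-con}$(t,t)$ is vacuously true, so the loops $(t,t)$ must be among the $R$-arcs; omitting them would break the biconditional rather than be harmless. And $D'$ is already balanced, since $R$ is symmetric and loops add equally to in- and out-degree. So the structure is Eulerian as it stands, and no reverse copies should be added.
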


Thus, by \Cref{thm:cmsofpt}, it suffices to solve {\sc min-Edge-CMSO[$\psi$]} on $(q(k), 2k)$-unbreakable structures where $q$ is the function obtained from the theorem.

\begin{lemma} \label{lem:demcubreakable}
    \minecmso{$\psi$} is FPT on $(q(k), 2k)$-unbreakable structures parameterized by $k$. 
\end{lemma}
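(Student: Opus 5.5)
We solve DEMCU directly on a $(q(k),2k)$-edge-unbreakable Eulerian digraph $D$ with terminals $T$ and relation $R$, and look for a solution $S$ with $|S|\le k$. Write $q=q(k)$. The first step is structural. Let $S$ be a minimal solution and let $C_1,\dots,C_m$ be the strongly connected components of $D-S$. We claim that for every vertex set $X$ that is a union of components closed under reachability in $D-S$, we have $\delta^+_D(X)\subseteq S$ (arcs leaving a reachability-closed set must have been deleted), hence $|\delta^+_D(X)|\le k$. Since $D$ is Eulerian, $|\delta^+_D(X)|=|\delta^-_D(X)|$, so $|\delta_D(X)|\le 2k$ in the underlying undirected sense. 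Unbreakability then gives $|X|\le q$ or $|V(D)\setminus X|\le q$.

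Applying this to the up-closures and down-closures in the condensation DAG of $D-S$, we aim to show the following. Either $|V(D)|\le 2q+1$, in which case brute force works, or there is a unique \emph{big} strong component $C^\star$ with $|V(D)\setminus C^\star|\le 2q$. To see this, take a topological order of the components. The set of components before $C^\star$ is a down-closed set with small cut, so it has size at most $q$, or its complement does; the same holds for the set after $C^\star$. Choosing $C^\star$ as the component where the prefix sizes cross $q$ gives at most $q$ vertices on each side. Consequently all arcs of $S$ have both endpoints within a set $Z=V(D)\setminus C^\star$ plus neighbours, with $|Z|\le 2q$ and $|S|\le k$.

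Next we use colour coding. By \Cref{lem:derandom}, applied to the universe $V(D)$ with $a=2q+2k$ and $b=2q+2k$, we obtain a family of subsets. For the correct guess $F$, the set $F$ contains $Z$ together with all endpoints of $S$, and $F$ avoids a set of neighbours of $Z$ in $C^\star$ of size at most $2k$ (these witness the boundary). We then treat the components of $D[F]$ and $D[V\setminus F]$ as in Chitnis et al. The intended $Z$ must be a union of connected components of $D[F]$ whose neighbourhood lies outside $F$ and whose total size is at most $2q$. We then guess which components of $D[F]$ form $Z$ (after discarding large components), guess which terminals lie in $C^\star$, and guess the partition of the at most $2q$ vertices of $Z$ into strong components together with the order of the condensation. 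Each guess is consistent with a unique arc set to delete, namely the arcs going backwards in the guessed order, plus nothing inside $C^\star$. We verify directly that this set has size at most $k$, that $C^\star=V\setminus Z$ remains strongly connected, and that the SCC pattern of terminals matches $R$.

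The main obstacle is the step identifying $Z$ from the colouring. Components of $D[F]$ may merge pieces of $Z$ with pieces of $C^\star$, and we must ensure the colour family separates $Z$ cleanly from $C^\star$ using only boundedly many protected vertices. We would first try to mark as ``must be out'' the at most $2k$ heads and tails of $\delta(Z)$ lying in $C^\star$; since $|\delta(Z)|\le 2k$, this is within the budget of \Cref{lem:derandom}. Then $Z$ becomes exactly a union of components of $D[F]$, and the number of candidate unions of size at most $2q$ adjacent to endpoints of terminals is bounded. If this fails, we would fall back to a knapsack-style dynamic programme over the components of $D[F]$, as in the undirected Steiner cut algorithm. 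The total running time is $2^{O(q\log q)}\cdot n^{O(1)}$.
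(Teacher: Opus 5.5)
Your structural step is sound, and it differs mildly from the paper's. The paper uses the Eulerian invariance of backward arcs under cyclic shifts to put the big component first. You instead take the component at which prefix sizes cross $q$, which leaves at most $q$ vertices on each side. Unbreakability is for $D'=D+R$, but $R$-arcs never cross a union of strong components of a solution, so your cut bounds transfer; and balance does give $|\delta(Z)|\le 2k$. One slip: $F$ cannot both contain all endpoints of $S$ and avoid $N(Z)\cap C^\star$, since arcs of $S\cap\delta(Z)$ have an endpoint there. Ask only for $Z\subseteq F$ and $N(Z)\cap F=\emptyset$.

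The real gap is the step you flag yourself: identifying $Z$ among the components of $D[F]$. Knowing that $Z$ is a union of components of total size at most $2q$ does not bound the search. $F\cap C^\star$ may fall apart into unboundedly many small components, some terminal-free and some containing terminals of the class living in $C^\star$, which can have arbitrarily many members. So ``guess which components form $Z$'' and ``guess which terminals lie in $C^\star$'' are $n^{\Theta(q)}$-way guesses. Your claim that the candidate unions near terminals are boundedly many is false as stated, and the knapsack fallback is left unspecified.

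Two ingredients are missing.

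First, the terminals inside $C^\star$ form a single class of $R$. A solution has at most $2q+1$ strong components, so there are at most $2q+1$ classes (otherwise reject). You can therefore guess the class of $C^\star$, after which every other terminal must lie in $Z$.

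Second, by inclusion-minimality of $S$, every weakly connected component $Y$ of $D[Z]$ contains a terminal. Suppose not.
\begin{itemize}
\item All neighbours of $Y$ lie in $C^\star$, so by balance every vertex of $Y$ reaches and is reached from $C^\star$ in $D[C^\star\cup Y]$ (within the weak component of $D$ containing $C^\star$).
\item Hence $S\setminus(\delta(Y)\cup E(D[Y]))$ is a strictly smaller solution: it merges $Y$ into $C^\star$ and nothing else.
\item Any other strong component that got merged would have to reach and be reached from $C^\star\cup Y$. Both the first entry into and the last exit from $C^\star\cup Y$ pass through $C^\star$ via arcs of $D-S$. That places the component both before and after $C^\star$, a contradiction.
\end{itemize}

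With these two facts, $Z$ is forced to be the union of the components of $D[F]$ that contain a terminal outside the guessed class. Your enumeration of strong-component partitions and orders then finishes the proof. This is exactly how the paper closes the argument: it guesses which class lies in $C_0$, marks the contracted vertices carrying the other classes, and uses minimality of $S$ to show that unmarked components of $D^*-B$ lie in $C_0$.
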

\begin{proof}
    First note that we can easily check in polynomial time for our structure $(D',T,R)$ whether $R$ is a cluster. Hence, assume this property to be satisfied, which means it remains to determine whether Eulerian, $(q(k), 2k)$-unbreakable, DEMCU instance $(D,T,R,k)$ is a YES-instance. Keep in mind that the unbreakability property applies to $D'$, rather than $D$.

    Let $S \subseteq E(D)$ be a hypothetical minimal witness to it being a YES-instance. 
    Consider a topological ordering of $D-S$ and let $C_0$ be the largest strongly connected component in the ordering. 
    By minimality we must have that $S$ is precisely the set of arcs going backwards between components in this ordering. Since in an Eulerian digraph the number of backward arcs is invariant under cyclic shifts of the ordering \cite[Lemma 4.4.2]{DBLP:books/sp/18/Wahlstrom18}, we may consider without loss of generality a solution $S$ such that $C_0$ is first in this ordering. 
    Let $C_0, C_1, \ldots, C_r$ be the ordering on the strongly connected components of $D-S$. As $S$ contains all backward arcs, for each prefix $\{C_0, \ldots, C_i\}$ of the ordering there are at most $|S|$ arcs from $V(D) \setminus (C_0 \cup \ldots \cup C_i)$ to $(C_0 \cup \ldots \cup C_i)$ and hence by Eulerianness at most $2|S| \leq 2k$ edges in total between these two parts. 
    Furthermore, by nature of $R$, no arcs in it cross strongly connected components for a solution $S$. As $D'$ is $(q(k), 2k)$-unbreakable, this must mean that $r \leq 2q(k)$. 
    Now, if $C_0$ contains at most $q(k)$ vertices, then $|V(D)| \leq 2q(k)^2$ and the problem could be solved through brute-force, hence assume $|C_0| > q(k)$ from which it immediately follows that $|C_1 \cup C_2 \cup \cdots \cup C_r| \leq q(k)$. We can also conclude at this point that if $R$ has more than $q(k)+1$ equivalence classes, then it must be a NO-instance. Keep in mind that, contrary to the undirected version of the same problem, for an optimal solution there can be components not containing any equivalence class.

    We will find $S$ through colour coding. We colour every arc in $D$ either red or black. We consider a colouring \emph{good} when the following properties are satisfied. 
    \begin{enumerate} 
        \item All arcs between different strongly connected components of $D-S$ are red.
        \item An arbitrary spanning tree within each small component $C_1$ to $C_r$ is coloured black.
        \item For each vertex $v$ in $C_0$ that is the endpoint of an arc with as other endpoint a vertex outside of $C_0$, there is a set of $q(k)+1$ vertices in $C_0$ containing $v$ such that a spanning tree of this set is coloured black.
    \end{enumerate}
    In fact, we will use \Cref{lem:derandom} to deterministically compute a family of colourings, such that at least one corresponds to a good colouring (under the assumption that $S$ exists). As the number of arcs that we constrain in a good colouring is bounded in terms of $k$, this computation can be done in FPT time, after which we try to use all sets in the family, of which there are $f(k) \log n$ for some function $f$. 

    We show how to extract $S$ using the colouring by identifying the components $C_1$ to $C_r$.
    Contract each weakly connected components of $D$ minus the red arcs to a single vertex and let $D^*$ be the resulting graph. 
    Let $B$ be the set of vertices of $D^*$ that were formed by contracting components with at least $q(k)+1$ vertices. Note that for a good colouring all vertices in $B$ must be part of $C_0$, and each of $C_1$ to $C_r$ was contracted to single vertices $c_1$ to $c_r$. 
    At this point, we can guess which one of the at most $q(k)$ equivalence classes should be contained in $C_0$ (if any). Then all other equivalence classes should have been entirely contained in a weakly connected component of $D$ with only the black arcs, and any such component contains at most one class. Hence if neither is the case we can discard this guess, and if no guess is valid we discard the colouring. 
    We mark the vertices of $D^*$ that originate from a component that contained an equivalence class. Now consider the weakly connected components of $D^* - B$. 
    We claim that the vertices $c_1$ to $c_r$ are precisely those vertices contained in components of $D^*-B$ containing at least one marked vertex. 
    Since $B$ separates $c_1$ to $c_r$ from the remainder of the graph, any component contains either exclusively vertices within $C_0$ or outside of it. Any component with a marked vertex must then clearly be composed of a subset of the vertices $\{c_1, \ldots, c_r\}$. Furthermore, any component without a marked vertex will be contained in $C_0$ by minimality of $S$. 
    Having identified the set of vertices $\{c_1, c_2, \ldots, c_r\}$ we can extract $\{C_1, C_2, \ldots, C_r\}$ by reversing the contractions. In order to find $S$ (or another solution) it suffices to guess which of the at most $r! \leq q(k)!$ orderings the found components appear in, and verify whether this leads to a valid solution of size at most $k$. As we are guaranteed that at least one colouring and guess is compatible with $S$, we are guaranteed to find a solution if one exists.
\end{proof}

\subsection{Directed Steiner Cut}\label{ssec:dir-steiner-cut}
The \dscfull problem is the directed analogue of the {\sc Steiner Cut} problem, which is known to be FPT when parameterized by the deletion budget for undirected graphs \cite{DBLP:journals/siamcomp/ChitnisCHPP16}. This inspires us to look at the directed version.

We follow similar steps as for \demcufull. We first define a {\sc CMSO}$_2$ formula $\psi$ that is satisfied by a suitable structure if and only if the \dsc instance is a YES-instance. Then we solve the problem on unbreakable structures through derandomised colour-coding. 
\[ \psi = \exists_{V_1, V_2, \cdots, V_p \subseteq V(D)} [ \bigwedge_{i=1}^p \textbf{strong-comp}(V_i) \wedge \exists_t \in T: (t \in V_i) \wedge \forall_{j \neq i}(V_i \neq V_j)]  \] 
where
\begin{align*} 
    \textbf{strong-comp}(V_i) &= \textbf{strong-con}(V_i) \wedge \forall_{U \subseteq V(D)}[V_i \subset U \implies \neg\textbf{strong-con}(U)]    \\
    \textbf{strong-con}(V_i) &= \forall_{A,B \subseteq V_i} [\textbf{disjoint}(A,B) 
    \implies \\
    & \exists_{e \in E(D) \setminus S} \exists_{ u,v \in V(D)} \textbf{tail}(e,u) \wedge \textbf{head}(e,v) \wedge (u \in A \wedge v \in B) \wedge \\
    & \exists_{e \in E(D) \setminus S} \exists_{ u,v \in V(D)} \textbf{tail}(e,u) \wedge \textbf{head}(e,v) \wedge (u \in B \wedge v \in A)] 
\end{align*}

\begin{observation} \label{obs:dsc}
    $(D,T,p,k)$ is a \yes-instance for \dsc if and only if $((D,T,p),k)$ is a \yes-instance for  {\sc min-Edge-CMSO[$\psi$]}.
\end{observation}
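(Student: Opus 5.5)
This is a two-directional check, and the only real content is that the formula $\psi$ faithfully captures ``at least $p$ distinct strongly connected components of $D-S$ each contain a terminal.'' We interpret the structure $((D,T,p),k)$ as $(D,T)$ with the free edge-set variable $S$ appended. Here $p$ is a constant hard-wired into the formula, so $\psi$ is really a family $\psi_p$ of $\textsc{CMSO}_2$ sentences whose length depends only on $p$. That is why the parameter is $k+p$.

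The first step is to verify the semantics of the subformulas on $D-S$. For $\textbf{strong-con}(V_i)$ the intended reading is that $D[V_i]-S$ is strongly connected. As written, the formula quantifies over disjoint $A,B\subseteq V_i$, but it must be read with $A\cup B=V_i$ and both parts nonempty, i.e.\ over bipartitions of $V_i$. With that reading, the formula says that every bipartition of $V_i$ is crossed by an arc of $D-S$ in both directions. We then show this is equivalent to strong connectivity of $D[V_i]-S$. If some $u$ cannot reach $v$ inside $D[V_i]-S$, take $A$ to be the set of vertices reachable from $u$; the bipartition $(A,V_i\setminus A)$ has no outgoing arc. Conversely, a directed $u$--$v$ path must cross any bipartition separating $u$ from $v$ in the right direction. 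The quantification over $A,B\subseteq V_i$ and the restriction of the witnessing arc's endpoints to $A$ and $B$ keep the arcs inside $D[V_i]$.

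Next, $\textbf{strong-comp}(V_i)$ holds exactly when $V_i$ is inclusion-maximal strongly connected, i.e.\ $V_i$ is a strongly connected component of $D-S$. Nonemptiness of $V_i$ is guaranteed by the conjunct requiring a terminal $t\in V_i$. The pairwise inequality conjuncts $V_i\neq V_j$ make the chosen components distinct; distinct strongly connected components are automatically disjoint.

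For the forward direction, let $S$ with $|S|\le k$ witness the \dsc instance. Choose $p$ distinct terminal-containing components of $D-S$ and assign them to $V_1,\dots,V_p$; then $\psi$ holds on the structure with $S$ appended, so $((D,T,p),k)$ is a \yes-instance of {\sc min-Edge-CMSO[$\psi$]}. For the backward direction, any $S$ of size at most $k$ satisfying $\psi$ yields $p$ pairwise distinct, hence disjoint, components each containing a terminal, so $S$ is a \dsc solution. Note that ``at least $p$'' is handled automatically, since $\psi$ only asserts the existence of $p$ such components.

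The only delicate point is the first step: fixing the intended reading of $\textbf{strong-con}$ as ranging over bipartitions of $V_i$ with nonempty parts. Once that correction is stated, the remaining arguments are routine.
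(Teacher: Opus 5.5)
Your proposal is correct. It is the semantic check that the paper leaves implicit: the paper states this as an observation without proof, relying on $\psi$ expressing ``$p$ distinct terminal-containing strongly connected components of $D-S$''. Your reading of $\textbf{strong-con}$ as ranging over bipartitions of $V_i$ into nonempty parts is a necessary correction, not a cosmetic one: taken literally, the choice $A=B=\emptyset$ already falsifies $\textbf{strong-con}$ (and non-covering pairs $A,B$ cause similar failures), so $\psi$ as printed would never hold for $p\ge 1$.
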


\begin{lemma} \label{lem:dscbreak}
    \minecmso{$\psi$} is FPT on $(q(k, p), 2k)$-unbreakable structures parameterized by $k+p$. 
\end{lemma}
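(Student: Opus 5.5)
We mirror the proof of \Cref{lem:demcubreakable}. Let $S$ be a hypothetical minimal solution of size at most $k$ for an Eulerian, $(q(k,p),2k)$-unbreakable instance $(D,T,p,k)$. We may assume $S$ is exactly the set of backward arcs of some topological ordering $C_0,C_1,\ldots,C_r$ of the strongly connected components of $D-S$. Indeed, deleting the forward arcs again changes neither the strongly connected components nor, therefore, the number of terminal-containing components. By the cyclic shift invariance of the number of backward arcs in Eulerian digraphs \cite[Lemma 4.4.2]{DBLP:books/sp/18/Wahlstrom18}, we may put a largest component $C_0$ first.

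For each prefix $C_0\cup\cdots\cup C_i$, at most $|S|\le k$ arcs enter it. By balance, exactly as many arcs leave it, so at most $2k$ edges cross in the underlying graph. Unbreakability then forces every prefix or its complement to have at most $q(k,p)$ vertices. If $|C_0|\le q(k,p)$, every component is small, and a short argument on the prefix sums gives $|V(D)|\le 2q(k,p)^2$ or similar; we then brute-force. Otherwise $|C_1\cup\cdots\cup C_r|\le q(k,p)$, so $r\le q(k,p)$. We may also assume $p\le r+1$, since otherwise the instance is trivially a \textsc{No}-instance for this solution shape. Only components containing a terminal matter, and at least $p-1$ of the small components must contain terminals.

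We then apply colour coding via \Cref{lem:derandom}, colouring arcs red or black. A colouring is \emph{good} if all arcs between distinct components of $D-S$ are red. It must also colour black a spanning tree of each $C_i$ with $i\ge1$. Finally, for each vertex of $C_0$ incident to an arc leaving $C_0$, it must colour black a tree on $q(k,p)+1$ vertices of $C_0$ containing it. Only $f(k,p)$ arcs are constrained in total, at most $k\cdot$ poly$(q)$ red and poly$(q)$ black, so the family has size $f(k,p)\log n$. For a good colouring, we contract the black weak components to obtain $D^*$, and let $B$ be the contracted vertices coming from big pieces, which all lie in $C_0$. The small components are then unions of the remaining pieces. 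Unlike the DEMCU case, we have no marked equivalence classes to identify them. Instead, we enumerate over the weak components of $D^*-B$: each such component lies entirely inside $C_0$ or entirely outside it, and the candidate sets $\{c_1,\ldots,c_r\}$ are unions of components of $D^*-B$. Here is the main obstacle. There may be many components of $D^*-B$ lying inside $C_0$ (small pieces hanging off $C_0$), so we cannot guess subsets naively.

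I would first try to handle this by strengthening goodness. We require that every small component $C_i$, together with its at most $2k$-size neighbourhood boundary, is coloured so that its attachments to $C_0$ are red, and that every vertex of $C_0$ adjacent to some $C_i$ lies in a big black piece. Then each component of $D^*-B$ that avoids the $C_i$ has all its neighbours in $B$. Let $X$ be such a component. Its incident edges are all non-solution arcs inside $C_0$, and the algorithm can safely treat $X$ as part of $C_0$. For the rest, we guess which components of $D^*-B$ contain terminals and form the $C_i$. A cleaner route is to choose the components of $D^*-B$ that are candidates: because each one lies wholly on one side, the true $C_i$ are exactly those components $X$ whose boundary arcs include a red arc that was forced to be red by condition one, i.e. a solution or forward arc. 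Since at most $q(k,p)$ components carry the colour-forced structure, we guess among at most $2^{O(q)}$ choices by restricting to components of size at most $q(k,p)$ with a terminal, bounded in number by a degree argument.

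Finally, once $C_1,\ldots,C_r$ are identified, we guess their order among at most $q(k,p)!$ possibilities. We then set $S$ to the backward arcs and verify both $|S|\le k$ and that at least $p$ strongly connected components contain terminals. Correctness follows since some colouring and guess match the hypothetical $S$, and every output is checked directly.
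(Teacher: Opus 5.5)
\paragraph{Gap: identifying $C_1,\dots,C_r$.} Your setup (minimal $S$ as backward arcs, cyclic shift to put $C_0$ first, unbreakability bounding $|C_1\cup\dots\cup C_r|$, colour coding, contraction to $D^*$, and the set $B$) matches the paper. The proof breaks at the step you yourself flag: deciding which components of $D^*-B$ are the $C_i$.

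Your ``strengthened'' goodness conditions only restate conditions 1 and 3. The fact that a component avoiding the $C_i$ has all its neighbours in $B$ holds for \emph{every} component of $D^*-B$, so it does not separate the two kinds.

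The criterion ``its boundary contains an arc forced red by condition 1'' cannot be checked by the algorithm. The family from \Cref{lem:derandom} only controls the boundedly many arcs you prescribe, and any other arc may also be red.

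The number of small terminal-containing components of $D^*-B$ is not bounded by any degree argument. For example, take a large complete digraph $K$ and attach unboundedly many terminals $t_j$ to a vertex $x\in K$ by 2-cycles $(x,t_j),(t_j,x)$. This digraph is Eulerian and $(q,2k)$-edge-unbreakable for every $q\ge k$. Let the solution cut off $p-1$ of the $t_j$. In a good colouring for it, every other $t_j$ may have both its arcs red and so appear as a singleton terminal component of $D^*-B$ lying inside $C_0$. Guessing a subset of these components is therefore not FPT.

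\paragraph{Missing idea: independent choices and a knapsack.} You do not need to recover the true $C_i$, only an optimal selection. The components of $D^*-B$ are pairwise non-adjacent, and each lies wholly inside or wholly outside $C_0$, so the decisions for different components do not interact. For each component $H$ of $D^*-B$, proceed as follows.
\begin{itemize}
\item If $H$ has more than $q(k,p)$ vertices of $D$, it is forced into $C_0$.
\item Otherwise define a \emph{value}: the number of its contracted black pieces that contain a terminal.
\item Define a \emph{cost}: the minimum, over the at most $q(k,p)!$ orderings of its pieces, of the number of backward arcs inside $H$, plus all arcs from $H$ into $B$. These last arcs are backward because $C_0$ is placed first.
\item Placing $H$ inside $C_0$ costs nothing and gains nothing.
\end{itemize}
A knapsack dynamic program with capacity $k$ over these items then decides whether at least $p$ terminal-containing strongly connected components (counting $C_0$ if it has a terminal) can be reached with at most $k$ deleted arcs. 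This is what the paper does. It replaces your global ``guess $C_1,\dots,C_r$ and their order'' step by per-component ordering enumeration combined with a polynomial knapsack.
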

\begin{proof}
    Similar as for \demcufull, we start by considering a hypothetical minimal solution $S \subseteq E(D)$. Let $C_0$ be the largest strongly component in $D-S$ and as $D$ is Eulerian assume without loss of generality that the topological ordering of $D-S$ is $C_0, C_1, \ldots, C_r$ for some $r$. Then $S$ contains all backwards arcs of this ordering. Repeating the argument found in \Cref{lem:demcubreakable} tells us that either $|V(D)| \leq 2q(k,p)$ (and the problem can be solved by brute-force), or $|C_1 \cup C_2 \cup \cdots \cup C_r| \leq q(k,p)$. We again colour every arc in $D$ either red or black and consider a colouring good when it satisfies the following properties.
    \begin{enumerate}
        \item All arcs between different strongly connected components of $D-S$ are red.
        \item An arbitrary spanning tree within each small component $C_1$ to $C_p$ is coloured black.
        \item For each vertex $v$ in $C_0$ that is the endpoint of an arc with as other endpoint a vertex outside of $C_0$, there is a set of $q(k,p)+1$ vertices in $C_0$ containing $v$ such that a spanning tree of this set is coloured black.
    \end{enumerate}
    Using \Cref{lem:derandom} we compute in FPT time a family of colourings such that at least one is good, if a good colouring exists (i.e. if a solution $S$ exists).  Let $D^*$ be the graph $D$ after contracting all black arcs (i.e. all weakly connected components of $D$ without the red arcs become single vertices). That means that each small component $C_i$ was contracted to a single vertex $c_i$ for $i \in \{1 \ldots, r\}$. For each vertex of $D^*$ we want to guess if it was one of $\{c_1 \ldots, c_r\}$ or whether it belongs to $C_0$.
    Let $B$ be the set of all vertices of $D^*$ to which at least $q(k,p)+1$ vertices were contracted. Then we know all of $B$ belongs to $C_0$ and in particular $B$ separates the vertices belonging to $C_0$ from $\{c_1, \ldots, c_r\}$. In particular, that means that every weakly connected component of $D^*-B$ must be either a subset of $\{c_1, \ldots, c_r\}$ or be entirely contained in $C_0$. Let $H_i$ denote a connected component of $D^* - B$. We compute both the cost and value of choosing $H$ not to be in $C_0$. If $|V(H)| > q(k,p)$ then by $(q(k,p), 2k)$-breakability of $D$ it must be that $H$ belongs to $C_0$. Otherwise, we set the value simply to the number of vertices of $H$ that are a terminal or to which a terminal $T$ was contracted. We minimise the cost by trying all of the at most $q(k,p)!$ topological orderings of the vertices of $H$ and record the minimum number of backward arcs. We also include all arcs to $B$, as $C_0$ is fixed to be the first component of the topological ordering of $D-S$. This also allows us to compute the cost and value of each weakly connected component independently. After computing all costs and values, we can find a choice of components leading to at least $p$ strongly connected components and at most $k$ cut arcs, if one such choice exists for this guess, by setting up a polynomial-time solvable knapsack instance with these components as items. 
\end{proof}

Combining \Cref{obs:dsc}, \Cref{lem:dscbreak} and \Cref{thm:cmsofpt} will give us the FPT result that we are looking for. In fact, one technically needs a variation on \Cref{thm:cmsofpt} in order to handle a parameterization other than just $k$. However, given \Cref{thm:cmso} it should be clear that such a variation is not hard to prove in an analogues manner as the current proof by defining for a given $p$ the sentence $\psi_p$ as the formulation for $\psi$ as seen above, after which we solve the \minecmso{$\psi_p$} problem.

\begin{theorem}
    \dscfull on Eulerian digraphs is (nonuniformly) FPT parameterized by $k+p$. 
\end{theorem}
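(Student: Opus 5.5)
The plan is to combine \Cref{obs:dsc}, \Cref{lem:dscbreak} and a two-parameter version of \Cref{thm:cmsofpt}, derived from the non-parameterized \Cref{thm:cmso}. The input $(D,T,p,k)$ is an Eulerian digraph. We encode it as the structure $\alpha=(D,T)$. For each fixed pair $(k,p)$ we build a single \cmso sentence
\[
\psi_{k,p} = \exists_{S\subseteq E(D)}\, |S|\le k \wedge \psi_p(S),
\]
where $\psi_p$ is the formula $\psi$ displayed before \Cref{obs:dsc}, with the number $p$ of quantified sets $V_1,\dots,V_p$ fixed. The cardinality bound $|S|\le k$ is expressed by a formula whose length depends only on $k$. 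By \Cref{obs:dsc}, $(D,T,p,k)$ is a \yes-instance exactly when $\sigma_{\psi_{k,p}}(\alpha)$ is true. The length of $\psi_{k,p}$ depends only on $k+p$.

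Next, for each pair $(k,p)$, apply \Cref{thm:cmso} with $\psi=\psi_{k,p}$ and $c=2k$. This yields a value $s(k,p)$. Define $q(k,p):=s(k,p)$; this is the function used in \Cref{lem:dscbreak}. The order matters: $q$ must be fixed by \Cref{thm:cmso} first, and only then is \Cref{lem:dscbreak} invoked with this $q$. \Cref{lem:dscbreak} holds for an arbitrary function $q$, so this is legitimate. The lemma then gives an algorithm on $(q(k,p),2k)$-edge-unbreakable Eulerian structures running in time $f(k,p)\cdot n^{b}$, where $b$ is independent of $k,p$; we may take $b>4$ by padding. Hence, for fixed $(k,p)$, \cmsop{$\psi_{k,p}$} is solvable in $O(n^b)$ on such unbreakable Eulerian structures. \Cref{thm:cmso} then lifts this to an $O(n^b)$ algorithm on all Eulerian structures.

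For every fixed $(k,p)$ we therefore obtain an $O(n^b)$ algorithm, with $b$ uniform across parameter values. This is exactly nonuniform fixed-parameter tractability in $k+p$. It mirrors the proof of \Cref{thm:cmsofpt}, with the single parameter $k$ replaced by the pair $(k,p)$.

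The main obstacle I expect is making \Cref{lem:dscbreak} actually deliver a running time with exponent $b$ independent of $k$ and $p$. This requires checking three pieces. First, the colour-coding family from \Cref{lem:derandom} must have size $2^{O(\mathrm{poly}(q,k))}\log n$. Second, the guesses over orderings must be bounded by $q(k,p)!$. Third, the knapsack step must be polynomial with a fixed degree. A smaller concern is the formula: $\psi_p$ must be genuinely \cmso, so that \Cref{thm:arcfstate} applies, and it must use only the free variables $T$ and $S$. Apart from that, the argument is a direct instantiation of \Cref{thm:cmso}.
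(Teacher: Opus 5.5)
Your proposal is correct and follows essentially the same route as the paper. The paper also combines \Cref{obs:dsc} and \Cref{lem:dscbreak} with a two-parameter variant of \Cref{thm:cmsofpt}, obtained from \Cref{thm:cmso} by fixing the sentence $\psi_{k,p}$ for each parameter pair. Your write-up is somewhat more explicit than the paper's one-paragraph sketch, for instance in fixing $q$ via \Cref{thm:cmso} before invoking the lemma and in padding the exponent above $4$.
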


The need to include $p$ as parameter comes from the fact that the length of the {\sc CMSO}$_2$ formula, and thus the function for $q$ obtained from \Cref{thm:cmsofpt}, depends on $k$ and $p$. \ For the undirected version of the problem it is observed in \cite{DBLP:journals/siamcomp/ChitnisCHPP16} that one can reduce the instance to get $p$ to be linear in $k$. This reduction crucially relies on the fact that deleting an edge from an undirected graph increases the number of connected components by at most one. In contrast, deleting an arc from a directed graph can increase the number of strongly connected components by up to $n-1$. 

\section{Hardness results} \label{sec:hardness}
In this section we show that the two problems considered in \Cref{sec:breakapplications} are W[1]-hard on general digraphs.
    \subsection{Multiway cut-uncut}
    
\begin{theorem}
    \demcufull on general digraphs is W[1]-hard when parameterized by $k$, even when $R$ has only two equivalence classes.
\end{theorem}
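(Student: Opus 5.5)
We reduce from a W[1]-hard directed cut problem on general digraphs. The most convenient source is \textsc{Directed Edge Multiway Cut} with two terminals in its "strong" (bidirectional) form, or more directly \dmcfull with a constant number of terminal pairs, which is W[1]-hard for four pairs~\cite{DBLP:journals/toct/PilipczukW18a}. Since we want $R$ to have only two equivalence classes, we would rather start from a source problem with two "sides". A natural candidate is the W[1]-hard \textsc{Directed Multicut} variant with two pairs $(s_1,t_1),(s_2,t_2)$. Alternatively, we could use a symmetric formulation in which the goal is to make two terminal groups lie in distinct strongly connected components while each group stays strongly connected.

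First we fix the target instance. The terminals are $T=\{s_1,t_1,s_2,t_2,\dots\}$ and $R$ has two classes $X$ and $Y$. Terminals in the same class must remain in one strongly connected component. Terminals in different classes must lie in different components.

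Second, we enforce the uncut requirements robustly. For every pair of terminals that must be in one class, we add $k+1$ parallel paths (subdivided, if parallel arcs are disallowed) in both directions. This makes it impossible to destroy their strong connectivity within the budget.

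Third, we reshape the source so that "no $s\to t$ path" becomes "different strong components". For each source pair $(s_i,t_i)$ we add $k+1$ undeletable return paths $t_i\to s_i$. Then $s_i$ and $t_i$ fall into distinct components exactly when every $s_i\to t_i$ path is hit. To obtain only two classes, we place $t_1$ and $s_2$ in class $X$ and $s_1$ and $t_2$ in class $Y$, joining each class internally as in the second step. This gadgeting is the main obstacle. Merging terminals into two classes adds reachability: for example, $s_1\to t_2\to s_2$ through undeletable paths. These new routes can create paths that did not exist in the source instance, or can force separating pairs we did not intend. We would check carefully that separating $X$ from $Y$ strongly is equivalent to cutting all the original pairs, which may require choosing the source problem accordingly.

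If the two-class merging breaks equivalence, we would fall back on a reduction from \textsc{Multicoloured Clique}. We would build the digraph directly, with a "choice" gadget for each colour class. Each gadget consists of long directed paths whose cheap cut positions encode the chosen vertex. Consistency arcs between gadgets create $X$--$Y$ strong connections unless the chosen vertices are adjacent. The budget $k$ would be a function of the clique size, and we would verify both directions of correctness.

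Finally, we check that the reduction is polynomial and that the new parameter is $k$ or $O(k)$, since only the undeletable gadgets grow. The size bound is routine; correctness of the class merging is where the proof effort lies.
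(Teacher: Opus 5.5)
\emph{Gap: the core reduction (steps one to three) yields only polynomial-time solvable instances, so it cannot prove W[1]-hardness.} This is structural, not a matter of checking the class merging more carefully. After your second step, every arc set $S$ with $|S|\le k$ leaves each class strongly connected. With two classes $X,Y$, the only remaining requirement is then that $D-S$ has no $X\to Y$ path or no $Y\to X$ path. Two max-flow computations decide this, so every instance your first three steps produce is polynomial-time solvable, whatever the source problem. Such a reduction cannot establish W[1]-hardness unless FPT $=$ W[1].

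Your concrete assignment is worse still. The undeletable return paths $t_1\to s_1$ and $t_2\to s_2$ run from $X$ to $Y$ and from $Y$ to $X$. So $X$ and $Y$ always share a strong component, and every produced instance is a no-instance.

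The premise is also off. \textsc{Directed Multicut} with two pairs is FPT: it reduces to \textsc{Directed Multiway Cut} with two terminals (Chitnis, Hajiaghayi and Marx). W[1]-hardness is known only from four pairs on.

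\emph{Missing idea: with only two classes, hardness must come from the uncut requirement, which must therefore stay fragile.} Your fallback only mentions consistency arcs on the cut side. It gives no mechanism that synchronises choice gadgets or excludes non-adjacent choices.

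The paper's reduction from \textsc{Clique} works as follows.
\begin{itemize}
\item \emph{Cut side.} Terminals $a$ and $b$ lie in different classes, with an undeletable arc $(b,a)$, so all $a\to b$ paths must be cut. For each $i\in[k]$ there are two $a\to b$ paths $P_i$ and $Q_i$, indexed in opposite directions. With budget $2k$, each loses exactly one arc.
\item \emph{Fragile uncut side.} A terminal in $a$'s class with an undeletable arc into $a$ stays in $a$'s component iff $a$ still reaches it. Symmetrically, a terminal in $b$'s class with an undeletable arc from $b$ must still reach $b$. All further terminals use this.
\item \emph{Synchronisation.} The terminals $a^i_j$ and $b^i_j$ force the cut arcs of $P_i$ and $Q_i$ to be $(p^i_{j-1},p^i_j)$ and $(q^i_{j+1},q^i_j)$ for a single $j$, i.e.\ to encode one vertex $v_j$.
\item \emph{Adjacency check.} For every non-edge $\{v_j,v_{j'}\}$ and every $i\neq i'$, there is a terminal $e^{i,i'}_{j,j'}$ in $a$'s class, fed by undeletable arcs from $p^i_j,q^i_j,p^{i'}_{j'},q^{i'}_{j'}$. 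It becomes unreachable from $a$ exactly when copy $i$ selects $v_j$ and copy $i'$ selects $v_{j'}$.
\end{itemize}
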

\begin{proof}
    We will reduce from the {\sc Clique} problem. Let $(G,k)$ be an instance for {\sc Clique}, we will construct an instance $(D,T,R,k')$ of DEMCU as follows. First, set $k' = 2k$. We describe step by step how we build $D$, $T$ and $R$. The digraph $D$ has two vertices $a$ and $b$, which are both terminals in different equivalence classes connected through an undeletable arc from $b$ to $a$ (realised as $k'+1$ parallel arcs). 
    For each $i \in [k]$, we construct two paths $P_i$ and $Q_i$ from $a$ to $b$. Specifically, $V(P_i) = \{p_0^i, p_1^i, \ldots, p_n^i\}$ and $V(Q_i) = \{q_1^i, \ldots, q_{n+1}^i\}$. The paths have arcs $(p_j^i, p_{j+1}^i)$ for $0 \leq j < n$, and $(q_{j+1}^i, q_j^i)$ for $1 \leq j \leq n$, and are connected to $a$ and $b$ through undeletable arcs $(a, p_0^i)$, $(a, q_{n+1}^i)$, $(p_n^i, b)$ and $(q_1^i,b)$.
    
    As a consequence of the construction so far, in order to ensure that $a$ and $b$ do not end up in the same strongly connected component, every solution needs to select exactly one arc per path $P_i$ and per path $Q_i$ for $i \in [k]$. The choice of arcs here will simulate our choice of vertices for a clique in $G$. In particular, let us denote $V(G)$ by $\{v_1, v_2, \ldots, v_n\}$. We then want that selecting $v_j$ for the clique is simulated by cutting the two arcs going into $p_j^i$ and $q_j^i$ for precisely one $i \in [k]$. 

    As a first step, we want to `synchronise' the choices for $P_i$ and $Q_i$. We do this as follows. For each $i \in [k]$ and $j \in [n]$, we add a terminal $a_j^i$, which belongs to the same equivalence class as $a$, and undeletable arcs $(p_j^i, a_j^i)$, $(q_{j+1}^i, a_j^i)$ and $(a_j^i, a)$. Similarly, we add a terminal $b_j^i$ in the equivalence class of $b$ and add undeletable arcs $(b_j^i, p_{j-1}^i)$, $(b_j^i, q_j^i)$ and $(b, b_j^i)$. See \Cref{fig:demcu-gdaget} for an illustration. One can now confirm that, when a solution selects an arc $(p_{j-1}^i, p_j^i)$, then in order for $a_j^i$ to be reachable from $a$ the solution must select an arc of $Q_i$ appearing after $q_{j+1}^i$. Simultaneously, in order for $b_j^i$ to be able to reach $b$, the selected arc of $Q_i$ must appear before $q_{j}^i$. That means the only valid choice is the arc $(q_{j+1}^i, q_j^i)$. One can confirm that for any $j \in [n]$, selecting these two arcs ensures that all connectivity requirements on the terminals introduced so far is satisfied. 

    As a next step, we will introduce gadgets to ensure that if in paths $P_i$ and $Q_i$ some vertex $j$ is selected through their arc choice, and if paths $P_i'$, $Q_i'$ for $i \neq i'$ a vertex $j'$ is selected, then we must necessarily have that $\{v_j, v_{j'}\} \in E(G)$. To achieve this we will utilise the fact that when a path pair $P_i$ and $Q_i$ make a choice for a vertex $v_j$ through their arc-cuts, then there is one unique index, namely $j$, such that both $p_j^i$ and $q_j^i$ are not reachable from $a$. So, for each distinct pair $i, i' \in [k]$ and for each $j, j' \in [n]$ such that $\{v_j, v_{j'}\} \notin E(G)$, we introduce a terminal vertex $e_{j,j'}^{i,i'}$ in the same class as $a$, and we add an undeletable arc $(e_{j,j'}^{i,i'}, a)$, as well as an undeletable arc from $p_j^i$, $q_j^i$, $p_{j'}^{i'}$ and $q_{j'}^{i'}$ to $e_{j,j'}^{i,i'}$. This ensures that $a$ can reach $e_{j,j'}^{i,i'}$ if and only if in $P_i$ and $Q_i$ the arcs for $v_j$ are not selected, or in $P_{i'}$ and $Q_{i'}$ the arcs for $v_{j'}$ are not selected. 

    Correctness of the reduction now follows from correctness of the individual gadgets, and the observation that these gadgets do not interfere.
\end{proof}

\begin{figure}
    \centering
    \includegraphics{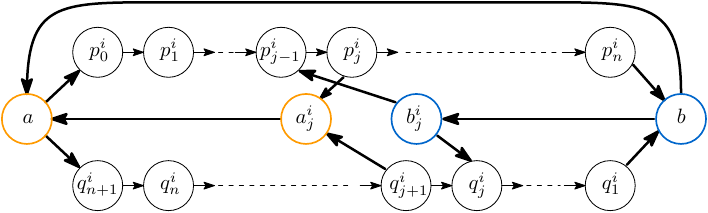}
    \caption{Choice-synchronisation gadget between a path $P_i$ and a path $Q_i$ in the construction of the \demcu instance. Undeletable arcs are drawn in bold. }
    \label{fig:demcu-gdaget}
\end{figure}

\subsection{Directed Steiner cut}

\begin{theorem}
    \dscfull on general digraphs is W[1]-hard when parameterized by $k+p$.
\end{theorem}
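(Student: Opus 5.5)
We reduce from \textsc{Multicolored Clique}, reusing the selection gadgets from the proof for \demcufull. The instance $(G,k)$ comes with a partition $V(G)=V_1\cup\dots\cup V_k$, each part of size $n$, and we look for a clique with one vertex per colour class. For each colour $i$ we keep the pair of anti-parallel paths $P_i$, $Q_i$ between two hubs $a$ and $b$. Undeletable arcs are realised by $k'+1$ parallel copies, and the budget is $k'=2k$ deletable arcs. The synchronisation gadgets should again force any solution to cut $P_i$ and $Q_i$ at the same index $j$, which we read as selecting $v_j\in V_i$. The difference from \demcufull is that there are no equivalence classes to prescribe which terminals must stay strongly connected and which must be separated. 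So the gadgets have to be rebuilt so that the only way to obtain at least $p$ terminal-containing strong components within budget $k'$ is through a consistent clique selection.

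The plan is to make the strong components of $D-S$ encode the constraints. First, we set $T=\{a,b\}$ together with one auxiliary terminal $x_{i,j}$ for each $i\in[k]$, $j\in[n]$. Each $x_{i,j}$ is attached by undeletable arcs so that it lies on a cycle through $a$ or $b$ unless the cut on $P_i,Q_i$ sits exactly at index $j$; only then does $x_{i,j}$ become a singleton strong component. The synchronisation terminals $a^i_j,b^i_j$ are handled the same way, by wiring them back into the big component whenever the choice is consistent. Second, for every non-edge $\{v_j,v_{j'}\}$ with $v_j\in V_i$, $v_{j'}\in V_{i'}$, we add a vertex $e^{i,i'}_{j,j'}$ as in the \demcufull proof. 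Its job is to re-merge $x_{i,j}$ and $x_{i',j'}$ into the component of $a$ when both indices are selected. Third, we set $p=k+2$: the component of $a$, the component of $b$ (separated because the undeletable $b\to a$ arc and the cuts destroy every $a\to b$ path), and $k$ isolated selector terminals. This keeps $k'+p=O(k)$, as required for a parameterized reduction.

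For correctness we prove both directions. If the selection is a clique, cutting the corresponding $2k$ arcs isolates exactly $k$ selectors and no non-edge gadget fires, which gives $k+2$ terminal components. Conversely, suppose $|S|\le 2k$ yields at least $k+2$ terminal components. We argue that $a$ and $b$ must be separated, since otherwise every terminal collapses into one component. Separating them needs at least one arc on each $P_i$ and each $Q_i$, which exhausts the budget. The synchronisation gadgets then force matching indices, and the count of $k+2$ components forces every selected pair to be adjacent in $G$.

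The main obstacle is the second step: designing the gadgets so that every terminal which should not count is guaranteed to end up in the component of $a$ or $b$. In \demcufull this was enforced for free by the equivalence relation, but here we must prevent stray terminals, for instance unused $e^{i,i'}_{j,j'}$ or the synchronisation terminals, from forming extra strong components. The first thing I would try is to close every auxiliary terminal into a cycle with $a$ through undeletable return arcs together with a forward arc from $a$. It then remains to check that a non-edge gadget $e^{i,i'}_{j,j'}$ merges $x_{i,j}$ back into the component of $a$ only when both of its endpoints are selected.
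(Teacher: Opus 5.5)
Your proposal has a genuine gap, and it sits exactly where you reuse the \demcufull machinery. In \demcufull the synchronisation terminals $a^i_j,b^i_j$ work because the equivalence relation \emph{forbids} them from leaving the classes of $a$ and $b$, so an inconsistent choice on $P_i,Q_i$ is punished. In \dsc nothing forces any terminal to stay with $a$ or $b$, and the objective only rewards having more terminal components. So every gadget that detects a bad choice by isolating a terminal now \emph{rewards} that choice.

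Concretely, suppose $x_{i,j}$ is isolated, as in the \demcufull proof, exactly when $p^i_j$ and $q^i_j$ are both unreachable from $a$. Cut $P_i$ at $(p^i_{j_P-1},p^i_{j_P})$ and $Q_i$ at $(q^i_{j_Q+1},q^i_{j_Q})$ with $j_P\le j_Q$. This isolates every $x_{i,j}$ with $j_P\le j\le j_Q$. Taking $j_P=1$, $j_Q=n$ gives one colour $n$ singleton selector components within budget, with no clique involved. Inconsistent choices likewise turn your synchronisation terminals into extra counted components rather than merging them.

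Hence the step ``the synchronisation gadgets then force matching indices'' does not transfer. You would need a mechanism by which a mismatch \emph{merges} components, and the proposal supplies none. The unjustified claim that failing to separate $a$ and $b$ collapses all terminals into one component has the same issue.

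The non-edge gadget is also only specified by its desired behaviour. Your suggested fix, an undeletable forward arc from $a$ to each auxiliary terminal, would put every such terminal permanently in the component of $a$, so no selector could ever count. A deletable forward arc instead lets solutions spend budget isolating terminals directly.

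The paper avoids all of this by choosing counted terminals whose isolation is an AND of deletions, which is what arc deletion expresses naturally. It reduces from plain \textsc{Clique}:
\begin{itemize}
\item a root $r$;
\item a vertex $a_v$ for each $v\in V(G)$, with deletable $(r,a_v)$ and undeletable $(a_v,r)$;
\item an edge terminal $t_{uv}$ with undeletable arcs $a_u\to t_{uv}$, $a_v\to t_{uv}$ and $t_{uv}\to r$.
\end{itemize}
Then $t_{uv}$ leaves the component of $r$ if and only if both $(r,a_u)$ and $(r,a_v)$ are deleted. With budget $k$ and $p=\binom{k}{2}+1$, the at most $k$ chosen vertices must span $\binom{k}{2}$ edges, i.e.\ form a clique.

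Your worry that $k'+p$ must be $O(k)$ is unfounded. A parameterized reduction only needs the new parameter bounded by a computable function of $k$, so $p=\binom{k}{2}+1$ is fine. Counting \emph{edges} of the chosen set removes the need for any synchronisation or non-edge gadget.
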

\begin{proof}
    We show W[1]-hardness by reducing from the {\sc Clique} problem. Given an instance $(G,k)$ of {\sc Clique}, we construct an instance $(D,T,k,p)$ of \dsc as follows. We set $p = {k \choose 2} + 1$. Construct $D$ by taking as vertex set $V(D) = \{r\} \cup \{a_v \mid v \in V(G) \} \cup \{t_e \mid e \in E(G)\}$. We have arcs $(r, a_v)$ for each $v \in V(G)$. Furthermore, we add an \emph{undeletable} arc $(a_v, r)$ (undeletable arcs can be modelled by $k+1$ parallel arcs). Next, for each edge $uv \in E(G)$, we add undeletable arcs from $a_u$ and $a_v$ to $t_{uv}$. Finally, we add an undeletable arc from each $t_e$ to $r$. We conclude by setting $T = \{r\} \cup \{t_e \mid e \in E(G)\} $. 

    We now show that the reduction is correct. Let $C \subseteq V(G)$ be a clique of size $k$ in $G$. Then we let $X = \{(r,a_v) \mid v \in C\}$ be our cutset for the DSC instance. Clearly $|X| = |C| = k$, hence it remains to show that $D-X$ has at least $p$ strongly connected components with a terminal. For this, it suffices to observe that a vertex $t_{uv}$ is in a strongly connected component distinct from the one containing $r$ if and only if $X$ contains both the arcs from $r$ to $a_u$ and to $a_v$. Thus, the number of strongly connected components of $D-X$ containing a terminal is precisely one plus the number of edges in $G[C]$, which is $k \choose 2$ as $C$ is a clique. 

    For the converse, let $X \subseteq E(D)$ be a cutting set of size most $k$ such that $D-X$ has at least $p$ strongly connected components containing a terminal. Let $C = \{v \in V(G) \mid (r, a_v) \in X\}$, and observe that $|C| = |X| \leq k$. As stated above, we know that a terminal $t_{uv}$ is in its own strongly connected component if and only if both $(r, a_u)$ and $(r, a_v)$ are in $X$. That means that by cutting at most $k$ arcs, at most $k \choose 2$ of such terminals can have become a strongly connected component. Since $p = {k \choose 2} + 1$, we know that we must have that $|X| = k$ and for each pair $(r, a_u), (r, a_v) \in X$, there must exist a terminal $t_{uv}$. By construction, this means that for each pair $u,v \in C$, there must be an edge $\{u,v\} \in E(G)$, ergo $C$ is a clique of size $k$. 
\end{proof}

\bibliographystyle{alpha}
\bibliography{references}

\end{document}